   \theoremstyle{plain}
    \numberwithin{equation}{section}
    \newtheorem{thm}{Theorem}[section]
    \newtheorem{prop}[thm]{Proposition}
    \newtheorem{lem}[thm]{Lemma}
    \newtheorem{remk}{Remark}
 \definecolor{Red}{rgb}{0,0,0}
\newcommand{\Red}{\color{Red}}
\definecolor{Blue}{rgb}{0,0,0}
\definecolor{Green}{rgb}{0,0,0}
\definecolor{Magenta}{rgb}{0,0,0}
\begin{document}

\title{Optimal execution with liquidity risk in a diffusive order book market}

\author{ Hyoeun Lee\footnote{Department of Statistics, University of Illinois, Email:hyoeun@illinois.edu}
,\,\,Kiseop Lee\footnote{Department of Statistics, Purdue University, Email: kiseop@purdue.edu}
}
\maketitle

\begin{abstract}

We study the optimal order placement strategy with the presence of a liquidity cost. In this problem, a stock trader wishes to clear her large inventory by a predetermined time horizon $T$. A trader uses both limit and market orders, and a large market order faces an adverse price movement caused by the liquidity risk. First, we study a single period model where the trader places a limit order and/or a market order at the beginning. We show the behavior of optimal amount of market order, $m^*$, and optimal placement of limit order, $y^*$, under different market conditions. Next, we extend it to a multi-period model, where the trader makes sequential decisions of limit and market orders at multiple time points.
\end{abstract}

\section{Introduction}
%

\par In general, stock traders often need to handle a large order. Usually, the first step is to split a large order into multiple small orders before placing. This is to reduce the unfavorable price movements to the trader caused by a large order. Selling an asset tends to move the price downward, while buying an asset tends to move the price upward. This effect is often called as a liquidity cost, a price impact, or a market impact. In this paper, we are going to call this as a liquidity cost. 

\medskip
\par The liquidity cost affects the optimal strategy of traders, since a large initial market order may face a hefty liquidity cost. The optimal execution strategy under a liquidity cost has been studied extensively. 

  \par The pioneering work of \citet{bertsimas1998optimal}  
and  \citet{almgren2001optimal} consider a linear impact, such that the liquidity cost is proportional to the number shares of an order. \citet{bank2004hedging}, \citet{CJP:2003} , and \citet{frey2002risk} investigate deeply on the liquidity cost and additional transient impacts. 
 \citet{obizhaeva2013optimal} consider  a transient (linear) impact additionally. A transient impact is that the impact from a market order is not permanent, but only exists for a short moment and vanishes.
 
\par However, \citet{potters2003more},  \citet{eisler2012price}, and  \citet{donier2012market} empirical showed that the liquidity cost is not linear, but rather concave. \citet{alfonsi2010} ,    \citet{predoiu2011optimal},   \citet{gatheral2010no} and   \citet{gueant2015optimal}, and many others, have proposed extensions or alternatives to \citet{obizhaeva2013optimal} with a nonlinear liquidity cost.  

\par Furthermore, instead of the exponential decay of the transient impact, more general decay kernels are considered by  \citet{alfonsi2012order}  and \citet{gatheral2012transient} since the exponential decay is not observed on market data. 

\medskip

 \par When placing orders, the second step for investors is to decide between a market order and a limit order. A market order (MO) is an order to buy/sell an asset at the best available price. Market orders are executed immediately, while the best available price might be affected adversely (liquidity cost). Most studies so far have focused on an execution problem using solely a market order. 
 \par A limit order (LO) is an order to buy/sell an asset at a specific price. The price of a limit order is specified by the buyer/seller, but the execution is not guaranteed. A limit order can be cancelled by the buyer/seller before execution.

\par A limit order book (LOB) collects all quantities and the price of limit orders. The LOB is updated upon execution of market orders, submission of limit orders, or cancellation of pre-existing limit orders. 
\par Recently, more researchers like \citet{jacquierLiu2017}, \citet{alfonsi2010}, \citet{GuilbaudPham2013}, \citet{Manglaras2015}, \citet{ContKukanov2013} consider both market orders and limit orders, instead of focusing on the market order only. However, only the best bid or the best ask prices are considered as options.

\citet{cartea2014modelling}, \citet{GDR:2017}, \citet{FLP:2018}
have considered whether placing the limit order deeper in the book could be preferable, which is often called as an optimal placement problem. In \cite{cartea2014modelling}, the optimal placement problem is studied under a continuous-time model for the mid-price, a mean-reverting process with jumps. \cite{GDR:2017} investigate the optimal placement problem under a discrete-time model for the level I prices of a LOB. In \cite{FLP:2018}, a problem similar to \cite{GDR:2017} is studied under a continuous-time model, and shows that there exists an optimal placement policy different from the Level I-II solution of \cite{GDR:2017}.  

While  \cite{GDR:2017} and \cite{FLP:2018} studied the optimal placement behavior with a small size order without a liquidity cost, practically we cannot entirely ignore the liquidity cost. Therefore, in this paper, we solve the optimal execution problem with liquidity cost, while considering both market orders and limit orders as options. For the limit order, the optimal price level is also investigated as in \cite{FLP:2018}. For the price impact model, we borrow the  liquidity risk introduced in \cite{CJP:2003} but we add the necessary features of LOB. 

\par We first investigate the optimal placement problem under the single-period setting, where the investor can place an order only at the beginning. The investor needs to clear the inventory (of size $M$) by a certain time horizon $T$. At $t=0$, the investor decides the quantity of the market order $m$ ($0\leq m \leq M$), and the rest $M-m$ will be placed using the limit order. The price level for the limit order, $y$, is also determined at $t=0$. At time $T$, any unexecuted limit order is immediately executed using a market order. 

\par Next, we study an analogous problem with a multi-period setting, where the investor makes decisions at multiple time steps, $\{0,T/n, 2T/n , \dots T \}$. At each time step, the investor cancels existing un-executed limit orders and place new market and limit orders. At time $T$, similar to single-period setting, any remaining inventory is immediately executed using a market order.

\par We model the asset price using the Brownian and the geometric Brownian motion, and also model the liquidity cost from the investor's market order; we use the model derived from \cite{CJP:2003}, but also include the basic feature of Limit Order book. We investigate the behavior of optimal $m^*$ and $y^*$ which maximize the expected cash flow from the order placement. 
 
\par The rest of the paper is organized as follows. We briefly explain the liquidity risk introduced in \cite{CJP:2003} in section 2. In section 3, we discuss the optimal execution strategy in one period model. We extend it to a multi-period case in section 4. We conclude in section 5.


\section{Liquidity Risk}

We recall the concepts introduced in the
work of \citet{CJP:2003}. We consider a market with a risky asset and
a money market account. The risky asset, stock, pays no dividend and we assume
that the spot rate of interest is zero, without loss of generality.
$S(t,x,\omega)$ represents the stock price per share at time
$t \in [0,T]$ that the trader pays/receives for an order of size $x \in \mathbf{R}$ given the state $\omega \in \Omega$.
A positive order($x>0$) represents a buy, a negative order ($x<0$)
represents a sale, and the zeroth order ($x=0$) corresponds to
the marginal trade. For the detailed structure of the supply curve, we refer to Section 2 of \c{C}etin et al.~\cite{CJP:2003}.

A \emph{trading strategy} (portfolio) is a triplet $((X_t,Y_t : t \in [0,T]),\tau)$ where $X_t$ represents the trader's aggregate stock holding at time $t$
(units of the stock), $Y_t$ represents the trader's aggregate money account position at time $t$ (units of money market account), and $\tau$
represents the liquidation time of the stock position. Here, $X_t$ and $Y_t$ are predictable and optional processes, respectively,
with $X_{0-} \equiv Y_{0-} \equiv 0$.  A \emph{self-financing strategy} is a trading strategy $((X_t,Y_t : t \in [0,T]),\tau)$ where
 $X_t$ is cadlag if $\frac{\partial S}{\partial x}=0$ for all $t$, and $X_t$ is cadlag with finite quadratic variation
       ($[X,X]_T < \infty$) otherwise, and
\begin{align}
 Y_t &= Y_0 + X_0 S(0,X_0) + \int_0^t X_u dS(u,0)-X_t S(t,0) \notag \\
     &- \sum_{0\leq u \leq t} \Delta X_u [S(u, \Delta X_u)-S(u,0)] - \int_0^t \frac{\partial S}{\partial x}(u,0) d[X,X]_u^c. \label{SF}
\end{align}

Therefore, it is natural to define \emph{the liquidity cost} of a self-financing trading strategy $(X, Y, \tau)$ by
$L_t = \sum_{0\leq u \leq t} \Delta X_u [S(u, \Delta X_u)-S(u,0)] + \int_0^t \frac{\partial S}{\partial x}(u,0) d[X,X]_u^c, \quad 0<t \leq T,$
where $L_{0-} = 0$, and $L_0 = X_0[S(0,X_0)-S(0,0)]$. 

A practically important problem of a trader is how to execute a large order in the market with liquidity risk. Guo et al. \cite{GDR:2017} studied an optimal placement problem under a discrete time setting. Figueroa-Lopez et al. \cite{FLP:2018} studied an optimal placement problem in continuous time setting, but both studies did not consider the liquidity cost. We plan to study how to place an order using both market and limit orders in a market with liquidity risk.

\section{A Single Period Model}

\subsection{A Diffusive LOB Market}

As the first step, let us consider a simple case where the marginal price $S(t,0)$ is a diffusion process. The goal is to sell $M$ orders by time horizon $T$. We consider two cases. In a single period case, we will study the optimal placement problem in a single step (decision is made once when $t=0$). The trader will use market order and limit order to minimize the utility function (expected cost), which incorporates the execution risk from limit order and liquidity cost from market order. We next extend it to a multi-period case. Then the goal is to use the previous step to build the multiple-period (decision can be made multiple times ($t=0,t_1,\dots,t_N=T$), and at each time step the trader place market/limit order.

For the supply curve, we use
\[
S(t,x)=S(t,0)-d-\beta(K+x)^{-}, \,\, \beta>0,\,\,x<0,
\]
and 
\[
S(t,0)=S_0+\mu t+\sigma W_t,  \quad \text{or} \quad S(t,0)=S_0 e^{(\mu-\frac{\sigma^2 }{2}) t +\sigma W_t},
\]
where $d(>0)$ is the half of the bid and ask spread, and $K$ is the initial market depth at the best bid price. Therefore, the price does not move up to the first $K$ shares, then moves down linearly afterward.{ \cite{BP:2010}, \cite{CJP:2003} suggest linear supply curve: For liquid stocks, the supply curve is linear in $x$ and for highly illiquid commodities, \cite{BP:2010} suggests a jump-linear supply curve. In our work, we apply a linear curve, and just use one $\beta$ since we focus on only one side (sell). Also, we consider the necessary limit order book feature, the market depth $K$.

First, let us consider a single period case. The placement is made only once at time $0$. At time $0$, the trader makes a decision. If a trader sells $M$ orders using only a market order at time 0, the trader's cash flow becomes $M*(S(0,-M)-f )$ where $f$ is the fee per an executed market order. A more general case is when a trader sells $m$ orders using a market order at time 0, and places a $M-m$ sell limit order at the price level $S(0,0)+y$, $y\geq d$. Any remaining unexecuted orders at time $T$ are converted to a market order and  are executed immediately with paying the liquidity cost.

Let $\tau$ be the first time that the trader's limit order becomes the best ask. In other word, $\tau$ is the first time $S(0,0)+y$ becomes the best ask. Let $L$ be the number of executed shares at the price level $S(0,0)+y$ and denote $L=(M-m)\rho$ where $\rho$ is the proportion of the execution.

Notice that even when the trader's limit order becomes the best ask, there is no guarantee of execution, since there will be orders at the exact same level from other traders too.
The trader's cash flow from the initial market order is $m*(S(0, -m)-f)$. If $\tau>T$, no limit order will be executed and all will be put as a market order. Therefore, the expected cash flow from the remaining $M-m$ shares will be $(M-m)E(S(T,-(M-m))-f|\tau>T).$
When $\tau<T$, the expected cash flow from the remaining $M-m$ shares becomes $E(L(S({0},0)+y+r))+E((M-m-L)(S(T,-(M-m))-f)   |\tau<T), $
where $r$ is the rebate per an executed limit order.

Our goal is to find the optimal $(m,y)$ which maximizes the expected cash flow
\begin{align}\nonumber
&m(S(0, -m)-f) +(M-m)E(S(T,-(M-m))-f|\tau>T)P(\tau>T) \\
&+\{E(L (S(0,0)+y+r))+E ((M-m-L)(S(T,-(M-m-L))-f)   |\tau<T)\}P(\tau<T).
\label{ECF_lem3}
\end{align}

 To do this, we need to calculate the conditional expectation $E(S(T,0)|\tau<T)$. Since $S(t,0)$ is a function of a Brownian motion, the distribution of $\tau$ is related to the hitting time of a Brownian motion. It is well known that it is obtained by the reflection property of a Brownian motion and its running maximum. That is a common problem especially in a barrier option. It is also known that when $\tau_a$ is the hitting time of $a$ for a standard Brownian motion $B_t$, $0< b \leq a$, and $M_t$ is its running maximum, we have
\[
P(B_T<b|\tau_a <T)= P(B_T > a +(a-b)|\tau_a <T),
\]
and
\[
P(M_T \geq a, \, B_T <b)=P(M_T \geq a, \, B_T>2a-b).
\]
We will apply this joint distribution to calculate the conditional expectation $E(S(T,0)|\tau<T)$. It becomes a function of $y$ only. 

{
Next two lemmas give us useful equations about expected stock prices and the probability of the limit order execution. Lemma~\ref{BM_Lem} describes the expected stock price, which follows a Brownian motion, in two different cases: when the stock price hits a certain price before $t$ and when hitting does not happen until $t$. Lemma~\ref{GBM_Lem} shows analogous expected stock prices when the stock price follows a geometric Brownian motion.
}
{\Red
\begin{remk}
For the rest of the paper, the pdf, cdf, and survival or tail distribution of a standard normal r.v. $Z$ are denoted by $\phi(z)=e^{-z^{2}/2}/\sqrt{2\pi}$, $N(z)=\int_{-\infty}^{z}\phi(x)dx$. $B_t$ is a standard Brownian motion. 
\end{remk}

}

{
\begin{lem}\label{BM_Lem}
	Let us assume that the price process follows a Brownian motion with drift, $S(t,0)=S(0,0)+\mu t +\sigma B_t$. Let $\tau_y:=\inf_{u} \{S(u,0)+d=S(0,0)+y \}.$ Then
	
	\begin{align}\nonumber
		P(\tau_y>t)=&N\left(\frac{(y-d)-\mu t}{\sigma\sqrt{t}}\right)-e^{\frac{2(y-d)\mu }{\sigma^2} } N\left(\frac{-(y-d)-\mu t}{\sigma\sqrt{t}}\right),\\ \nonumber
		E[(S(t,0)-S(0,0)) \mathcal{I}(\tau_y>t)]=&\mu t N\left(\frac{y-d-\mu t}{\sigma\sqrt{t}}\right) +e^{\frac{2(y-d)\mu }{\sigma^2} }(-2(y-d)-\mu t)  N\left(\frac{-(y-d)-\mu t}{\sigma\sqrt{t}}\right),\\ \nonumber
		E[(S(t,0)-S(0,0))\mathcal{I}(\tau_y<t)]=&\mu t N\left(\frac{-(y-d)+\mu t}{\sigma\sqrt{t}}\right) +e^{\frac{2(y-d)\mu }{\sigma^2} }(2(y-d)+\mu t)  N\left(\frac{-(y-d)-\mu t}{\sigma\sqrt{t}}\right).
	\end{align}

\end{lem}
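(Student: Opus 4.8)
The plan is to reduce all three identities to the joint law of a drifted Brownian motion and its running maximum, remove the drift by a change of measure, and then evaluate the resulting standard Gaussian integrals by means of the reflection principle quoted just above the statement.

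First I would record the reduction. Writing $X_s := S(s,0)-S(0,0)=\mu s+\sigma B_s$ and $a:=y-d$, the stopping time $\tau_y$ is precisely the first hitting time of level $a$ by $X$, so that $\{\tau_y>t\}=\{\max_{0\le s\le t}X_s<a\}$ and $\{\tau_y<t\}=\{\max_{0\le s\le t}X_s\ge a\}$ up to a $P$-null set. Thus all three quantities are determined by the joint distribution of $(X_t,\max_{0\le s\le t}X_s)$. I would also note a bookkeeping shortcut: the second and third formulas must sum to $E[X_t]=\mu t$, since $E[X_t\,\mathcal{I}(\tau_y>t)]=\mu t-E[X_t\,\mathcal{I}(\tau_y<t)]$, and one checks directly that the two stated right-hand sides add to $\mu t$ (the first terms combine through $N(z)+N(-z)=1$ and the second terms cancel). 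Hence it suffices to establish the probability formula and the third identity.

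Next I would remove the drift. Set $\lambda:=\mu/\sigma$ and $b:=a/\sigma=(y-d)/\sigma$, and introduce the equivalent measure $\mathbb{Q}$ under which $\beta_s:=B_s+\lambda s$ is a standard Brownian motion; then $X_s=\sigma\beta_s$ is a driftless scaled Brownian motion and the level $a$ becomes $b$ for $\beta$. Converting a $P$-expectation into a $\mathbb{Q}$-expectation introduces the Radon--Nikodym factor $\exp(\lambda\beta_t-\tfrac12\lambda^2 t)$, which turns each target into a plain Gaussian expectation $e^{-\lambda^2 t/2}\,E[g(\beta_t)e^{\lambda\beta_t}\,\mathcal{I}(M_t\ge b)]$, where $M_t:=\max_{0\le s\le t}\beta_s$ and $g\equiv 1$ for the probability formula and $g(x)=\sigma x$ for the third identity. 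I would then invoke the reflection principle in the form already stated in the text: the $\{M_t\ge b\}$-conditional law of $\beta_t$ is two-piece, with density $\phi_t(x)$ on $\{x\ge b\}$ and $\phi_t(2b-x)$ on $\{x<b\}$, where $\phi_t(x)=\tfrac{1}{\sqrt t}\phi(x/\sqrt t)$ is the $N(0,t)$ density. Splitting the integral at $b$ and substituting $u=2b-x$ on the reflected piece produces the factor $e^{2\lambda b}$, which is exactly the $\exp(2(y-d)\mu/\sigma^2)$ appearing in the statement; each piece is then a Gaussian integral dispatched by completing the square together with $\int_c^\infty\phi=N(-c)$ and $\int_c^\infty z\,\phi(z)\,dz=\phi(c)$. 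Undoing the substitutions $\lambda=\mu/\sigma$, $b=(y-d)/\sigma$ recovers the stated Gaussian arguments and the coefficients $\mu t$ and $2(y-d)+\mu t$.

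The main obstacle is purely the bookkeeping in the expectation computation (the case $g(x)=\sigma x$): there the reflected and direct pieces each contribute a standard-normal density term of the form $\sqrt{t}\,\phi((b\mp\lambda t)/\sqrt t)$, and these must cancel to leave only the clean $N(\cdot)$ terms. The cancellation hinges on the algebraic identity $2\lambda b-(b+\lambda t)^2/(2t)=-(b-\lambda t)^2/(2t)$, which forces the two density contributions to be equal in magnitude and opposite in sign; verifying this carefully is where sign and factor errors are most likely to creep in. Everything else is routine Gaussian integration, and the probability formula follows from the same split with $g\equiv 1$, which is the classical distribution of the maximum of a drifted Brownian motion.
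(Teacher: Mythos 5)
Your proposal is correct, and it rests on the same foundation as the paper's proof: everything is reduced to the joint law of the drifted Brownian motion and its running maximum, the reflected piece of that law supplies the factor $e^{2(y-d)\mu/\sigma^2}$, and one of the two expectation identities is obtained for free from the other via $E[X_t\,\mathcal{I}(\tau_y>t)]+E[X_t\,\mathcal{I}(\tau_y<t)]=\mu t$. The differences are modest but real. The paper simply quotes the joint density of $\bigl(S(t,0)-S(0,0),\,M_t\bigr)$ for drifted Brownian motion from Jeanblanc--Yor--Chesney and integrates it over $(-\infty,y-d]$, proving $P(\tau_y>t)$ and $E[X_t\,\mathcal{I}(\tau_y>t)]$ directly and deducing the third identity by complementation; you instead derive that joint law from scratch by a Girsanov change of measure plus the reflection principle for driftless Brownian motion, work on the complementary event $\{M_t\ge b\}$, and prove the probability formula and the \emph{third} identity directly, recovering the second by the sum-to-$\mu t$ observation. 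Your route is more self-contained (it does not lean on the cited density formula, which in the paper is in any case stated with small typographical slips in the normalization and the exponent), at the cost of the extra bookkeeping you correctly flag: the cancellation of the two $\sqrt{t}\,\phi(\cdot)$ terms via the identity $2\lambda b-(b+\lambda t)^2/(2t)=-(b-\lambda t)^2/(2t)$, which I have checked and which does hold. Either way the Gaussian integrations close, so the argument is sound.
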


\begin{proof}
	{\Red  Let $M_t:=\max\limits_{u\leq t} S(u,0)-S(0,0)$. From the definition of $\tau_y$, $\{ \tau_y>t\}= \{ M_t \leq y-d\}$.
	From \citet{jeanblanc2009mathematical}, note that 
	$$P((S(t,0)-S(0,0))\in dz, M_t\leq a)= \phi\left(\frac{z-\mu t}{\sigma\sqrt{t}}\right)\frac{1}{\sigma\sqrt{t}} -e^{2\mu a}\phi\left(\frac{z-2a-\mu t}{\sigma\sqrt{t}}\right).$$
		Then
		\begin{align*}
			P(\tau_y>t)= P( M_t \leq y-d) &= 
		\int_{-\infty}^{(y-d)}
		\phi\left(\frac{z-\mu t}{\sigma\sqrt{t}}\right)\frac{1}{\sigma\sqrt{t}} -e^{2\mu (y-d)}\phi\left(\frac{z-2(y-d)-\mu t}{\sigma\sqrt{t}}\right) dz
		\\
		&=N\left(\frac{(y-d)-\mu t}{\sigma\sqrt{t}}\right)-e^{\frac{2(y-d)\mu }{\sigma^2} } N\left(\frac{-(y-d)-\mu t}{\sigma\sqrt{t}}\right).
		\end{align*} 
		
		Next, 
	\begin{align*}
		E[(S(t,0)&-S(0,0))\mathcal{I}(\tau_y>t)]=E[(S(t,0)-S(0,0))\mathcal{I}(M_t\leq (y-d))]
		\\
		=&\int_{-\infty}^{(y-d)} z P((S(t,0)-S(0,0))\in dz, M_t\leq (y-d))\\
		=& \int_{-\infty}^{(y-d)} z
		\phi\left(\frac{z-\mu t}{\sigma\sqrt{t}}\right)\frac{1}{\sigma\sqrt{t}} -e^{2\mu (y-d)}\phi\left(\frac{z-2(y-d)-\mu t}{\sigma\sqrt{t}}\right) dz
		\\
		=&\mu t N\left(\frac{y-d-\mu t}{\sigma\sqrt{t}}\right) +e^{\frac{2(y-d)\mu }{\sigma^2} }(-2(y-d)-\mu t)  N\left(\frac{-(y-d)-\mu t}{\sigma\sqrt{t}}\right). 
	\end{align*}
	
	Finally, 
	\begin{align*}
			E[(S(t,0)-S(0,0))\mathcal{I}(\tau_y<t)]&=E[(S(t,0)-S(0,0))]-E[(S(t,0)-S(0,0))\mathcal{I}(\tau_y>t)]\\
	&=\mu t - E[(S(t,0)-S(0,0))\mathcal{I}(\tau_y>t)]\\
	&= \mu t N\left(\frac{-(y-d)+\mu t}{\sigma\sqrt{t}}\right) +e^{\frac{2(y-d)\mu }{\sigma^2} }(2(y-d)+\mu t)  N\left(\frac{-(y-d)-\mu t}{\sigma\sqrt{t}}\right).
	\end{align*}
	
	}

%
\end{proof}



The next lemma gives us similar results when the price process is a geometric Brownian motion.

\begin{lem}\label{GBM_Lem}
	Let us assume that the price process follows a geometric Brownian motion, $dS(t,0)=\mu S(t,0) dt + \sigma S(t,0) dB_t $. Also, let's denote $\tau_y:=\inf_{u} \{S(u,0)=S(0,0)+y-d \}$. Then 
	{\Red 
	
	\begin{align*}
		P(\tau_y>t)=&N\left( \frac{a-\mu t+\sigma^2 /2}{\sigma\sqrt{t}} \right)- e^{\frac{2a\mu}{\sigma^2}-a}N\left( \frac{-a-\mu t+\sigma^2 t/2}{\sigma\sqrt{t}} \right),\\
		E[S(t,0) \mathcal{I}(\tau_y>t)]=&S(0,0) \left( e^{\mu t} N\left( \frac{a-\mu t +\sigma^2t/2 }{\sigma\sqrt{t}} \right)-e^{\frac{2a\mu}{\sigma^2}+\mu t + a}N\left(\frac{-a-\mu t +\sigma^2 t/2}{\sigma\sqrt{t}} \right)  \right),\\
		E[S(t,0) \mathcal{I}(\tau_y<t)]=&S(0,0) \left( e^{\mu t} N\left( \frac{-a+\mu t -\sigma^2t/2 }{\sigma\sqrt{t}} \right)+e^{\frac{2a\mu}{\sigma^2}+\mu t + a}N\left(\frac{-a-\mu t +\sigma^2 t/2}{\sigma\sqrt{t}} \right)  \right) .
	\end{align*}
	
	}
	where
	$$a:=\ln \left(\frac{(S(0,0)+y-d)}{S(0,0)}\right).$$
\end{lem}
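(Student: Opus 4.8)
The plan is to reduce the geometric case to the arithmetic one of Lemma~\ref{BM_Lem} by passing to logarithms. Writing $X_t := \ln\bigl(S(t,0)/S(0,0)\bigr) = (\mu - \tfrac{\sigma^2}{2})t + \sigma B_t$, the process $X$ is a Brownian motion with drift $\nu := \mu - \tfrac{\sigma^2}{2}$ and volatility $\sigma$, and the hitting time $\tau_y = \inf\{u : S(u,0) = S(0,0)+y-d\}$ is exactly the first time $X$ reaches the level $a = \ln\bigl((S(0,0)+y-d)/S(0,0)\bigr)$. Since the limit order is posted above the marginal price ($y \ge d$, so $a > 0$), I have $\{\tau_y > t\} = \{M_t^X < a\}$, where $M_t^X := \max_{u\le t} X_u$. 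The first identity is then immediate: apply the probability formula of Lemma~\ref{BM_Lem} to $X$ with drift $\nu$ and barrier $a$, and simplify the exponent using $\nu = \mu - \tfrac{\sigma^2}{2}$, so that $2\nu a/\sigma^2 = 2a\mu/\sigma^2 - a$ and $a - \nu t = a - \mu t + \tfrac{\sigma^2}{2}t$.

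For the second identity I would rewrite $E[S(t,0)\,\mathcal{I}(\tau_y>t)] = S(0,0)\,E\bigl[e^{X_t}\,\mathcal{I}(M_t^X < a)\bigr]$ and remove the exponential weight by a change of measure. Factoring $e^{X_t} = e^{\mu t}\exp\!\bigl(\sigma B_t - \tfrac{\sigma^2}{2}t\bigr)$ and noting that $Z_t := \exp(\sigma B_t - \tfrac{\sigma^2}{2}t)$ is a mean-one martingale, I define $\tilde{P}$ by $d\tilde{P}/dP|_{\mathcal{F}_t} = Z_t$. Under $\tilde{P}$, Girsanov's theorem shifts the drift of $B$ by $\sigma$, so $X$ remains a Brownian motion with volatility $\sigma$ but with drift $\mu + \tfrac{\sigma^2}{2}$, while the event $\{M_t^X < a\}$ is unaffected. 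Hence $E[e^{X_t}\mathcal{I}(M_t^X<a)] = e^{\mu t}\,\tilde{P}(M_t^X < a)$, and a second application of the probability formula of Lemma~\ref{BM_Lem}, now with drift $\mu + \tfrac{\sigma^2}{2}$, yields the claimed expression after simplification. (Equivalently, one can integrate $e^z$ against the restricted joint density $P(X_t\in dz,\, M_t^X \le a)$ of Lemma~\ref{BM_Lem} and complete the square in the two resulting Gaussian integrals; the change of measure is simply the clean bookkeeping for this computation.)

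The third identity follows from the complementary decomposition: since $\{\tau_y = t\}$ is a null event, $\mathcal{I}(\tau_y < t) = 1 - \mathcal{I}(\tau_y > t)$, so that $E[S(t,0)\mathcal{I}(\tau_y<t)] = E[S(t,0)] - E[S(t,0)\mathcal{I}(\tau_y>t)] = S(0,0)e^{\mu t} - E[S(t,0)\mathcal{I}(\tau_y>t)]$, using $E[S(t,0)] = S(0,0)e^{\mu t}$ for geometric Brownian motion. Substituting the second identity and rewriting $1 - N(c) = N(-c)$ on the leading term produces the stated form. I expect the only real work to be the second identity: the exponential weighting against a defective (sub-probability) density is what the change of measure resolves, and the main place to be careful is tracking the two drift shifts $\mu \mp \tfrac{\sigma^2}{2}$ and the constant $a$ in the exponential prefactor $e^{2a\mu/\sigma^2 + a}$. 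It is also worth recording explicitly that the reduction $\{\tau_y>t\}=\{M_t^X<a\}$ uses $a>0$, i.e.\ that the order is placed strictly above the marginal price.
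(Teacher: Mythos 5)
Your route is essentially the paper's: pass to $X_t=\ln(S(t,0)/S(0,0))$, identify $\{\tau_y>t\}$ with $\{M_t^X<a\}$, and feed the reflection-principle joint density of Lemma~\ref{BM_Lem} (with drift $\mu-\tfrac{\sigma^2}{2}$) into the three quantities, finishing the third identity by the complement $E[S(t,0)]-E[S(t,0)\mathcal{I}(\tau_y>t)]$ with $E[S(t,0)]=S(0,0)e^{\mu t}$. The only genuine difference is cosmetic: the paper evaluates $\int_{-\infty}^a e^z\,P(X_t\in dz, M_t\le a)$ by completing the square directly, whereas you absorb the weight $e^{X_t}=e^{\mu t}Z_t$ into a Girsanov change of measure and reapply the hitting-probability formula with drift $\mu+\tfrac{\sigma^2}{2}$; as you note yourself, these are the same computation organized differently, so I would not call this a different proof.

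One substantive caution: if you actually carry out your second step, the argument of each Gaussian c.d.f.\ gets shifted by $-\sigma\sqrt{t}$ relative to the arguments appearing in $P(\tau_y>t)$. Concretely, $\tilde{P}(M_t^X<a)$ with drift $\mu+\tfrac{\sigma^2}{2}$ gives $e^{\mu t}N\bigl(\tfrac{a-\mu t-\sigma^2 t/2}{\sigma\sqrt{t}}\bigr)-e^{\frac{2a\mu}{\sigma^2}+\mu t+a}N\bigl(\tfrac{-a-\mu t-\sigma^2 t/2}{\sigma\sqrt{t}}\bigr)$, i.e.\ with $-\sigma^2t/2$ in the numerators, not the $+\sigma^2t/2$ appearing in the displayed statement (whose arguments coincide with those of $P(\tau_y>t)$ — which cannot happen for an exponentially weighted integral). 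A quick check at $\mu=0$, $\sigma=t=1$, $a=1/2$ gives $N(0)-e^{1/2}N(-1)\approx 0.238$ for the correct shift versus $N(1)-e^{1/2}N(0)\approx 0.017$ for the stated one, and direct integration of $e^z$ against the defective density confirms the former. So your method is sound, but your closing claim that it ``yields the claimed expression after simplification'' is not quite right: done carefully it yields the corrected expression, and you should flag the discrepancy rather than assert agreement. The exponential prefactor $e^{2a\mu/\sigma^2+\mu t+a}$ and the first identity are fine, and the third identity inherits whichever form the second takes.
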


\begin{proof}
{\Red 
	Let $X_t=\ln (S(t,0)/S(0,0))$. Then $X_t=(\mu-\sigma^2/2) t+\sigma B_t$. 
	Let's denote $M_t:=\max\limits_{u\leq t} X_u$. From the definition of $\tau_y$, note that  $\{\tau_y>t \} =\{ {M_t} \leq \ln \left(\frac{(S(0,0)+y-d)}{S(0,0)}\right)\}$.  
	Then, we may apply the proof of Lemma~\ref{BM_Lem} by using $a=\ln \left(\frac{(S(0,0)+y-d)}{S(0,0)}\right)$ and $P(X_{t}\in dz, M_t\leq a)=\phi\left(\frac{z-(\mu-\frac{\sigma^2}{2}) t}{\sigma\sqrt{t}}\right)\frac{1}{\sigma\sqrt{t}} -e^{2(\mu-\frac{\sigma^2}{2}) a}\phi\left(\frac{z-2a-(\mu-\frac{\sigma^2}{2})t}{\sigma\sqrt{t}}\right).$
	
	First, 
	\begin{align*}
			P(\tau_y>t)= P( M_t \leq a) &= 
		\int_{-\infty}^{a}
		\phi\left(\frac{z-(\mu-\frac{\sigma^2}{2}) t}{\sigma\sqrt{t}}\right)\frac{1}{\sigma\sqrt{t}} -e^{2(\mu-\frac{\sigma^2}{2}) a}\phi\left(\frac{z-2a-(\mu-\frac{\sigma^2}{2})t}{\sigma\sqrt{t}}\right) dz
		\\
		&=N\left( \frac{a-\mu t+\sigma^2 /2}{\sigma\sqrt{t}} \right)- e^{\frac{2a\mu}{\sigma^2}-a}N\left( \frac{-a-\mu t+\sigma^2 t/2}{\sigma\sqrt{t}} \right)
		\end{align*} 

Next, 
\begin{align*}
	E\left[\frac{S(t,0)}{S(0,0)} \mathcal{I}(\tau_y>t)\right]
	&=E[e^{X_t}\mathcal{I}(M_t\leq a)]=\int_{-\infty}^{a} e^z P(X_t\in dz, M_t\leq a)\\
	&= \left( e^{\mu t} N\left( \frac{a-\mu t +\sigma^2t/2 }{\sigma\sqrt{t}} \right)-e^{\frac{2a\mu}{\sigma^2}+\mu t + a}N\left(\frac{-a-\mu t +\sigma^2 t/2}{\sigma\sqrt{t}} \right)  \right),\\
	E\left[\frac{S(t,0)}{S(0,0)} \mathcal{I}(\tau_y<t)\right]
	&=E[e^{X_t}] - E[e^{X_t}\mathcal{I}(M_t\leq a)]\\
	&= e^{\mu t}- \left( e^{\mu t} N\left( \frac{a-\mu t +\sigma^2t/2 }{\sigma\sqrt{t}} \right)-e^{\frac{2a\mu}{\sigma^2}+\mu t + a}N\left(\frac{-a-\mu t +\sigma^2 t/2}{\sigma\sqrt{t}} \right)  \right)\\
	&= \left( e^{\mu t} N\left( \frac{-a+\mu t -\sigma^2t/2 }{\sigma\sqrt{t}} \right)+e^{\frac{2a\mu}{\sigma^2}+\mu t + a}N\left(\frac{-a-\mu t +\sigma^2 t/2}{\sigma\sqrt{t}} \right)  \right).
\end{align*}

}
\end{proof}


{
Using Lemma~\ref{BM_Lem} , we can calculate the expected cash flow (ECF) for a Brownian motion model. Similarly, the ECF for a geometric Brownian model is explicitly shown in Lemma~\ref{ECF_GBM} using results from Lemma~\ref{GBM_Lem}. 
}
{\Red
\begin{remk}\label{Remk:notation_alphas}
	For the rest of the paper, we use the following notations:
	\[
\alpha_t= \frac{y-d+\mu t}{\sigma \sqrt{t}},
\quad
\beta_t = \frac{y-d-\mu t}{\sigma\sqrt{t}},
\quad \tilde{N}(-\alpha_t)= e^{2(y-d)\mu/\sigma^2}N(-\alpha_t),
\quad
\epsilon_0= \left(N\left(\beta_{T} \right)-\tilde{N}\left(-\alpha_{T}\right) \right).
\]
\end{remk}
}

{\Red 
\begin{lem}\label{ECF_BM}
	 Let us assume that the price process follows a Brownian motion, $S(t,0)=S(0,0)+\mu t +\sigma B_t$. Then $ECF(y,m)$ can be summarized as follows:
 
 \begin{equation}
 	\begin{aligned}
ECF(y,m)&=MS(0,0)+m (-d-\beta (K-m)^{-}-f) \\
&+(M-m)\left\{\mu T N(\beta_T) -(2(y-d)+\mu T) \tilde{N} (-\alpha_T)+\epsilon_0 (-d-\beta(K-M+m)^{-}-f)\right\} \\
&+\left(1-\epsilon_0\right)\left(E[L](y+r)-E[(M-m-L)(d+\beta(K-M+m+L)^{-}+f)] \right)\\
&+(M-m-E[L])\left(\mu T N\left(-\beta_T \right) +(2(y-d)+\mu T)\tilde{N}(-\alpha_T)\right).
\end{aligned}
\label{BM_ECF_eq}
 \end{equation}


\end{lem}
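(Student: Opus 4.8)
The plan is to start from the expected-cash-flow expression \eqref{ECF_lem3}, substitute the explicit supply curve $S(t,x)=S(t,0)-d-\beta(K+x)^{-}$, and then replace every conditional price expectation by the closed forms of Lemma~\ref{BM_Lem} (recalling that the $\tau$ of \eqref{ECF_lem3} is exactly the $\tau_y$ of that lemma). The only analytic content lives in Lemma~\ref{BM_Lem}; everything else is linearity of expectation, the identity $E[\,\cdot\,\mathcal{I}(\tau_y>T)]=E[\,\cdot\,|\tau_y>T]\,P(\tau_y>T)$, and careful bookkeeping. For the initial market order, $S(0,-m)=S(0,0)-d-\beta(K-m)^{-}$ gives immediately $mS(0,0)+m(-d-\beta(K-m)^{-}-f)$.

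Next I would treat the no-fill event $\{\tau_y>T\}$. Substituting $S(T,-(M-m))=S(T,0)-d-\beta(K-M+m)^{-}$ and writing $S(T,0)=(S(T,0)-S(0,0))+S(0,0)$, the contribution $(M-m)E[(S(T,-(M-m))-f)\mathcal{I}(\tau_y>T)]$ splits into the mean-reversion piece $(M-m)E[(S(T,0)-S(0,0))\mathcal{I}(\tau_y>T)]$, which Lemma~\ref{BM_Lem} evaluates to $(M-m)(\mu T N(\beta_T)-(2(y-d)+\mu T)\tilde N(-\alpha_T))$, plus the constant-coefficient piece $(M-m)(S(0,0)-d-\beta(K-M+m)^{-}-f)\,\epsilon_0$, using $P(\tau_y>T)=\epsilon_0$.

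For the fill event $\{\tau_y<T\}$ I would perform the same decomposition. The limit-executed shares contribute $(1-\epsilon_0)E[L](S(0,0)+y+r)$, while the residual $M-m-L$ shares liquidated at $T$ are handled by writing $S(T,-(M-m-L))=S(T,0)-d-\beta(K-M+m+L)^{-}$ and splitting $S(T,0)$ as before. Assuming the fill ratio $\rho$ (hence $L$) is independent of the price path, the $S(T,0)-S(0,0)$ part factorizes as $(M-m-E[L])E[(S(T,0)-S(0,0))\mathcal{I}(\tau_y<T)]$, which Lemma~\ref{BM_Lem} turns into $(M-m-E[L])(\mu T N(-\beta_T)+(2(y-d)+\mu T)\tilde N(-\alpha_T))$, and the remaining constant-coefficient part is $-(1-\epsilon_0)E[(M-m-L)(d+\beta(K-M+m+L)^{-}+f)]$.

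The last step, and the only place the bookkeeping can go wrong, is collecting the four $S(0,0)$ contributions — $m$ from the market order, $(M-m)\epsilon_0$ from the no-fill branch, $(1-\epsilon_0)E[L]$ from the filled limit shares, and $(1-\epsilon_0)(M-m-E[L])$ from the residual liquidation — and checking that they telescope: $m+(M-m)\epsilon_0+(1-\epsilon_0)E[L]+(1-\epsilon_0)(M-m-E[L])=m+(M-m)=M$, so they combine into the single leading term $MS(0,0)$. Gathering the surviving terms reproduces \eqref{BM_ECF_eq} exactly. I expect no genuine obstacle; the only points requiring care are the independence of $L$ from the path (needed for the factorizations above) and keeping the signs in the market-depth terms $(K+\cdot)^{-}$ straight across the three different order sizes $m$, $M-m$, and $M-m-L$.
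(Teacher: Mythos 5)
Your proposal is correct and follows essentially the same route as the paper: expand $ECF$ using the supply curve, split $S(T,0)$ into $S(0,0)$ plus the increment, substitute the closed forms $E[S(T,0)\mathcal{I}(\tau\gtrless T)]$ and $P(\tau>T)$ from Lemma~\ref{BM_Lem}, and collect terms (the paper likewise factorizes $E[(M-m-L)S(T,0)\mathcal{I}(\tau<T)]$ as a product, implicitly using the independence of $L$ from the price path that you correctly flag). Your telescoping check that the $S(0,0)$ coefficients sum to $M$ is a nice explicit verification of the bookkeeping the paper performs silently.
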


\begin{proof}

By definition of $S(t,-m)$, $ECF(y,m)$ can be rewritten as follows:

\begin{equation}
	\begin{aligned}
		ECF(y,m)&=m (S_0-d-\beta (K-m)^{-}-f) \\
&+(M-m)\left\{E[S(T,0)\mathcal{I}(\tau>T)]+P(\tau>T) (-d-\beta(K-M+m)^{-}-f)\right\} \\
&+E[L](S(0,0)+y+r)P(\tau<T)-E[(M-m-L)(d+\beta(K-M+m+L)^{-}+f)] P(\tau<T)\\
&+E[(M-m-L)]E[(S(T,0)\mathcal{I} (\tau<T)] .
	\end{aligned}
	\label{BM_ECF:prf_exp1}
\end{equation}

	

From Lemma~\ref{BM_Lem}, note that 
\begin{equation}
	 \begin{aligned}
 	E[S(t,0) \mathcal{I}(\tau >t)] &=S(0,0) P(\tau >t)+ \mu t N\left(\beta_t \right) -(2(y-d)+\mu t)\tilde{N}(-\alpha_t), \\
 	E[S(t,0) \mathcal{I}(\tau < t)] &=S(0,0) P(\tau <t)+\mu t N\left(-\beta_t \right) +(2(y-d)+\mu t)\tilde{N}(-\alpha_t),\\ 
 	P(\tau >t) &=N(\beta_t)-\tilde{N}(-\alpha_t). 
 \end{aligned}
 	\label{BM_ECF:prf_exp2}
\end{equation}

Then, by plugging expressions in (\ref{BM_ECF:prf_exp2}) into (\ref{BM_ECF:prf_exp1}), we have the final expression as in (\ref{BM_ECF_eq}).

	
\end{proof}
}

Similarly, using Lemma 2, we can calculate the expected cash flow for a geometric Brownian motion model.

{\Red

\begin{lem}\label{ECF_GBM}
 Let us assume that the price process follows geometric Brownian motion, $dS_t=\mu S_t dt + \sigma S_t dB_t $. Then $ECF(y,m)$ can be summarized as follows:
\begin{equation}
 \begin{aligned}
ECF(y,m)&=m (S_0-d-\beta (K-m)^{-}-f) \\
&+(M-m)\left\{S_0 \left( e^{\mu T} N\left( \tilde{\beta}_T \right)-\tilde{N}\left(-\tilde{\alpha}_T \right)  \right)+(N\left( \tilde{\beta}_T \right)- \tilde{N}\left( -\tilde{\alpha}_T \right)
) (-d-\beta(K-M+m)^{-}-f)\right\} \\
&+E[L](S(0,0)+y+r)\left(N\left( -\tilde{\beta}_T \right)+ \tilde{N}\left( -\tilde{\alpha}_T \right)
\right)\\
&-E[(M-m-L)(d+\beta(K-M+m+L)^{-}+f)] \left(N\left( -\tilde{\beta}_T \right)+ \tilde{N}\left( -\tilde{\alpha}_T \right)
\right)\\
&+E[(M-m-L)]S_0 \left( e^{\mu T} N\left( -\tilde{\beta}_T \right)+\tilde{N}\left(-\tilde{\alpha}_T \right) \right),
\end{aligned}	
\label{ECF_GBM:lemEq}
\end{equation}

where 	
\[
a:=\ln \left(\frac{(S_0+y-d)}{S_0}\right),\quad 
\tilde{\alpha}_t= \frac{a+\mu t -\sigma^2t/2 }{\sigma\sqrt{t}},
\quad
\tilde{\beta}_t = \frac{a-\mu t +\sigma^2t/2 }{\sigma\sqrt{t}},
\quad \tilde{N}(-\tilde{\alpha}_t)= e^{\frac{2a\mu}{\sigma^2}+\mu t + a}N(-\tilde{\alpha}_t).
\]
 

\end{lem}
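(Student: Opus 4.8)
The plan is to follow the same route as the proof of Lemma~\ref{ECF_BM}, replacing the Brownian-motion moment formulas by their geometric counterparts from Lemma~\ref{GBM_Lem}. First I would start from the general expected-cash-flow decomposition in~\eqref{ECF_lem3}, which splits $ECF(y,m)$ into (i) the deterministic proceeds $m\,(S(0,-m)-f)$ of the initial market order, (ii) the $\{\tau>T\}$ branch in which the entire residual $M-m$ is liquidated by a terminal market order, and (iii) the $\{\tau<T\}$ branch in which $L=(M-m)\rho$ shares fill at the limit price $S(0,0)+y$ (earning the rebate $r$) while the remainder $M-m-L$ is swept by a terminal market order. Substituting the supply curve $S(t,x)=S(t,0)-d-\beta(K+x)^-$ at the relevant order sizes turns each ``price'' into $S(t,0)$ minus a deterministic spread/impact/fee term, so that every stochastic contribution is carried by $S(T,0)$ and the events $\{\tau>T\}$, $\{\tau<T\}$. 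This yields exactly the intermediate expression analogous to~\eqref{BM_ECF:prf_exp1}.

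The key simplification is that each conditional expectation multiplied by its probability collapses to an unconditional indicator expectation, e.g. $E[\,S(T,-(M-m))-f\mid \tau>T\,]\,P(\tau>T)=E[(S(T,-(M-m))-f)\,\mathcal{I}(\tau>T)]$, whose $S(T,0)$-part is precisely what Lemma~\ref{GBM_Lem} evaluates. I would therefore insert the three identities for $E[S(T,0)\mathcal{I}(\tau>T)]$, $E[S(T,0)\mathcal{I}(\tau<T)]$ and $P(\tau>T)$ from Lemma~\ref{GBM_Lem}, using that $-\tilde\alpha_T=(-a-\mu T+\sigma^2 T/2)/(\sigma\sqrt T)$ and $-\tilde\beta_T=(-a+\mu T-\sigma^2 T/2)/(\sigma\sqrt T)$ so the normal arguments line up with $\tilde\alpha_T,\tilde\beta_T$. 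As in the Brownian case, the $L$-dependent terms are handled by treating the fill proportion $\rho$ as independent of the post-$\tau$ price path, which factorizes the terminal sweep into $E[M-m-L]\cdot E[S(T,0)\mathcal{I}(\tau<T)]$ and keeps the nonlinear impact $\beta(K-M+m+L)^-$ inside the single expectation $E[(M-m-L)(d+\beta(K-M+m+L)^-+f)]$.

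The remaining work is bookkeeping: collect the deterministic coefficients of $P(\tau>T)$ and $P(\tau<T)$, absorb the exponential prefactor $e^{2a\mu/\sigma^2+\mu T+a}$ into the shorthand $\tilde N(-\tilde\alpha_T)$, and group the $e^{\mu T}$-weighted normal terms coming from $S_0\,E[e^{X_T}\mathcal{I}(\cdot)]$ (with $X_T=\ln(S(T,0)/S_0)$) to reach~\eqref{ECF_GBM:lemEq}. The main obstacle I anticipate is organizational rather than conceptual: one must keep the four blocks (the $m$-block, the $(M-m)$-block, the $E[L]$-block, and the $E[M-m-L]$-block) separate while substituting, and be careful that the exponential factors attached to the running-maximum reflection term match the definition of $\tilde N$ consistently across the probability, the $\{\tau>T\}$ expectation, and the $\{\tau<T\}$ expectation, since these three quantities carry slightly different exponents before being rewritten in the compact $\tilde N$ notation. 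Once the substitution is carried out block by block, the stated identity falls out by direct collection of terms, with no probabilistic input beyond Lemma~\ref{GBM_Lem}.
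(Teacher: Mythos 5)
Your proposal matches the paper's proof: the paper likewise writes $ECF(y,m)$ in the intermediate form obtained by substituting the supply curve and converting conditional expectations times probabilities into indicator expectations, and then plugs in the three identities $E[S(T,0)\mathcal{I}(\tau>T)]$, $E[S(T,0)\mathcal{I}(\tau<T)]$, and $P(\tau>T)$ from Lemma~\ref{GBM_Lem} before collecting terms. Your additional remarks on the independence of the fill proportion from the price path and on keeping the exponential prefactors consistent with the $\tilde N$ shorthand are exactly the implicit bookkeeping the paper relies on.
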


\begin{proof}
By definition of $S(t,-m)$, $ECF(y,m)$ can be rewritten as follows:
\begin{equation}
	\begin{aligned}
ECF(y,m)&=m (S_0-d-\beta (K-m)^{-}-f) \\
&+(M-m)\left\{E[S(T,0)\mathcal{I}(\tau>T)]+P(\tau>T) (-d-\beta(K-M+m)^{-}-f)\right\} \\
&+E[L](S(0,0)+y+r)P(\tau<T)-E[(M-m-L)(d+\beta(K-M+m+L)^{-}+f)] P(\tau<T)\\
&+E[(M-m-L)]E[(S(T,0)\mathcal{I} (\tau<T)] .
\end{aligned}
\label{ECF_GBM_prf_expression1}
\end{equation}

	

From Lemma~\ref{GBM_Lem}, note that 

\begin{equation}
	 \begin{aligned}
 	E[S(T,0) \mathcal{I}(\tau >T)] &=S_0 \left( e^{\mu T} N\left( \tilde{\beta}_T \right)-\tilde{N}\left(-\tilde{\alpha}_T \right)  \right), \\ 
 	E[S(T,0) \mathcal{I}(\tau < T)] &=S_0 \left( e^{\mu T} N\left( -\tilde{\beta}_T \right)+\tilde{N}\left(-\tilde{\alpha}_T \right) \right)  ,\\ 
 	P(\tau >T)&=N\left( \tilde{\beta}_T \right)- \tilde{N}\left( -\tilde{\alpha}_T \right),
 \end{aligned}
 \label{ECF_GBM_prf_expression2}
\end{equation}

Then, by plugging expressions in  (\ref{ECF_GBM_prf_expression2}) into (\ref{ECF_GBM_prf_expression1}), we have the final expression as in (\ref{ECF_GBM:lemEq}).

	
\end{proof}

}

\subsection{case analysis: a constant $\rho$ with a Brownian motion Model}
Let us recall that { $L=(M-m)\rho$ }, $\rho\in [0,1]$, where $\rho$ is the proportion of the execution such that $L=(M-m)\rho$. In this subsection, we work on the simple case when $\rho$ is a constant. 
{

\begin{remk}
	In practical situation, a constant  $\rho$ is an unrealistic assumption. A more realistic case for $\rho$ is a random model, such as $\rho=\frac{1}{2} + \frac{2}{\pi}\arctan(X)$ where $X$ follows a normal distribution, which could depend on $y$ and $m$. Or, $\rho$ could be a function of flows and time where we understand arrivals of orders as some counting process. However, for simplicity, we consider $\rho$ as a constant in this paper. \\
\end{remk}

}

In the following Lemma, we introduce the ECF (Expected Cash Flow) function when the price process follows a Brownian motion. 

\begin{lem}\label{lem:bm:ECF}
Let us assume that the price process follows Brownian motion, $S(t,0)=S(0,0)+\mu t +\sigma B_t$, where $B_t$ is a standard Brownian motion. 
When  { $L=(M-m)\rho$ }, $ECF(y,m)$ in   (\ref{BM_ECF_eq}) can be expressed as follows:

\begin{align*}
ECF(y,m)=
&M \left(S_0 -2\rho(y-d)\tilde{N}(-\alpha_{T})-(d+f-\mu {T})(1-\rho+\rho\epsilon_0)+(1-\epsilon_0)\rho(y+r)\right)\\
&-m\left( \rho(1-\epsilon_0)(d+f+y+r)-2\rho (y-d)\tilde{N}(-\alpha_{T}) +\mu {T}(1-\rho+\rho\epsilon_0) \right)\\
&-\beta\left[m (K-m)^{-}+(M-m)\left(\epsilon_0  (K-M+m)^{-}+(1-\epsilon_0)(1-\rho)(K-(M-m)(1-\rho))^{-}\right) \right],
\end{align*}
%
where the notations  $\alpha_T$, $\beta_T$, $\epsilon_0$ are from Remark~\ref{Remk:notation_alphas}.

\end{lem}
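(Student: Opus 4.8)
The plan is to treat this as a purely algebraic rearrangement of the expected cash flow formula (\ref{BM_ECF_eq}) established in Lemma~\ref{ECF_BM}; no new probability is involved. The only input is the constant-$\rho$ assumption, which replaces the random quantities by deterministic ones: since $L=(M-m)\rho$ with $\rho$ constant, we have $E[L]=(M-m)\rho$, $M-m-E[L]=(M-m)(1-\rho)$, and, crucially, $K-M+m+L=K-(M-m)(1-\rho)$, so the random argument of the market-depth function becomes deterministic and can be pulled outside the expectation. First I would substitute these three relations into every occurrence of $L$ and $E[L]$ in (\ref{BM_ECF_eq}).

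Next I would split the resulting expression into two groups: the liquidity-cost (market-depth) terms, namely those carrying the factor $\beta$, and the remaining drift/price/fee/rebate terms. The $\beta$-group is read off directly after the substitution: the initial market order contributes $-\beta m(K-m)^{-}$, the $\tau>T$ branch contributes $-\beta(M-m)\epsilon_0(K-M+m)^{-}$ (using $P(\tau>T)=\epsilon_0$), and the $\tau<T$ branch contributes $-\beta(M-m)(1-\epsilon_0)(1-\rho)\bigl(K-(M-m)(1-\rho)\bigr)^{-}$. Collecting these under a common $-\beta[\cdots]$ already matches the last line of the claimed identity, so no further work is needed there.

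The substantive bookkeeping is in the non-$\beta$ group. Here I would collect everything multiplying $(M-m)$ into a single coefficient $C$, handling the drift/tail terms and the fee/rebate terms separately. For the drift/tail terms, the $\tilde{N}(-\alpha_T)$ contributions from the $\tau>T$ and $\tau<T$ branches partially cancel, leaving $-\rho\bigl(2(y-d)+\mu T\bigr)\tilde{N}(-\alpha_T)$, while the combination $\mu T\,N(\beta_T)+(1-\rho)\mu T\,N(-\beta_T)$ is simplified using $N(\beta_T)+N(-\beta_T)=1$ together with $N(\beta_T)=\epsilon_0+\tilde{N}(-\alpha_T)$ (which follows from $\epsilon_0=N(\beta_T)-\tilde{N}(-\alpha_T)$ in Remark~\ref{Remk:notation_alphas}); after the $\rho\mu T\,\tilde{N}(-\alpha_T)$ pieces cancel this collapses to $\mu T(1-\rho+\rho\epsilon_0)-2\rho(y-d)\tilde{N}(-\alpha_T)$. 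For the fee/rebate terms, combining $-\epsilon_0(d+f)$ with $(1-\epsilon_0)[\rho(y+r)-(1-\rho)(d+f)]$ yields $(1-\epsilon_0)\rho(y+r)-(d+f)(1-\rho+\rho\epsilon_0)$, so that $C=-(d+f-\mu T)(1-\rho+\rho\epsilon_0)-2\rho(y-d)\tilde{N}(-\alpha_T)+(1-\epsilon_0)\rho(y+r)$.

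Finally I would rewrite the non-$\beta$ group as $MS_0+m(-d-f)+(M-m)C=M(S_0+C)-m(C+d+f)$. The coefficient $S_0+C$ is exactly the $M$-bracket in the statement, and a one-line simplification of $C+d+f$ — in which $-(d+f)(1-\rho+\rho\epsilon_0)+(d+f)=(d+f)\rho(1-\epsilon_0)$ — produces the $m$-bracket $\rho(1-\epsilon_0)(d+f+y+r)-2\rho(y-d)\tilde{N}(-\alpha_T)+\mu T(1-\rho+\rho\epsilon_0)$. Assembling the $M$-bracket, the $m$-bracket, and the $\beta$-bracket gives the claimed formula. The main obstacle is organizational rather than conceptual: keeping the drift weight $\mu T$, the tail weight $\tilde{N}(-\alpha_T)$, and the fee/rebate constants correctly aligned while the identities $N(\beta_T)+N(-\beta_T)=1$ and $\epsilon_0=N(\beta_T)-\tilde{N}(-\alpha_T)$ are applied, since it is precisely their interplay that manufactures the recurring factor $1-\rho+\rho\epsilon_0$.
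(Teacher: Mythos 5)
Your proposal is correct and follows exactly the route the paper takes: the paper's proof is the single sentence that the result follows from equation (\ref{BM_ECF_eq}) of Lemma~\ref{ECF_BM} with $L=(M-m)\rho$, and your write-up simply carries out that substitution and the ensuing algebra (including the key identities $N(\beta_T)+N(-\beta_T)=1$ and $N(\beta_T)=\epsilon_0+\tilde{N}(-\alpha_T)$ that produce the factor $1-\rho+\rho\epsilon_0$) in full detail. All the intermediate coefficients you compute check out against the stated $M$-, $m$-, and $\beta$-brackets.
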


\begin{proof}
	{\Red This result is from the equation (\ref{BM_ECF_eq}) of  Lemma~\ref{ECF_BM} with $L=(M-m)\rho$.
	}
\end{proof}

{
Using this Lemma~\ref{lem:bm:ECF}, we now study the optimal placement strategy.
Next two theorems, Theorem~\ref{thm:single:muNeg} and Theorem~\ref{thm:single:muPos} give us optimal strategies when $\mu<0$ and $\mu>0$, respectively. We find the optimal $(y^*,m^*)$, which means that the optimal strategy at time $0$ is to place a market order of quantity $m^*$, and place remaining $M-m^*$ orders by a limit order at the price of $S(0,0)+y^*$. 
}


\begin{thm}\label{thm:single:muNeg}
	Let $(y^*, m^*)$ be the optimal strategy satisfying

$$ECF(y^*,m^*)\geq ECF(y,m) \quad \forall y>0,\quad  0\leq m\leq M. $$

{\Red Under the cash flow model from  Lemma~\ref{lem:bm:ECF}, let $\mu<0$. Then $y^*=d$. The behavior of $m^*$ is as follows:}
\begin{enumerate}
	\item If $\rho=1$, $m^*=0$.
	\item If $\rho<1$ and $K\geq M$,
		\begin{itemize}
			\item $m^*=0$ when $\rho(2d+r+f)+\mu t(1-\rho)>0$
			\item $m^*=M$ when $\rho(2d+r+f)+\mu t(1-\rho)<0$
		\end{itemize}
	\item If $\rho<1$ and $K< M$,	
	\item[] $m^*$ is one of $\{0,K, M-K/(1-\rho), M, m^*_1, m^*_2, m^*_3\}$ where  
\end{enumerate}
$$ 
m^*_1=K/2-\epsilon_3/2\beta, \quad
m^*_2=M-\frac{K}{2(1-\rho)}-\frac{\epsilon_3}{2(1-\rho)^2\beta},
\quad 
m^*_3=\frac{2M\beta (1-\rho)^2 +\beta\rho K-\epsilon_3}{2\beta(1+(1-\rho)^2)},
$$ 

$$\epsilon_3:=\rho(2d+r+f)+\mu t(1-\rho).
$$

%
%

\end{thm}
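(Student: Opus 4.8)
The plan is to decouple the two-dimensional maximization into a reduction step that pins down $y^*=d$, followed by a one-dimensional optimization over $m$. First I would fix $m\in[0,M]$ and show that, when $\mu<0$, the map $y\mapsto ECF(y,m)$ is non-increasing on $[d,\infty)$, so the supremum over $y$ is attained at the left endpoint $y=d$. Differentiating the expression in Lemma~\ref{lem:bm:ECF} with respect to $y$ requires the derivatives of $\tilde N(-\alpha_T)$, $N(\beta_T)$ and $\epsilon_0$, where $\partial_y\alpha_T=\partial_y\beta_T=1/(\sigma\sqrt T)$. The computation is made tractable by the identity $e^{2(y-d)\mu/\sigma^2}\phi(\alpha_T)=\phi(\beta_T)$, which follows from $\alpha_T^2-\beta_T^2=4(y-d)\mu/\sigma^2$; this collapses the $\phi$-contributions in $\partial_y\tilde N(-\alpha_T)$ and $\partial_y\epsilon_0$ and leaves expressions whose sign can be read off. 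I expect the surviving terms to be manifestly non-positive for $\mu<0$, because raising the limit price simultaneously lowers the execution probability $1-\epsilon_0$ and defers the unexecuted inventory to a strictly worse expected liquidation price under the downward drift.

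Second, I would evaluate the ECF at $y=d$. Here $\alpha_T=\mu\sqrt T/\sigma$, $\beta_T=-\alpha_T$ and $\tilde N(-\alpha_T)=N(-\alpha_T)=N(\beta_T)$, so $\epsilon_0=0$; all $\tilde N$ and $(y+r)$ cross-terms simplify and $ECF(d,m)$ reduces to $C(M)-m\,\epsilon_3-\beta\,g(m)$, where $\epsilon_3=\rho(2d+r+f)+\mu T(1-\rho)$ is exactly the bracket multiplying $-m$, and $g(m)=m(K-m)^- + (M-m)(1-\rho)\bigl(K-(M-m)(1-\rho)\bigr)^-$. The first two stated cases then fall out from the structure of $g$: when $\rho=1$ the second summand of $g$ vanishes and $\epsilon_3=2d+r+f>0$, so $ECF(d,\cdot)$ is non-increasing and $m^*=0$; when $\rho<1$ and $K\ge M$ both summands of $g$ vanish identically on $[0,M]$, leaving the affine function $C(M)-m\epsilon_3$, whose maximizer is the corner $0$ or $M$ according to $\mathrm{sign}(\epsilon_3)$.

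For the remaining case $\rho<1$, $K<M$, I would exploit that $g$ is piecewise quadratic in $m$ with breakpoints at $m=K$ (where the first summand activates) and $m=M-K/(1-\rho)$ (where the second activates). On the three regions where, respectively, only the first, only the second, or both summands are active, $ECF(d,\cdot)$ is a concave (downward) parabola whose unconstrained stationary points are exactly $m_1^*$, $m_2^*$, $m_3^*$; setting the relevant derivative to zero and solving reproduces the stated formulas, with $m_3^*$ arising from the quadratic term $-\beta\bigl(1+(1-\rho)^2\bigr)m^2$. Since a continuous piecewise-quadratic function on a compact interval attains its maximum either at a stationary point of one of its pieces or at a breakpoint or endpoint, the global maximizer must lie in $\{0,K,M-K/(1-\rho),M,m_1^*,m_2^*,m_3^*\}$, which is the claim. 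The main obstacle is the first step: organizing the $y$-derivative of the full ECF, including the $\beta$-penalty, whose only $y$-dependence enters through $\epsilon_0$ via the factor $(K-M+m)^- - (1-\rho)\bigl(K-(M-m)(1-\rho)\bigr)^-$, and checking that this term does not spoil the global non-positivity of $\partial_y ECF$ for $\mu<0$.
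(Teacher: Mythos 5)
Your proposal is correct and follows essentially the same route as the paper's proof: first showing $\partial ECF/\partial y\le 0$ on $y\ge d$ when $\mu<0$ (the paper writes the derivative as three explicitly signed terms involving $\gamma<0$, while you organize the same computation via the identity $e^{2(y-d)\mu/\sigma^2}\phi(\alpha_T)=\phi(\beta_T)$), and then analyzing $ECF(d,m)=C(M)-m\epsilon_3-\beta g(m)$ as a piecewise quadratic in $m$ with breakpoints at $K$ and $M-K/(1-\rho)$, exactly reproducing the paper's four-case enumeration of $m_1^*,m_2^*,m_3^*$ and the corner/breakpoint candidates. Your observation that $\epsilon_0=0$ at $y=d$ and your ``stationary point, breakpoint, or endpoint'' framing are slightly cleaner statements of what the paper does, but they are not a different argument.
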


}

\begin{proof}
\noindent\textbf{Part 1: Behavior of $y^*$: }
	Note that
	
\begin{align}
\label{ECF_rho_der_y.1}
\frac{\frac{\partial ECF}{\partial y}}{ (M-m)} &=
\rho(N(-\beta_t)-\tilde{N}(-\alpha_t))\\
\label{ECF_rho_der_y.2}
&+\frac{2e^{\frac{2(y-d)\mu}{\sigma^2}} }{\sigma\sqrt{t}}
\left( \phi (\alpha_t ) -\alpha_t N(-\alpha_t)\right)
\left(-\rho (2d+f+r-\mu t)+ \beta \gamma \right) \\
\label{ECF_rho_der_y.3}
&+\tilde{N}(-\alpha_t)\frac{2(y-d)}{\sigma^2 t}\left(-\rho (2d+f+r)+ \beta \gamma \right) 
\end{align}

$$\gamma:=\min(K-M+m,0)-(1-\rho)\min (K-(M-m)(1-\rho),0)\quad \quad  (note: \gamma<0)
$$
When $\mu<0$, all three lines (\ref{ECF_rho_der_y.1}), (\ref{ECF_rho_der_y.2}), (\ref{ECF_rho_der_y.3}) are negative, so $y^*=d$.\\

\noindent\textbf{Part 2: Behavior of $m^*$: }
 We've already shown that when $\mu<0$, $y^*=d$. Therefore,  $ECF(y=y^*, m)$ can be simplified as following:

\begin{gather}
	\begin{aligned}
	ECF(y=d,m)=&M (S_0+\rho(d+r)-(1-\rho)(d+f-\mu t))-m\epsilon_3 +m\beta \min (K-m,0)\\
	&+\beta (M-m)(1-\rho) \min(K-(M-m)(1-\rho),0),
\end{aligned}\label{ecf:thm1p2c2}
\end{gather}

where $\epsilon_3:=\rho(2d+r+f)+\mu t(1-\rho)$. We will divide the range of $m$ to four cases to obtain the expression of $ECF(y,m)$ and to obtain $m^*$ which maximizes $ECF$. 

\begin{enumerate}
	\item $K-(M-m)(1-\rho)\geq 0$, $K-m\geq 0$
	\item[] In this case, (\ref{ecf:thm1p2c2}) becomes a decreasing function of $m$ if $\epsilon_3>0$ and increasing function of $m$ if $\epsilon_3>0$. Note that if $K\geq M$, for any $m$ the conditions ($K-(M-m)(1-\rho)\geq 0$, $K-m\geq 0$) are satisfied. Therefore, if $K\geq M$, $m^*=0$ if $\epsilon_3>0$ and $m^*=M$ if $\epsilon_3<0$. This includes the case when $\rho=1$: when $\rho=1$, $\epsilon_3=2d+r+f>0$, so $m^*=0$.
	\item  $K-(M-m)(1-\rho)\geq 0$ (i.e. $m\geq M-K/(1-\rho)$), $K-m< 0$
	\item[] In this case, maximum of (\ref{ecf:thm1p2c2}) is obtained at $m_1^*=K/2-\epsilon_3/2\beta$. 
	\item $K-(M-m)(1-\rho)< 0$ (i.e. $m< M-K/(1-\rho)$), $K-m\geq 0$
	\item[] In this case, maximum of (\ref{ecf:thm1p2c2}) is obtained at
	$m_2^*=M-\frac{K}{2(1-\rho)}-\frac{\epsilon_3}{2(1-\rho)^2\beta}$. 
	\item $K-(M-m)(1-\rho)< 0$ (i.e. $m< M-K/(1-\rho)$), $K-m< 0$
	\item[] In this case, maximum of (\ref{ecf:thm1p2c2}) is obtained at
	$m_3^*=\frac{2M\beta (1-\rho)^2 +\beta \rho K-\epsilon_3}{2\beta(1+(1-\rho)^2)}$. 
\end{enumerate}

To summarize, when $K\geq M$, $m^*=0$ or $M$ depending on the sign of $\epsilon_3$. If not, the maximum will be obtained at one of $\{0, m_1^*, m_2^*, m_3^*, M, K, M-K/(1-\rho)\}$. 

\end{proof}

\begin{figure}
	\includegraphics[width=0.495\textwidth]
	{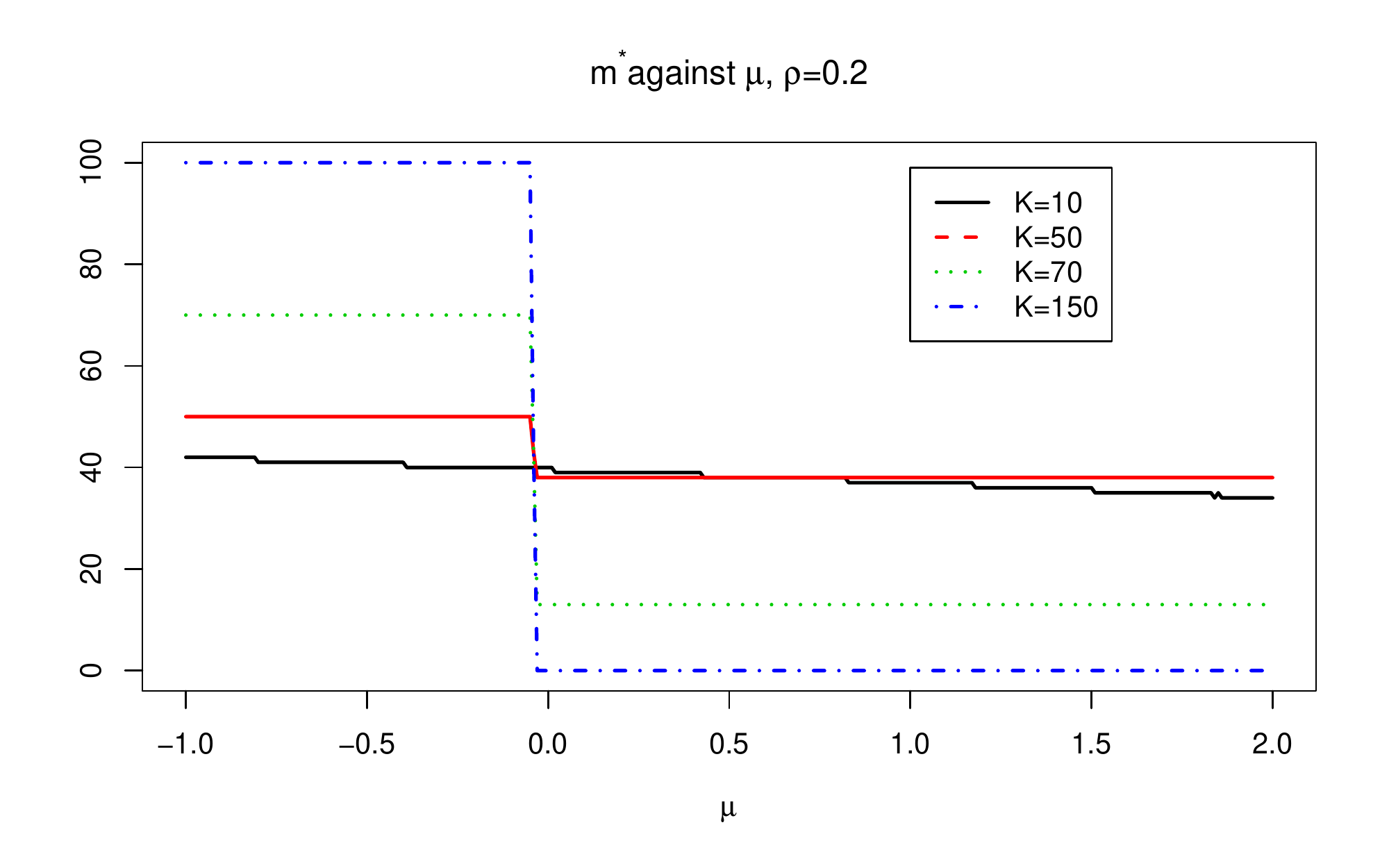}
	\includegraphics[width=0.495\textwidth]
	{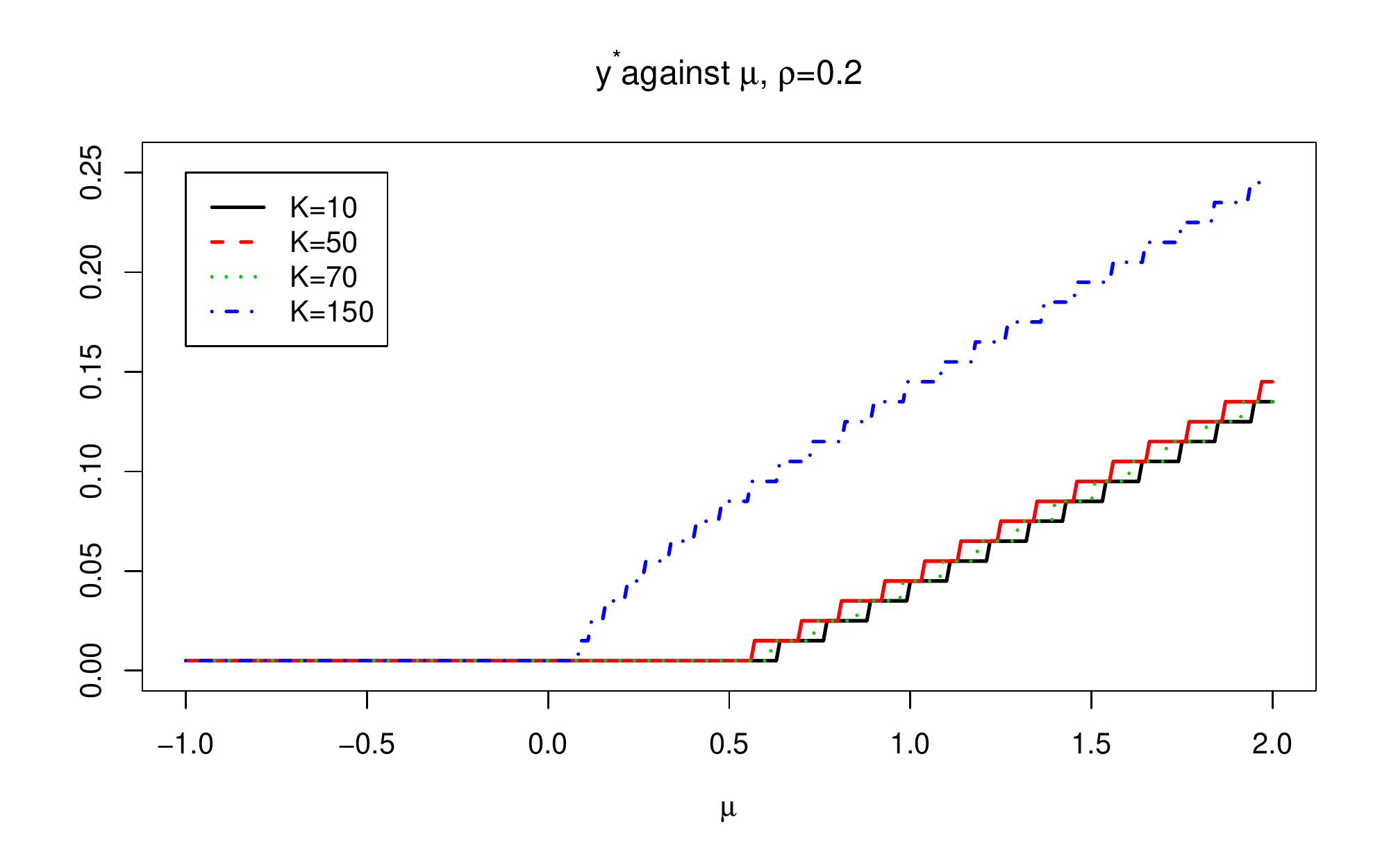}
	
	\caption{$m^*$ (Left) and $y^*$ (Right) against $\mu$ for $K=10$ (black solid line), $K=50$ (red dashed line), $K=70$ (green dotted line), $K=150$ (blue dot-dashed line). $\rho=0.2, \beta=0.01, M=100, r=0.003, f=0.003, \epsilon=0.01, T=0.1, S_0=100, d=\epsilon/2=0.005, \sigma=0.1$. }
	\label{mystar_againstMu_diffK}
\end{figure}

{
\par Theorem~\ref{thm:single:muNeg} shows that when $\mu<0$, $y^*=d$, which means that the investor may put a limit order at the best bid price. This is shown in behavior of $y^*$ in the right panel of Figure~\ref{mystar_againstMu_diffK}, $y^*=d$ when $\mu<0$. The left panel of Figure~\ref{mystar_againstMu_diffK} describes behavior of $m^*$, the optimal market order size at $t=0$. As shown in Theorem~\ref{thm:single:muNeg}, when $\mu<0$    $m^*=m^*_2$ when $K=10$, and $m^*=K$ for $K=50$, $70$, and $m^*=M$ when $K=150$. The behavior of $y^*$ and $m^*$ when $\mu>0$ is described in the following Theorem~\ref{thm:single:muPos}. In this theorem, we first introduce $T_0$, the threshold for the time horizon\footnote{Note that this work is about single-step. $T$ is not the number of the time steps, but it is just the length of time for the single step. } $T$ such that if the investor's time horizon  $T$  is bigger than $T_0$, then $y^*>d$. We also provide lower bound for $T_0$ and stepwise-linear approximation of $y^*$ as a function of $T$. 
}

{

\begin{thm}\label{thm:single:muPos}
	Let $(y^*, m^*)$ be defined as in Lemma~\ref{lem:bm:ECF}. Then, when $\mu>0$,

	$y^*>d$  for $T>T_0$. The lower bound of $T_0$ is that $$T_0>\frac{2d+f+r}{\mu} .$$ 

	Also, for $T>T_0$, as $T\searrow T_0 $, the first-order approximation for $y^*$ is given as:
	{
	$$ y^*(T)= d+ \kappa (T-T_0) + o ((T-T_0)^2),$$
	
	where

	$$\kappa:=\left(- \frac{\frac{\partial^2 ECF}{\partial T \partial y } (d, T_0 )}{\frac{\partial^2 ECF}{\partial y^2 } (d, T_0 )} \right). $$
	}
	The details of $\kappa$ are given in (\ref{kappa}).
	In addition, there exists $m^*$ as follows:
	%

\begin{enumerate}
	\item If $K\geq M,$  $m^*=0$.
	\item If $K< M$, $m^*$ is one of $\{0, M, m_4^*, m^*_5, m^*_6, M, |K-M|, M-K/(1-\rho) \}$
\end{enumerate}

where
$$
m^*_4=\frac{\epsilon_0}{(1+\epsilon_0)}M +\frac{(1-\epsilon_0)}{2(1+\epsilon_0)}K-\frac{\epsilon_4}{2(1+\epsilon_0)\beta},
$$

$$m^*_5=M-\frac{\beta(\epsilon_0+(1-\epsilon_0)(1-\rho))K+\epsilon_4}{2\beta(\epsilon_0+(1-\epsilon_0)(1-\rho)^2)}, 
\quad 
m^*_6=\frac{\beta (2M(\epsilon_0+(1-\rho)^2 (1-\epsilon_0)) +K(1-\epsilon_0)\rho)-\epsilon_4}{2\beta (1+\epsilon_0+(1-\rho)^2 (1-\epsilon_0))}.$$

$$\epsilon_4:=\left( \rho(1-\epsilon_0)(d+f+y+r)-2\rho (y-d)\tilde{N}(-\alpha_{T}) +\mu {T}(1-\rho+\rho\epsilon_0) \right), \quad
\epsilon_0= \left(N\left(\beta_{T} \right)-\tilde{N}\left(-\alpha_{T}\right) \right).$$
\end{thm}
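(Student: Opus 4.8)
The plan is to establish the three assertions in turn, reusing the $y$-derivative of the cash flow computed in the proof of Theorem~\ref{thm:single:muNeg} (the sum of lines (\ref{ECF_rho_der_y.1})--(\ref{ECF_rho_der_y.3})) together with the closed form of $ECF$ from Lemma~\ref{lem:bm:ECF}, with the notation of Remark~\ref{Remk:notation_alphas}. For the behavior of $y^*$, I would evaluate $\partial ECF/\partial y$ at the left boundary $y=d$ of the admissible set $y\ge d$. There $y-d=0$ annihilates line (\ref{ECF_rho_der_y.3}), and $\alpha_T=-\beta_T=\mu\sqrt{T}/\sigma$ with $\tilde N(-\alpha_T)=N(-\alpha_T)$, so (per unit $M-m$) the boundary derivative collapses to $g(T):=\rho\,(2N(\alpha_T)-1)+\tfrac{2}{\sigma\sqrt{T}}\,(\phi(\alpha_T)-\alpha_T N(-\alpha_T))\,(\rho\mu T-\rho(2d+f+r)+\beta\gamma)$. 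First I would record that for $\mu>0$ both $\rho(2N(\alpha_T)-1)>0$ and the Mills-type factor $\phi(\alpha_T)-\alpha_T N(-\alpha_T)>0$. Examining the two limits, as $T\to0^+$ the prefactor $2/(\sigma\sqrt T)$ blows up while the bracket tends to $-\rho(2d+f+r)+\beta\gamma<0$, giving $g(T)\to-\infty$; as $T\to\infty$ the second term vanishes and $g(T)\to\rho>0$. Continuity and the intermediate value theorem then yield a threshold $T_0$ with $g(T_0)=0$ and $g>0$ for $T>T_0$; since $ECF$ decays as $y\to\infty$ (a deep limit order is essentially never executed), a positive boundary derivative rules out $y^*=d$ and forces an interior maximizer, i.e. $y^*>d$.

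For the lower bound on $T_0$, the balance $g(T_0)=0$ together with the positivity of the first term forces line (\ref{ECF_rho_der_y.2}) to be negative at $T_0$, i.e. $\rho\mu T_0-\rho(2d+f+r)+\beta\gamma<0$; with $\gamma<0$ and $\beta>0$ this already yields the upper estimate $\mu T_0<(2d+f+r)-\beta\gamma/\rho$. The claimed inequality $\mu T_0>2d+f+r$ I would obtain separately, by showing $g\big((2d+f+r)/\mu\big)\le 0$: at $T=(2d+f+r)/\mu$ the drift cancels the fixed costs, so $g$ reduces to $\rho(2N(\alpha_T)-1)+\tfrac{2}{\sigma\sqrt T}(\phi(\alpha_T)-\alpha_T N(-\alpha_T))\beta\gamma$, and one must argue the negative liquidity contribution dominates the positive Gaussian term. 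This Mills-ratio sign comparison is the main obstacle of Part~1, and it is exactly where the liquidity parameters enter: when $\gamma<0$ (equivalently $K<M$, so a market order bites into the book) the inequality holds, whereas without liquidity cost the crossing would lie below $(2d+f+r)/\mu$.

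For the first-order expansion of $y^*$ near $T_0$, I would set $F(y,T):=\partial ECF/\partial y$ (for fixed $m$) and observe that by construction $F(d,T_0)=0$. Assuming $F$ is $C^1$ and the second-order condition $\partial_y F(d,T_0)=\partial^2 ECF/\partial y^2\,(d,T_0)\neq0$ (strictly negative, as expected at a maximizer just leaving the boundary), the implicit function theorem produces a $C^1$ branch $y^*(T)$ with $y^*(T_0)=d$ and $\tfrac{dy^*}{dT}(T_0)=-\partial^2_{Ty}ECF/\partial^2_{yy}ECF=\kappa$, and a first-order Taylor expansion gives $y^*(T)=d+\kappa(T-T_0)+o(T-T_0)$, the stated formula; the two second partials are then computed explicitly to produce the displayed $\kappa$ in (\ref{kappa}). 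The obstacle here is verifying non-degeneracy of $\partial^2_{yy}ECF$ at $(d,T_0)$ and the careful bookkeeping of the derivatives of $\epsilon_0$ and $\tilde N(-\alpha_T)$.

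Finally, for $m^*$, I would hold $y=y^*$ and note that the only $m$-dependence of $ECF$ in Lemma~\ref{lem:bm:ECF} is the affine term $-m\,\epsilon_4$ (since $\epsilon_0$, and hence $\epsilon_4$, depends on $(y,T)$ but not on $m$) plus the bracketed liquidity term, which is continuous and piecewise-quadratic in $m$ with kinks where the arguments $K-m$, $K-M+m$, and $K-(M-m)(1-\rho)$ vanish, namely at $m=K$, $m=M-K=|K-M|$, and $m=M-K/(1-\rho)$. When $K\ge M$ all three kinks lie outside $[0,M]$ and every $(\cdot)^-$ term vanishes, so $ECF$ is affine with slope $-\epsilon_4$; I would then show $\epsilon_4>0$ for $\mu>0$ (using $0<\epsilon_0<1$ and, if needed, the first-order condition satisfied by $y^*$) to conclude $m^*=0$. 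When $K<M$ the objective is genuinely piecewise-quadratic, and maximizing on each region by solving $\partial ECF/\partial m=0$ yields the interior vertices $m_4^*,m_5^*,m_6^*$; since a continuous piecewise-quadratic function on $[0,M]$ attains its maximum at a vertex, a kink, or an endpoint, the optimizer lies in the stated finite candidate set. The delicate point of this part is the sign verification $\epsilon_4>0$, which is what pins $m^*=0$ in the liquid regime.
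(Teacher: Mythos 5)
Your overall architecture matches the paper's: evaluate $\partial ECF/\partial y$ at the boundary $y=d$, use the implicit function theorem plus a Taylor/mean-value expansion to get $\kappa=-\partial^2_{Ty}ECF/\partial^2_{yy}ECF$ at $(d,T_0)$, and treat $ECF(\cdot,m)$ as an affine term $-m\epsilon_4$ plus a continuous piecewise-quadratic liquidity term whose maximum lies at an interior vertex, a kink ($m=K$, $|K-M|$, $M-K/(1-\rho)$) or an endpoint; your sign argument for $\epsilon_4>0$ via $N(-\beta_T)-\tilde N(-\alpha_T)>0$ is exactly the paper's. However, there are two genuine gaps in your Part 1. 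First, the intermediate value theorem applied to $g(T)$ gives you a zero of $g$, but not the conclusion ``$g>0$ for all $T>T_0$'': that requires either defining $T_0$ as the last zero or, as the paper does, explicitly computing $\frac{\partial}{\partial T}\bigl(\partial ECF/\partial y\big|_{y=d}\bigr)$ and verifying it is positive (the two displayed lines (\ref{ECF_rho_der_y_d.3})--(\ref{ECF_rho_der_y_d.4})), which is the actual content of the existence-and-uniqueness claim. You should supply this monotonicity computation rather than leave it to the IVT.

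Second, and more seriously, your proposed route to the lower bound $T_0>(2d+f+r)/\mu$ --- showing $g\bigl((2d+f+r)/\mu\bigr)\le 0$ --- is both left unproven (you flag it as ``the main obstacle'') and false in general. When $K\ge M$ one has $\gamma=0$, and at $T=(2d+f+r)/\mu$ your $g$ reduces to $\rho\,(2N(\alpha_T)-1)>0$, so the crossing lies strictly \emph{below} $(2d+f+r)/\mu$; the Mills-ratio domination you hope for simply does not hold in that regime. The paper does not attempt this comparison at all: it reads the bound off the sufficient condition $\mu T>2d+f+r+(-\beta\gamma)/\rho$ for positivity of the term (\ref{ECF_rho_der_y_d.2}), noting $\gamma<0$ so that $-\beta\gamma/\rho\ge 0$ only pushes the sufficient threshold upward, and explicitly discards the $m$-dependent correction. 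In other words, the quantity bounded below by $(2d+f+r)/\mu$ is the threshold of that sufficient condition, not a quantity you can certify by evaluating $g$ at a single point. If you want to keep your direct evaluation strategy, you must at minimum restrict to $\gamma<0$ strictly and actually carry out the Mills-ratio estimate, which is where your argument currently stops.
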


}

\begin{proof}
\noindent\textbf{Part 1: Behavior of $y^*$: }First, note that as $y\to \infty$, from (\ref{ECF_rho_der_y.1}), (\ref{ECF_rho_der_y.2}), (\ref{ECF_rho_der_y.3}), $\partial ECF / \partial y\to 0^{-}$. Therefore, $y^*<\infty$.
	Next, let's find  $\left.\partial ECF / \partial y\right|_{y=d}$.

\begin{align}
\label{ECF_rho_der_y_d.1}
\left.\frac{\frac{\partial ECF}{\partial y}}{ (M-m)} \right|_{y=d}
&=
\rho\left(N\left(\frac{\mu {T}}{\sigma\sqrt{{T}}}\right)-{N}\left(-\frac{\mu {T}}{\sigma\sqrt{{T}}}\right)\right)\\
\label{ECF_rho_der_y_d.2}
&+\frac{2}{\sigma\sqrt{t}}
\left( \phi \left(-\frac{\mu {T}}{\sigma\sqrt{{T}}}\right) -\frac{\mu {T}}{\sigma\sqrt{{T}}} N\left(-\frac{\mu {T}}{\sigma\sqrt{t}}\right)\right)
\left(-\rho (2d+f+r-\mu {T})+ \beta \gamma \right) 
\end{align}
Note that (\ref{ECF_rho_der_y_d.1})$>0$, and (\ref{ECF_rho_der_y_d.2})$>0$ when 
\begin{equation}\label{bound:muT}
\mu T>2d+f+r+(-\beta \gamma)/\rho.
\end{equation}
Note that $\left.\partial ECF / \partial y\right|_{y=d}>0$  implies that $y^*>d$. Since from the proof of Theorem~\ref{thm:single:muNeg} we've shown that $\gamma<0$, $\mu T>2d+f+r+(-\beta \gamma)/\rho>2d+f+r$, so the lower bound for $T$ to satisfy this condition is $(2d+f+r)/\mu$. (we ignore $-\beta \gamma/\rho$ term since the definition of $\gamma$ contains $m$, so it can be correctly computed after we have the value of $m^*$. )

{

\begin{align}\label{ECF_rho_der_y_d.3}
	\frac{\partial}{\partial t} \left(\left.\frac{\frac{\partial ECF}{\partial y}}{ (M-m)} \right|_{y=d}\right)
	&=
	-\gamma \beta \left(  \frac{1}{\sigma {T} \sqrt{{T}}}(\phi(h_t) - h_t N(-h_{T}) )+\frac{\mu }{\sigma^2 t}N(-h_{T}) \right) \\
	\label{ECF_rho_der_y_d.4}
	+&\frac{\rho}{\sigma {T} \sqrt{{T}}}\bigg( {(2d+f+r) }
		\left( (\phi(h_{T}) - h_t N(-h_{T}) )+ N(-h_{T}) \frac{\mu T}{\sigma \sqrt{T}}\right)+{2\mu T}{} (\phi(h_{T}) - h_{T} N(-h_{T}) )\bigg)
\end{align}

Note that both (\ref{ECF_rho_der_y_d.3}), (\ref{ECF_rho_der_y_d.4}) are positive. This implies that $\left.\frac{\partial ECF}{\partial y}\right|_{y=d}$ is a increasing function of $T$, which supports that for $T_0$ such that $\left.\frac{\partial ECF}{\partial y} \right|_{y=d,T=T_0}=0 $, $\left.\frac{\partial ECF}{\partial y}\right|_{y=d}>0$ for any $T>T_0$. 


Now, we will use the mean value theorem to show the behavior of $y^*$ when $T$ is close to $T_0$. To this end, the following conditions are necessary: $\left|\partial_y^2 ECF\right|$ needs to be strictly positive at $y=d,T=T_0$. 
$\partial_y^2 ECF (y=d,T=T_0)=-2\rho (M-m) \mu N(-h_t)  /\sigma^2 $, so $|\partial_y^2 ECF (y=d,T=T_0)|>0$. Next, since $y^*$ satisfies $\partial_y ECF(y^*)=0$, and thus, by the Implicit Function Theorem, there exists an open set $U$ containing $y=d$, an open set $V$ containing $T=T_0$, and a unique continuously differentiable function $y^*(T)$ such that
$$
\{ (y^*(t),T | T\in V \} = \left\{  (y,T)\in U\times V \left| \frac{\partial ECF}{\partial y} (y,T)=0 \right. \right\}.
$$
In particular, $y^*(T)\to d$ as $T\to T_0$. Furthermore, since $T_0>0$, it is clear that $\partial_y ECF$ is differentiable in a neighborhood of $(y=d,T=T_0)$, and, thus, we can apply the mean value theorem to show that there exists $\delta \in (0,1)$ such that

$$
0=\frac{\partial ECF}{\partial y } (y^*(T),T) = \frac{\partial^2 ECF}{\partial y^2 } (d+\delta (y^* (T)-d), T_0 + \delta (T-T_0) ) y^* (T) + \frac{\partial^2 ECF}{\partial T \partial y } (d+\delta (y^* (T)-d), T_0 + \delta (T-T_0) ) (T-T_0).
$$
Since $\partial^2 ECF / \partial y^2$, $\partial^2 ECF / \partial y \partial t $ are both continuous when $T>0$, and there is an open set containing $(d,T_0)$ such that $\partial^2 ECF / \partial y^2$ is strictly negative and, furthermore,

$$
\frac{y^*(T)-d}{T-T_0}=- \frac{\frac{\partial^2 ECF}{\partial T \partial y } (d+\delta (y^* (T)-d), T_0 + \delta (T-T_0) )}{\frac{\partial^2 ECF}{\partial y^2 } (d+\delta (y^* (T)-d), T_0 + \delta (T-T_0) )} \quad \overrightarrow{T\to T_0} \quad \left(- \frac{\frac{\partial^2 ECF}{\partial T \partial y } (d, T_0 )}{\frac{\partial^2 ECF}{\partial y^2 } (d, T_0 )} \right):=\kappa(T_0),
$$
%
%


\begin{equation}\label{kappa}
\kappa({T_0})= \frac{{3 \phi(-h_{T_0})}\sigma  + \frac{ \sigma^2(N(h_{T_0}) - N(-h_{T_0})) }{2\mu\sqrt{T_0}}+{2 (-h_{T_0} N(-h_{T_0}))} \left(
	-\frac{2d+f+r}{\sqrt{T_0}}+\mu \sqrt{T_0}+\frac{\beta\gamma}{\rho\sqrt{T_0}}	
	+ \sigma \right)}{{2 N(h_{T_0})\sqrt{T_0} }
}.
\end{equation}

}
\begin{remk}
	
	Note that  $\kappa$ contains $\gamma$, which depends on the value of $m$. For the fast approximation, we may use the lower bound of $\kappa$, 
	
	\begin{equation}\label{kappa.lb}\nonumber
	\underline{\kappa}({T_0})= \frac{{3 \phi(-h_{T_0})}\sigma  + \frac{ \sigma^2(N(h_{T_0}) - N(-h_{T_0})) }{2\mu\sqrt{T_0}}+{2 (-h_{T_0} N(-h_{T_0}))} \left(
		-\frac{2d+f+r}{\sqrt{T_0}}+\mu \sqrt{T_0}	
		+ \sigma \right)}{{2 N(h_{T_0})\sqrt{T_0} }
	}.
	\end{equation}
	
\end{remk}

\noindent\textbf{Part 2: Behavior of $m^*$}\\
\par Now, to find $m^*$, let us reorganize $ECF(y,m)$ as following:

\begin{gather}
	\begin{aligned}
ECF(y,m)=
M \epsilon_5-m \epsilon_4 -\beta\bigg[& (M-m)\big(\epsilon_0  (K-M+m)^{-}+(1-\epsilon_0)(1-\rho)(K-(M-m)(1-\rho))^{-}\big)\\
&+m (K-m)^{-} \bigg],
\end{aligned}\label{thm2:mupos:ecf}
\end{gather}

%
%

where

$$\epsilon_4:=\left( \rho(1-\epsilon_0)(d+f+y+r)-2\rho (y-d)\tilde{N}(-\alpha_{T}) +\mu {T}(1-\rho+\rho\epsilon_0) \right), $$ 

$$\epsilon_5:=\left(S_0 -2\rho(y-d)\tilde{N}(-\alpha_{T})-(d+f-\mu {T})(1-\rho+\rho\epsilon_0)+(1-\epsilon_0)\rho(y+r)\right).$$

Note that $\epsilon_5$ does not depend on m, and $\epsilon_4>0$ since it can be reorganized as

$\left( \rho(1-\epsilon_0)(d+f+r)+2\rho (d)\tilde{N}(-\alpha_{T}) +\mu {T}(1-\rho+\rho\epsilon_0) \right) +\rho y (N(-\beta_{T})-\tilde{N}(-\alpha_{T}))$, and $N(-\beta_{T})-\tilde{N}(-\alpha_{T})>0$. 
Now, investigate the behavior of (\ref{thm2:mupos:ecf}) in different range of $m$ to find $m^*$.  

\noindent\textbf{Case 1-1: $(K-m)\geq 0, (K-M+m)\geq 0, (K-(M-m)(1-\rho))\geq 0$.}
In this case, the (\ref{thm2:mupos:ecf}) becomes a linear decreasing function of $m$. When $K\geq M$, for any $m$ this condition is satisfied so $m^*=0$.

\noindent\textbf{Case 1-2: $(K-m)< 0, (K-M+m)\geq 0, (K-(M-m)(1-\rho))\geq 0$}
In this case, (\ref{thm2:mupos:ecf}) attains its maximum at $\frac{K}{2}-\frac{\epsilon_4}{2\beta}$, which is $<K$, so in this range (\ref{thm2:mupos:ecf}) is a decreasing function of $m$. 

\noindent\textbf{Case 2-1: $(K-m)\geq  0, (K-M+m)< 0, (K-(M-m)(1-\rho))\geq 0$}
In this case, (\ref{thm2:mupos:ecf}) attains its maximum at 
$m^*_4=M-K/2-\epsilon_4/2\beta\epsilon_0$,


\noindent\textbf{Case 2-2: $(K-m)<  0, (K-M+m)< 0, (K-(M-m)(1-\rho))\geq 0$}
In this case, (\ref{thm2:mupos:ecf}) attains its maximum at

$m^*_5=\frac{\epsilon_0}{(1+\epsilon_0)}M +\frac{(1-\epsilon_0)}{2(1+\epsilon_0)}K-\frac{\epsilon_4}{2(1+\epsilon_0)\beta}
$


\noindent\textbf{Case 3: $(K-M+m)\geq 0, (K-(M-m)(1-\rho))< 0$}
This condition is contradictory. \\
\noindent\textbf{Case 4-1: $(K-m)\geq  0, (K-M+m)< 0, (K-(M-m)(1-\rho))< 0$}
In this case, (\ref{thm2:mupos:ecf}) attains its maximum at
$m^*_5=M-\frac{\beta(\epsilon_0+(1-\epsilon_0)(1-\rho))K+\epsilon_4}{2\beta(\epsilon_0+(1-\epsilon_0)(1-\rho)^2)}$


\noindent\textbf{Case 4-2: $(K-m)<  0, (K-M+m)<0, (K-(M-m)(1-\rho))< 0$}
%
In this case, (\ref{thm2:mupos:ecf}) attains its maximum at
$m^*_6=\frac{\beta (2M(\epsilon_0+(1-\rho)^2 (1-\epsilon_0)) +K(1-\epsilon_0)\rho)-\epsilon_4}{2\beta (1+\epsilon_0+(1-\rho)^2 (1-\epsilon_0))}$

To summarize, when $K\geq M$, $m^*=0$ . If not, the maximum will be obtained at one of $\{0, M, m_4^*, m^*_5, m^*_6, M, M-K, |M-K/(1-\rho)| \}$.

\end{proof}

{

In Theorem~\ref{thm:single:muPos}, we have shown the behavior of $y^*$ and $m^*$ when $\mu$ is positive. 

\begin{figure}
	\includegraphics[width=0.495\textwidth]{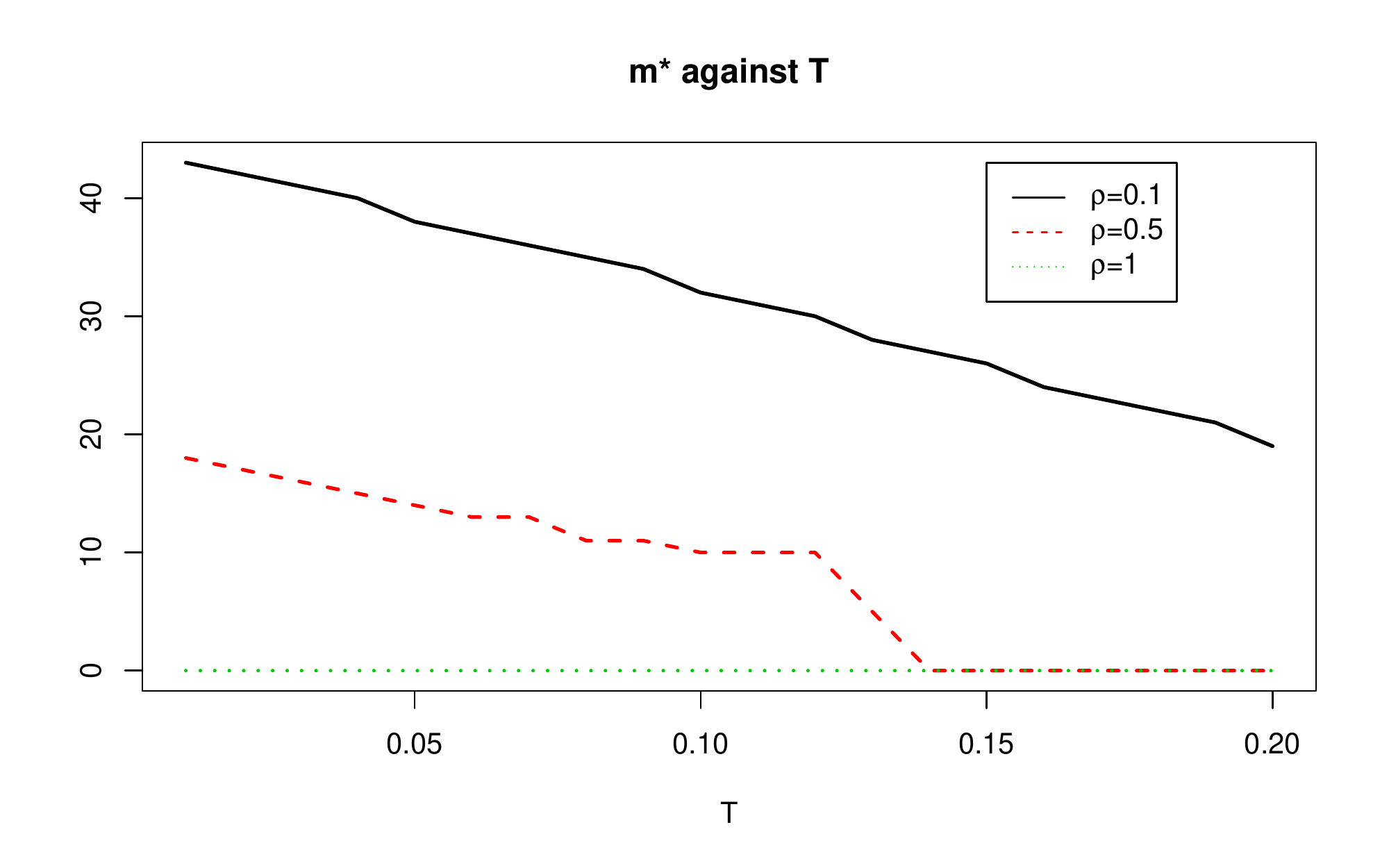}
	\includegraphics[width=0.495\textwidth]{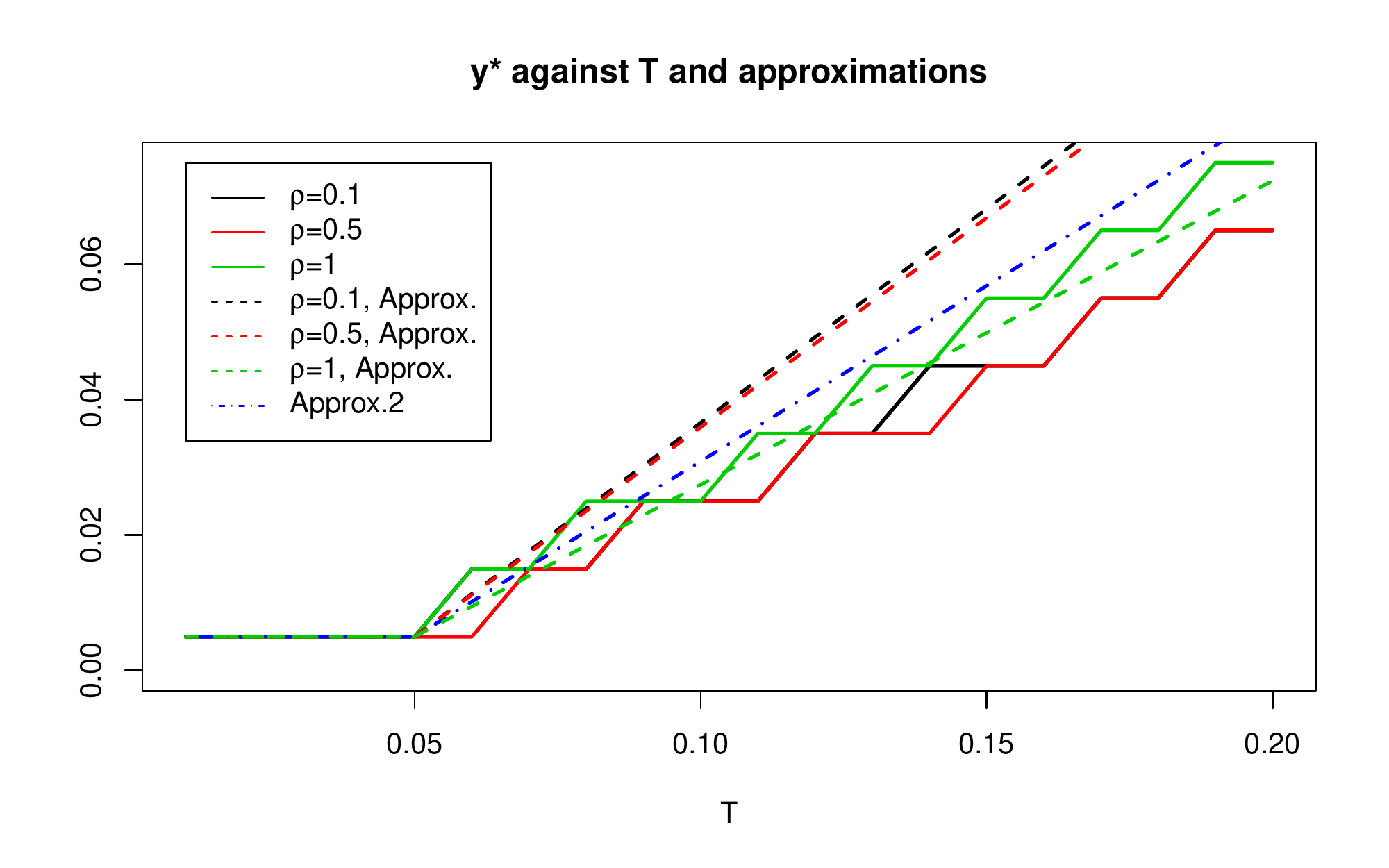}
	
	\caption{Behavior of $m^*$ (Left) and $y^*$ (Right) against T. For the left graph, black solid line is when $\rho=0.1$, red dashed line is when $\rho=0.5$, green dotted dot-dashed line is when $\rho=1$. For the graph in the right panel, $y^*$ and it's first-order approximation $d+\kappa(T-T_0)$ are described in solid and dashed lines, respectively, using Black color($\rho=0.1$), red color($\rho=0.5)$, and green color ($\rho=1$). Blue dot-dashed line in Right panel is using the approximation from \cite{FLP:2018}. For both panels, $M=100, r=0.003, f=0.003, \epsilon=0.01, S_0=100, d=\epsilon/2=0.005, \sigma=0.1, \mu=0.5$ .  }
	\label{mystar_againstT_diffRho}
\end{figure}

\begin{figure}
	\includegraphics[width=0.495\textwidth]{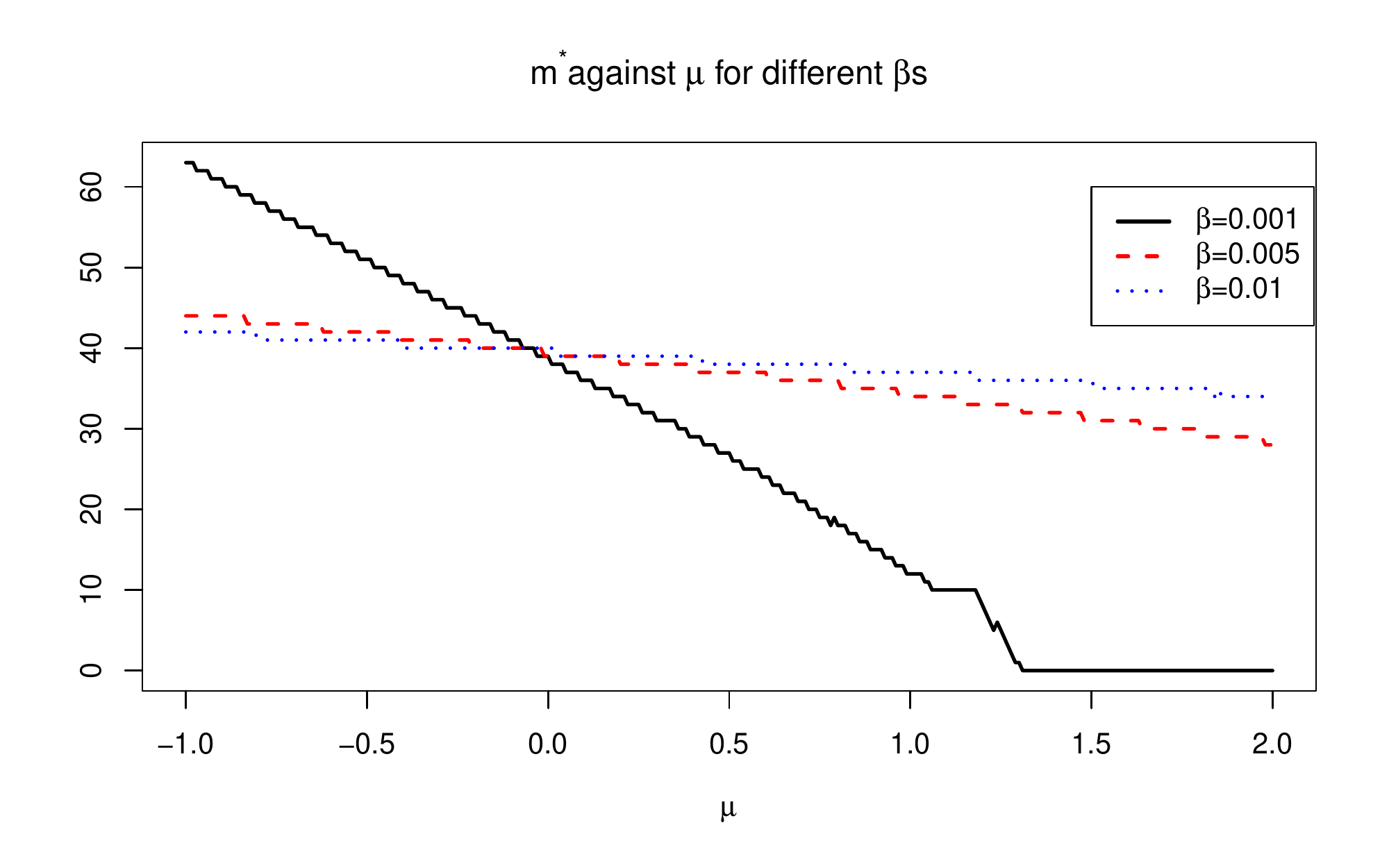}
	\includegraphics[width=0.495\textwidth]{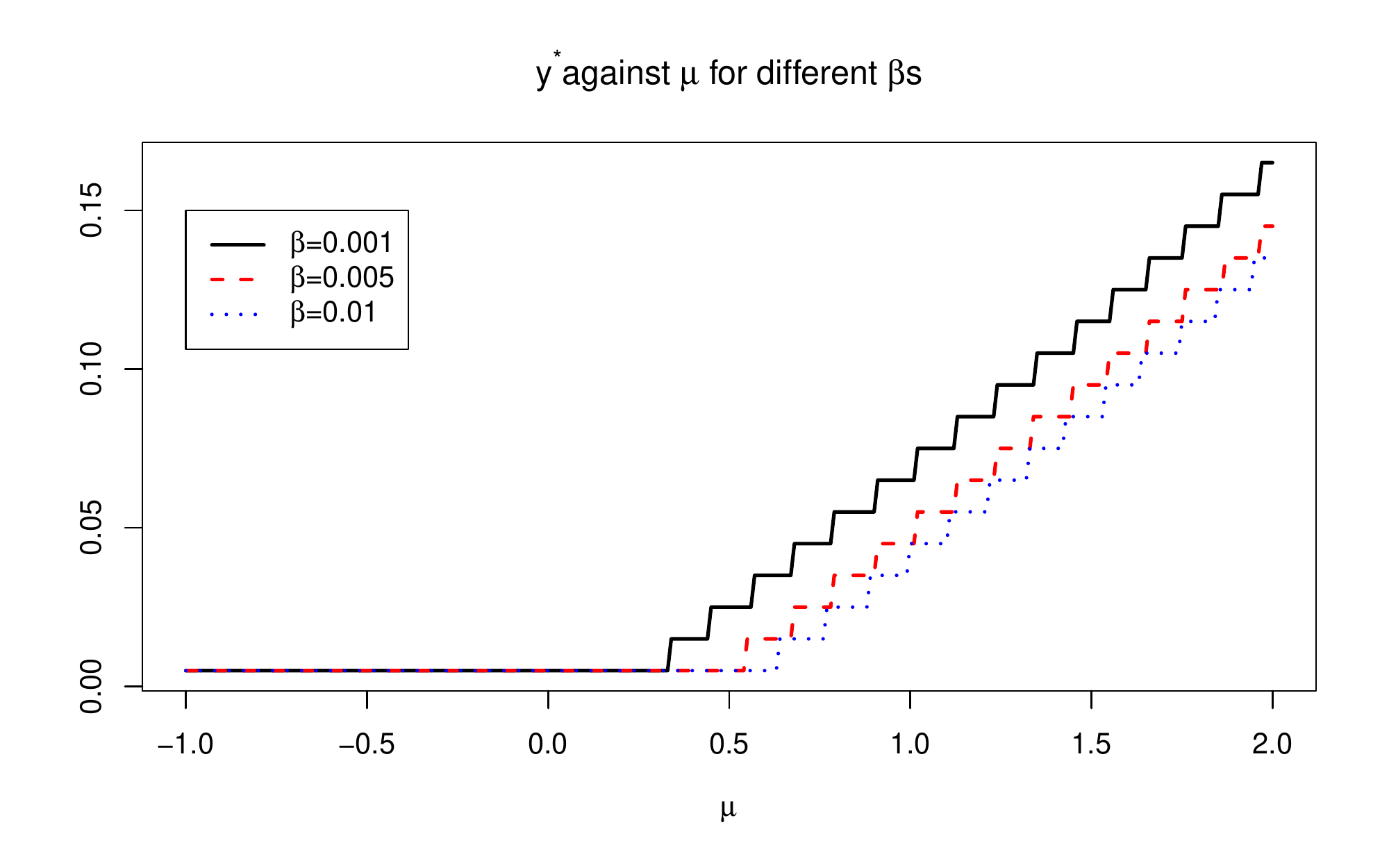}
	\caption{
		$m^*$  (left) and $y^*$ (right) against $\mu$ for $\beta=0.001$ (black solid line), $\beta=0.005$ (red dashed line), and $\beta=0.01$ (blue dotted line). $K=10, \rho=0.2, $  $M=100, r=0.003, f=0.003, \epsilon=0.01, d=\epsilon/2=0.005, T=0.1, S_0=100, \sigma=0.1$.}
	\label{mystar_againstMu_diffBetas}
\end{figure}

\begin{figure}
	\includegraphics[width=0.495\textwidth]{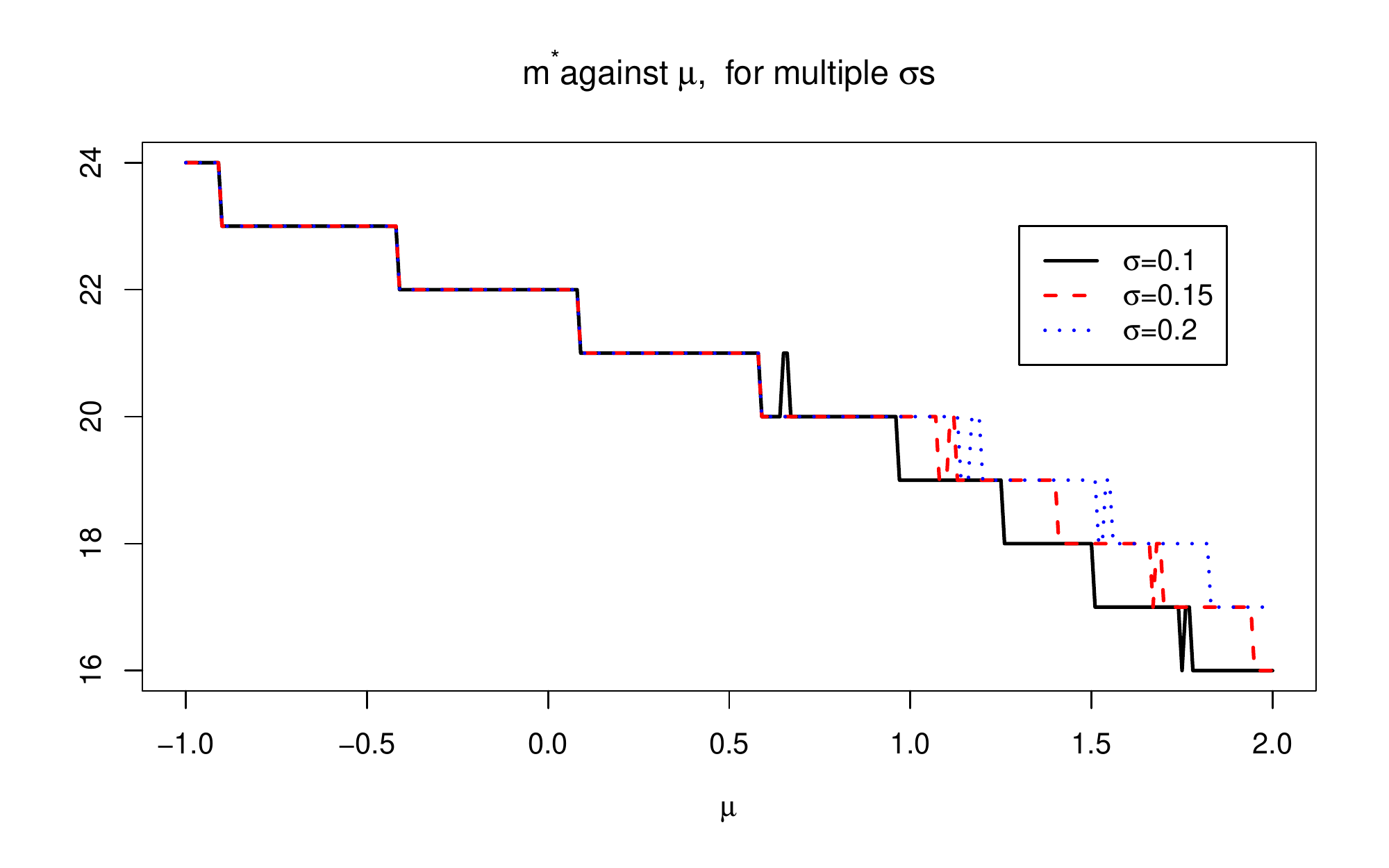}
	\includegraphics[width=0.495\textwidth]{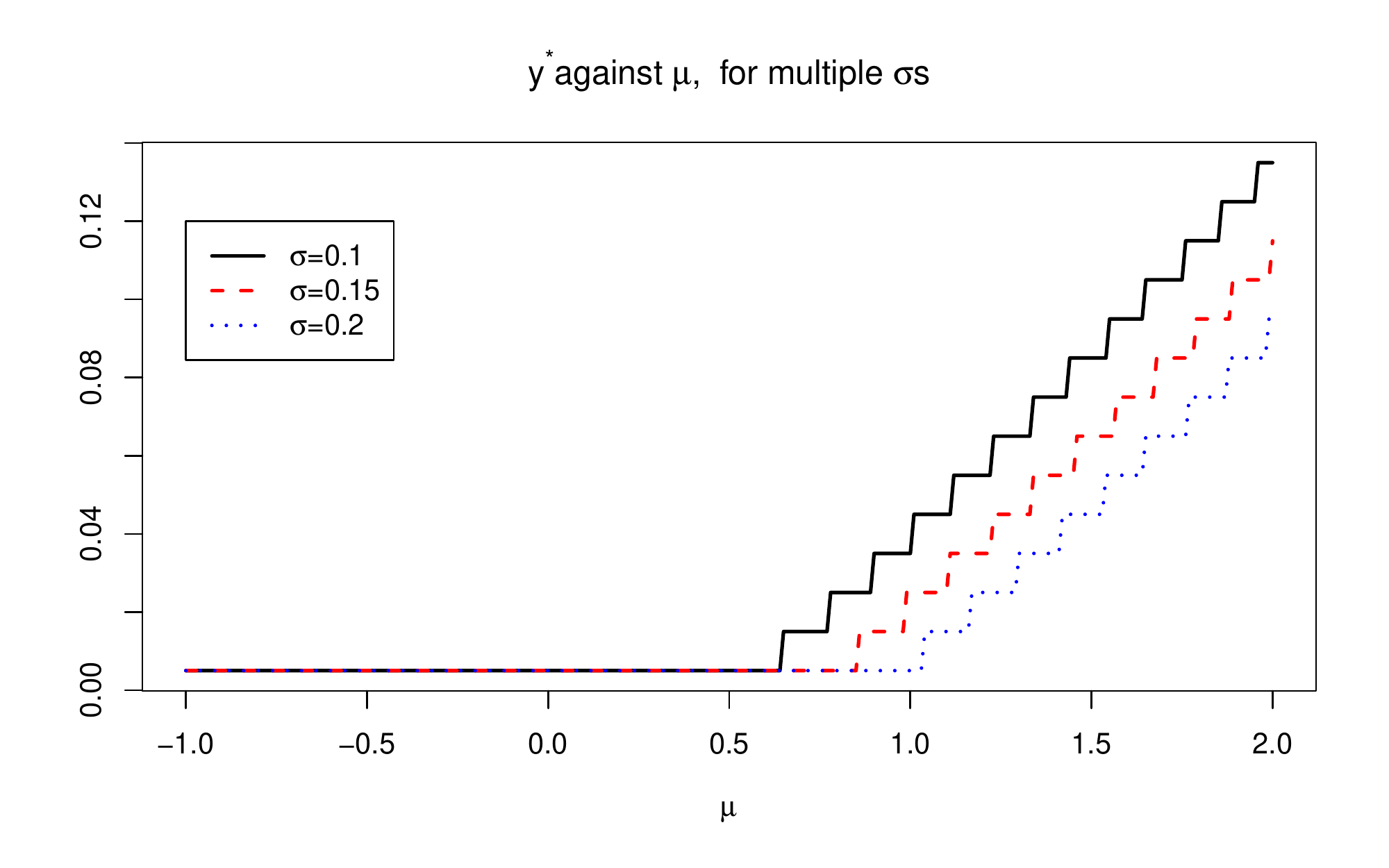}
	\caption{ $m^*$ (left), and $y^*$ (right) against $\mu$ for $\sigma=0.1$ (black solid line), $\sigma=0.15$ (red dashed line), $\sigma=0.2$ (blue dotted line).  $K=10$, $M=100, r=0.003, f=0.003, \epsilon=0.01, d=\epsilon/2=0.005, T=0.1, S_0=100, \rho=0.5$. 
	}
	\label{mystar_againstMu_diffSig_BM}
\end{figure}

\begin{figure}
	
	\includegraphics[width=0.495\textwidth]{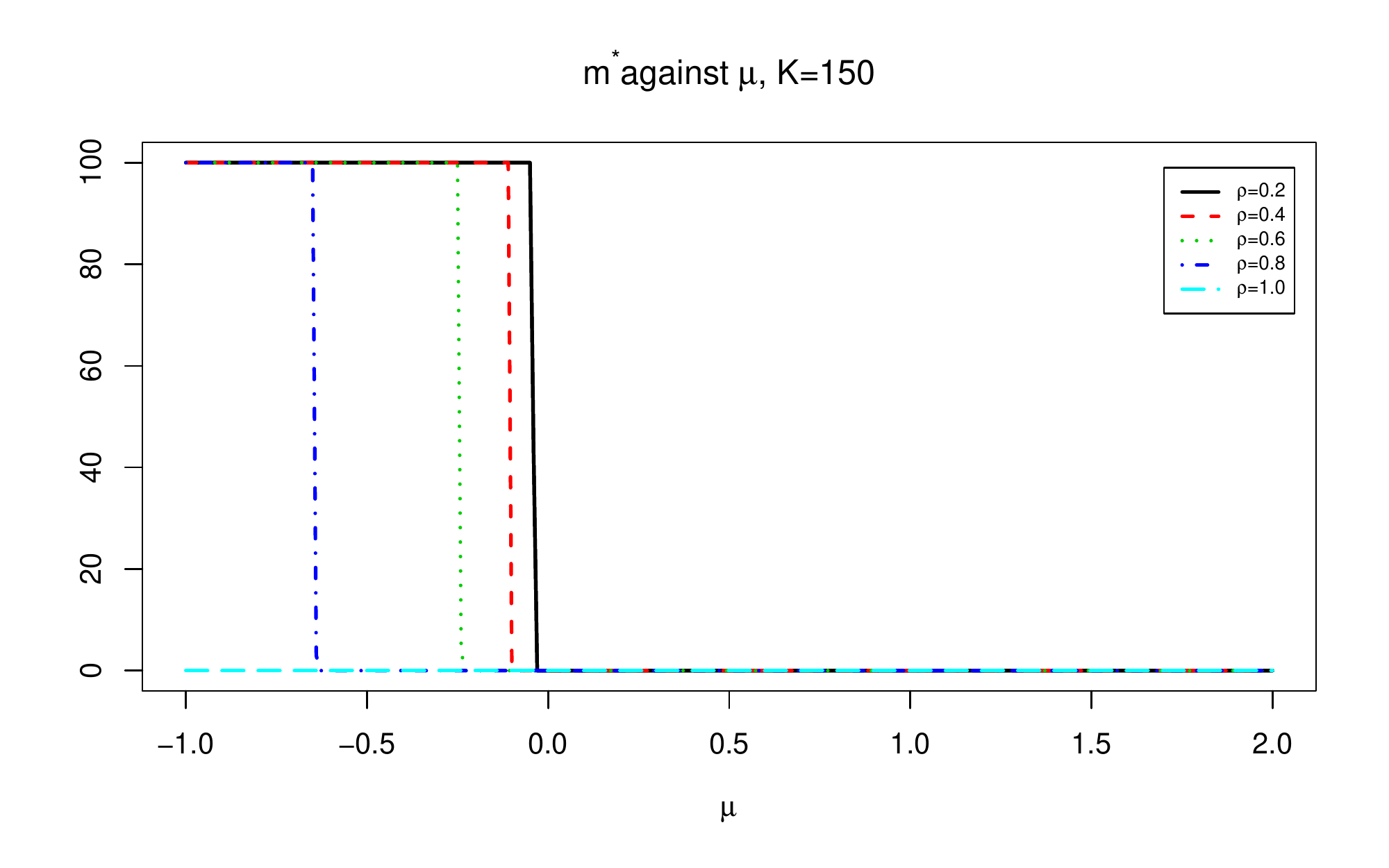}
	\includegraphics[width=0.495\textwidth]{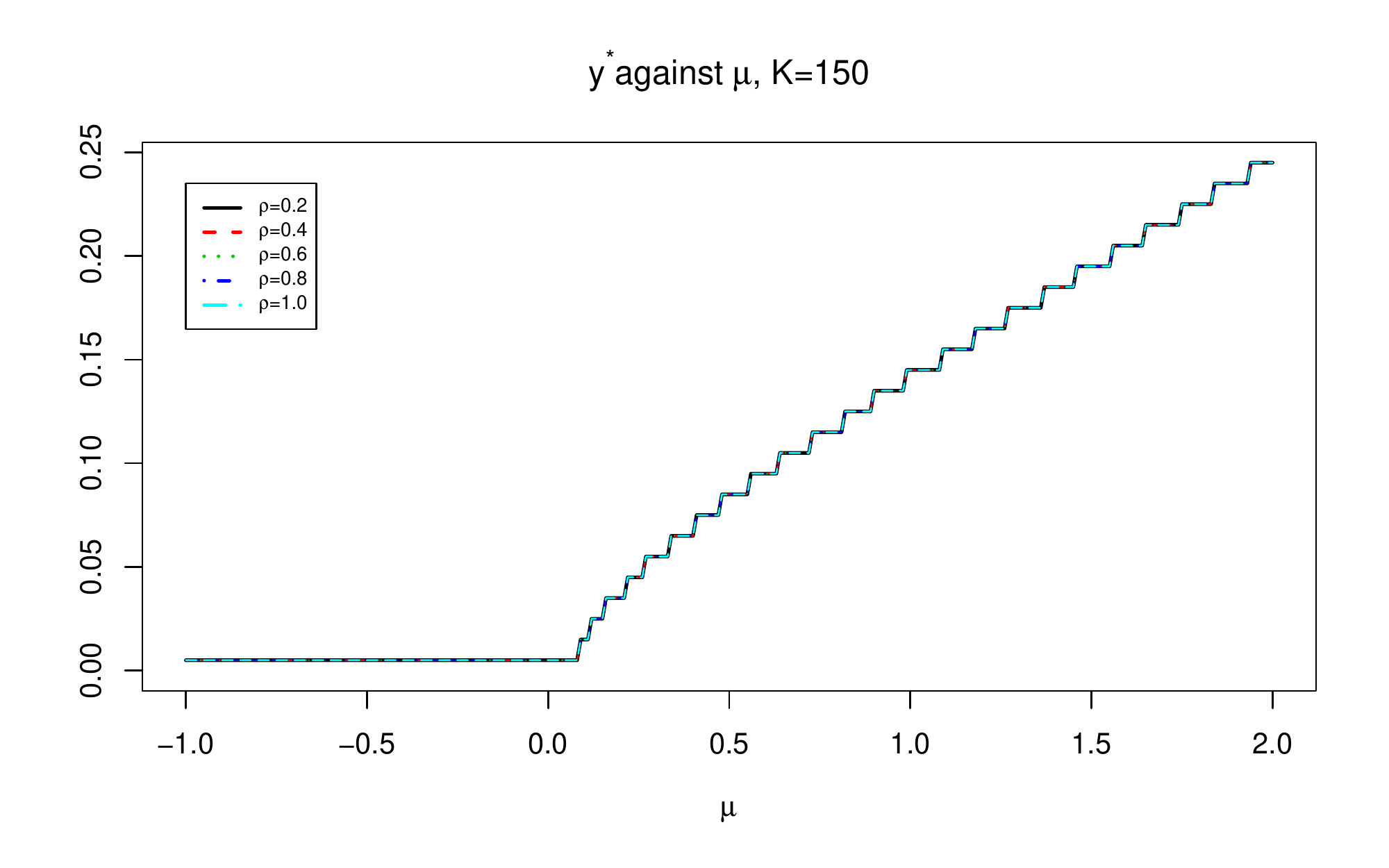}	
	\includegraphics[width=0.495\textwidth]{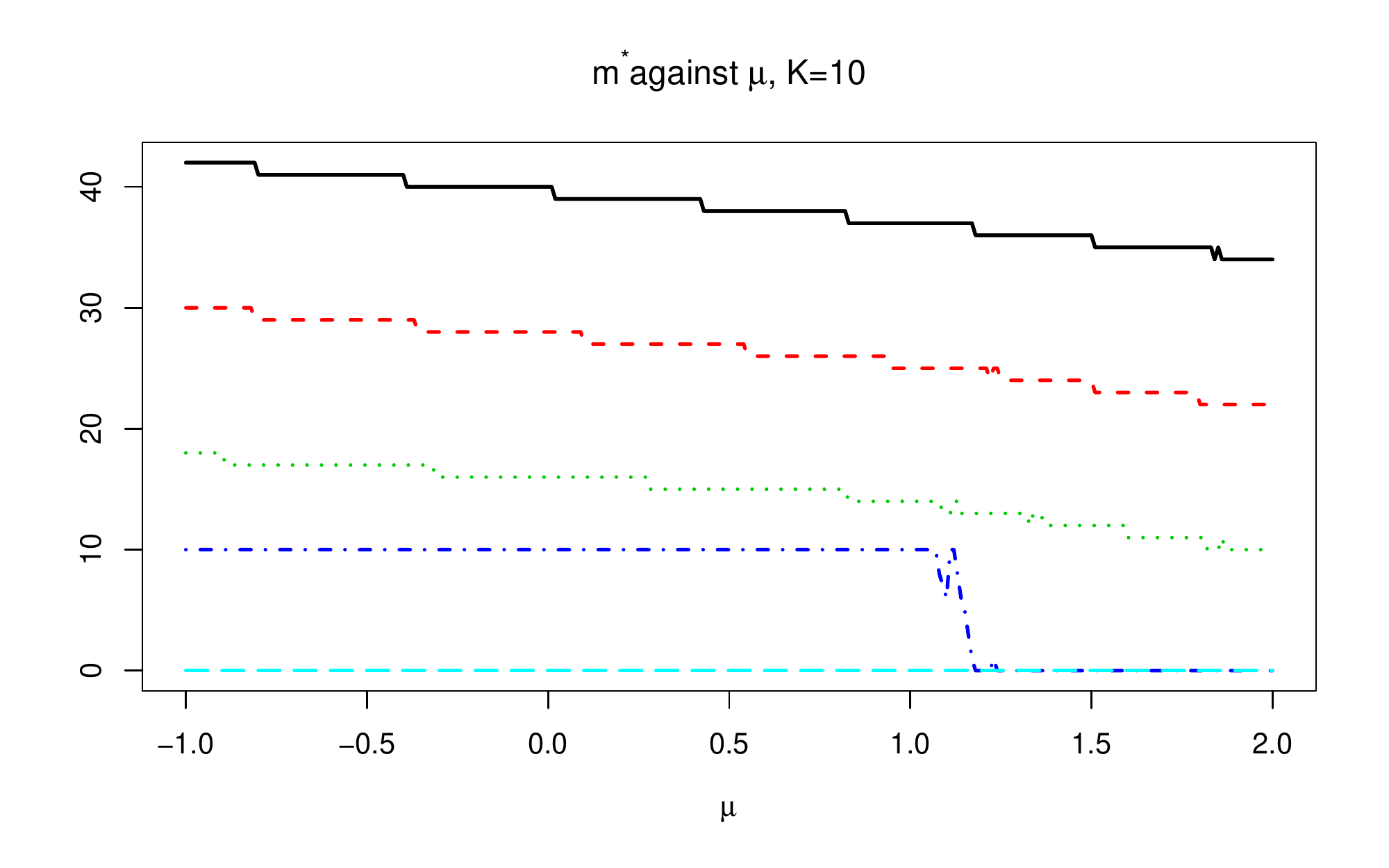}
	\includegraphics[width=0.495\textwidth]{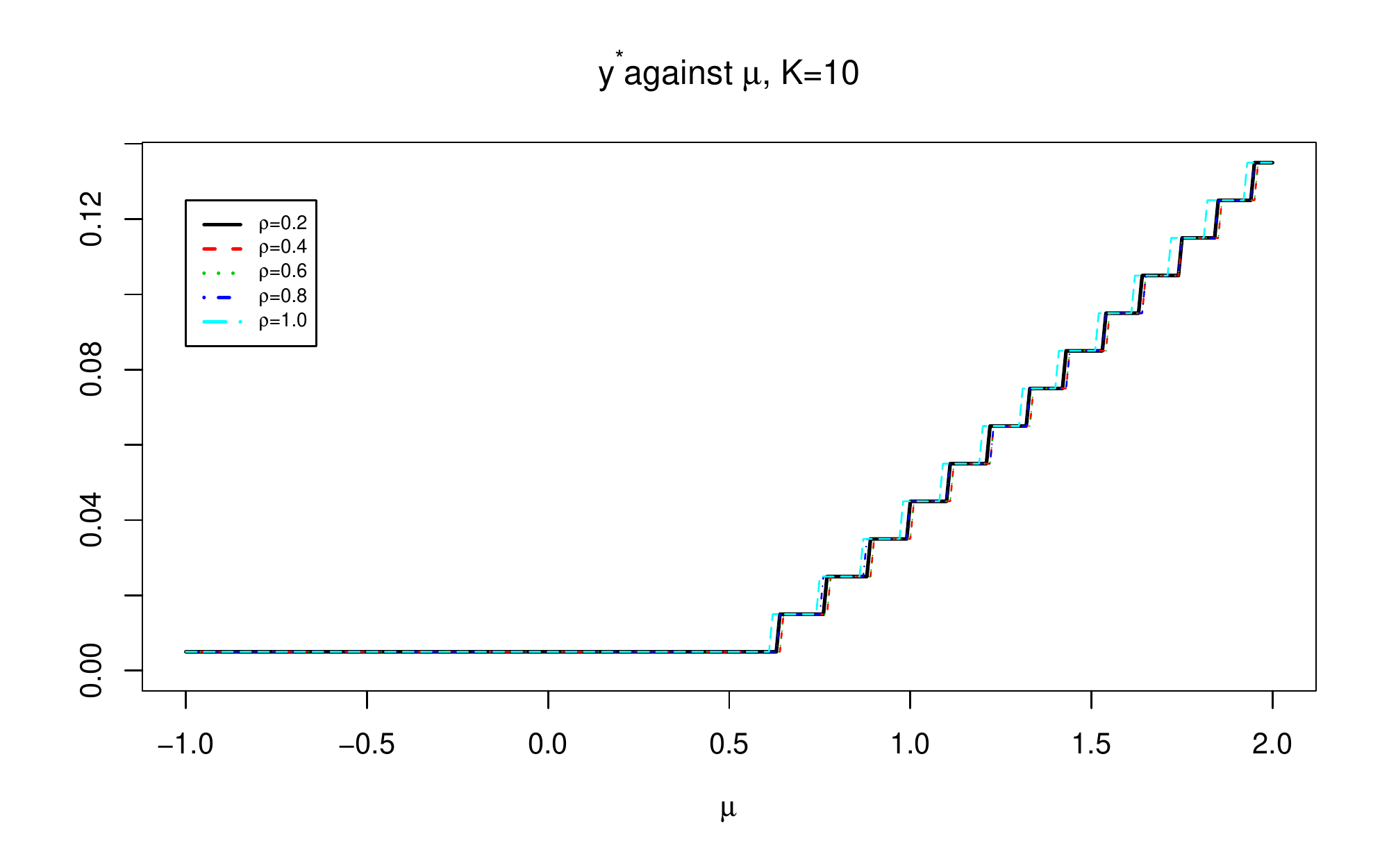}
	\includegraphics[width=0.495\textwidth]{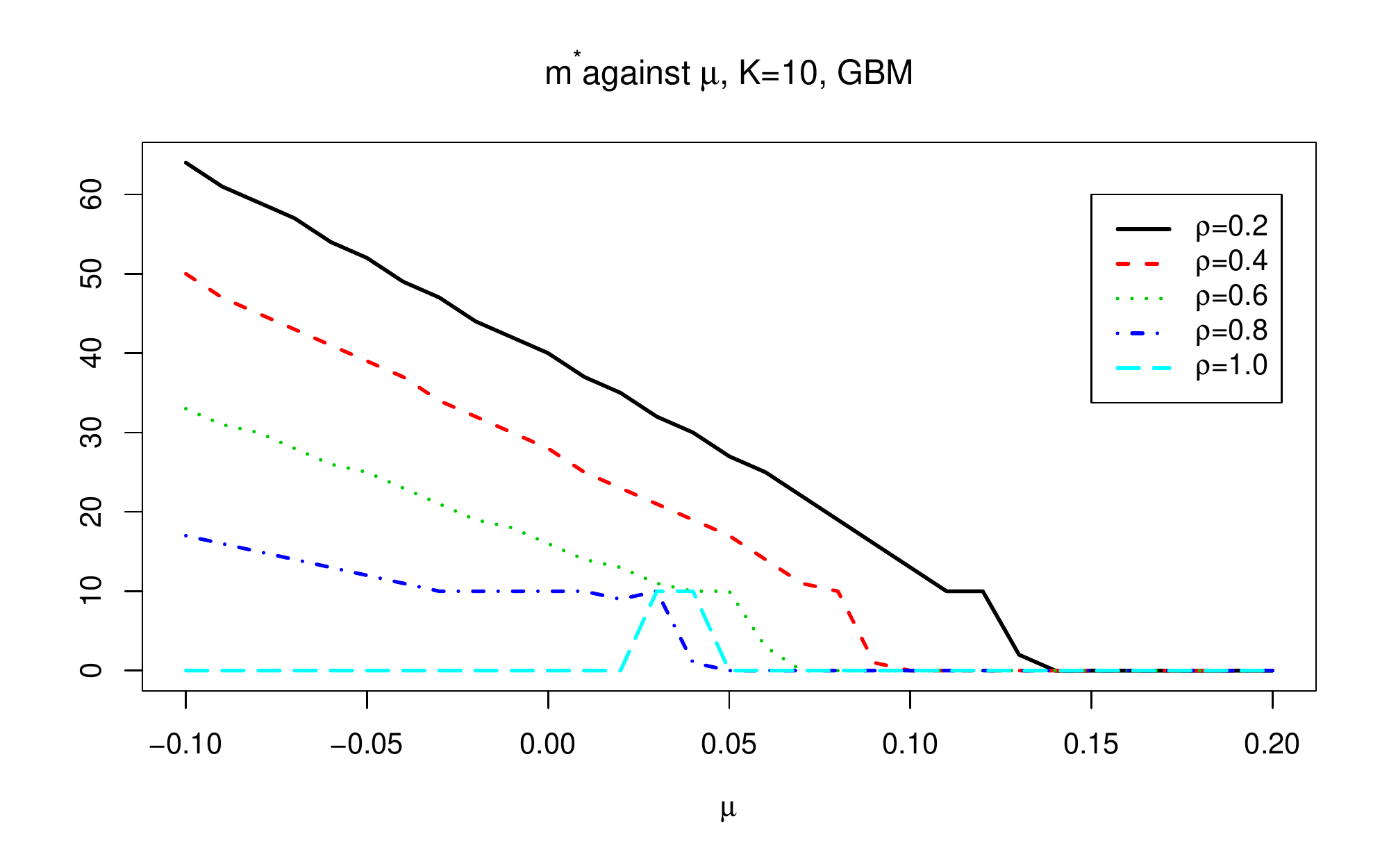}
	\includegraphics[width=0.495\textwidth]{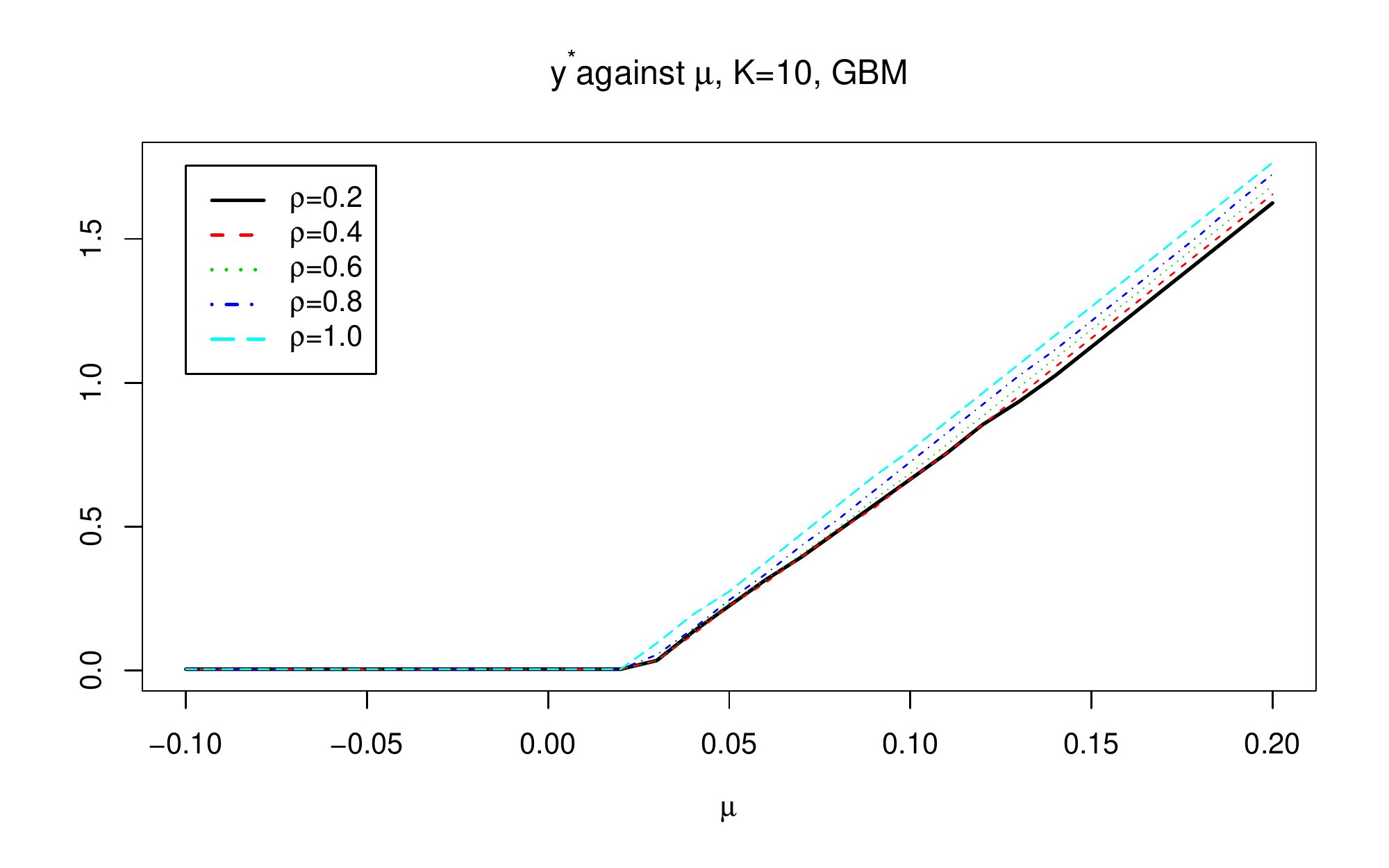}
	
	\caption{ $m^*$ (left), $y^*$ (right) against for $K=150$, BM (top), $K_10$, BM (middle) and $K=10$, GBM (bottom). $\rho=0.2$ (black solid line), $\rho=0.4$ (red dashed line), $\rho=0.6$ (green dotted line), $\rho=0.8$ (blue dot-dashed line) and $\rho=1$ (cyan long dahsed line). $M=100, r=0.003, f=0.003, \epsilon=0.01, d=\epsilon/2=0.005, T=0.1, S_0=100, \sigma=0.1$ for BM (top, middle) and $\sigma=0.01$ for GBM(bottom).}
	\label{mystar_againstMu_diffRhos}
\end{figure}

\par First, let us focus on the behavior of $y^*$. As shown in the right panel of Figure~\ref{mystar_againstMu_diffK}, $y^*=d$ until $\mu$ reaches a certain positive value (the lower bound is $\mu T>2d+f+r+(-\beta \gamma)/\rho$)  and then linearly increases with respect to $\mu$. 

\par Also, in the right panel of Figure~\ref{mystar_againstT_diffRho} it is observed that $y^*=d$ until a certain time threshold,$T_0$, and then increases linearly with respect to $T$. 

\par This complies with the result of Theorem~\ref{thm:single:muPos} that $y^*>d$ for a big enough $\mu T$. In Theorem~\ref{thm:single:muPos}, we have provided the existence of such a threshold, and named the threshold in the time horizon as $T_0$. We provided the stepwise-linear approximation of $y^*$ as a function of $T$: $y^*(T)\approx d+\kappa (T-T_0)$, the first-order approximation given in Theorem~\ref{thm:single:muPos}. These approximations are also found in the right panel of Figure~\ref{mystar_againstT_diffRho}. Each $y^*$ is described in solid lines, and the approximation is given in the dashed line using matching colors. Comparing with the result from \cite{FLP:2018}, the blue dotted line in the Right panel of Figure~\ref{mystar_againstT_diffRho} is showing first order Taylor approximation of $y^*$ using Theorem 3.6. of \cite{FLP:2018}. These approximations of $y^*$ converges close to the actual $y^*$ as $T\searrow T_0$.  } These approximation methods provide the quick computation method that the investor can use to decide the limit order placement based on their time horizon, $T$.

\par Now, let us focus on the behavior of $y^*$ with various $K$s, the initial market depth. The right panel of Figure~\ref{mystar_againstMu_diffK} shows that $y^*$ increases as $K$ increases, but the difference is relatively small for $K=10, 50, 70$ and there is huge difference when $K=150$. The case of $K=150$ is special since this is only the case when $K>M$ ($M=100$ in the simulation), which represents the case when the market depth $K$ is bigger than the investor's inventory $M$. In this case, the investor would not fear to put limit orders at the higher level, since even if limit orders do not get  executed until time $T$, the market order at time $T$ still will be beneficial since there is no impact on the supply curve based on the quantity of the market order size $M$.

\par Another interesting analysis of $y^*$ is the change of behavior for different $\beta$s. Figure~\ref{mystar_againstMu_diffBetas} shows behavior of $y^*$ int he right panel. $y^*$ increases as $\beta$ decreases. Also, recall that there is a lower bound of $\mu$ such that $y^*>d$ after a certain value of $\mu$, where the bound is $\mu T>2d+r+f+(-\beta \gamma)/\rho$ (given in  Equation (\ref{bound:muT})), where $\gamma<0$. Therefore, as $\beta$ increases, the bound also increases, which is observed in Figure ~\ref{mystar_againstMu_diffBetas}. 

\par In addition, the behavior of $y^*$ for various values of $\sigma$, the volatility of price process is shown in Figure~\ref{mystar_againstMu_diffSig_BM}.  As $\sigma$ increases, after $\mu$ gets bigger than a certain (positive) threshold value, $y^*$ decreases. While for low (positive) $\mu$, bigger $\sigma$ implies a higher chance of the limit order execution. However it is not necessarily true for high $\mu$, and this explains why $y^*$ decreases as $\sigma$ increases for big enough $\mu$.

\medskip
\par  Now, let us show the behavior of $m^*$. Figure~\ref{mystar_againstT_diffRho} shows the behavior of $m^*$ against $T$ when $\mu>0$. The behavior of $m^*$ follows the result of Theorem~\ref{thm:single:muPos}, and it is decreasing as $T$ increases. This is because when $\mu>0$ and $T$ increases, the benefit from the limit order increases. When $\mu>0$, the probability of limit order execution increases as $T$ increases, and even if a limit order is not executed, the expected value of $S_T$ is higher than $S_0$, so the market order at $t=0$ is less preferred.

\par The left panel of Figure~\ref{mystar_againstMu_diffRhos} shows the behavior of $m^*$ against $\mu$ when $K=150$ (top) and $K=10$ (middle) for different $\rho$'s. $K=10$ represents the case when the market depth is low (the quantity of limit order on best bid queue when $t=0$ is low). In this case, $m^*$ shows a relatively smooth decreasing trend as $\mu$ increases.
 For $K=150$, on the other hand, $m^*$ is still decreasing but there is a sudden drop from $m^*=M$ to $m^*=0$. When the market depth is high, then all the market order is covered by the current existing limit order. In this case, our inventory's size does not affect the supply curve. In other words, when $K>M$, the conclusion for $m^*$ and $y^*$ should be the same whether M=1 or M=K-1. So in this case, the investor's choice comes down to two choices: $m=0$ (just use limit order) and $m=M$ (just use market order).

\par The behavior of $m^*$ for different $K$s with a fixed $\rho$ is shown in the right panel of Figure~\ref{mystar_againstMu_diffK}. $m^*$,  the optimal quantity for the market order, decreases as $\mu$ increases. It is because as $\mu$ increases, there is higher chance of $LO$ execution and the market order at $T$ is more preferred than $MO$ at $0$. The slope for the decrease gets steeper when $K$, the market depth, increases. This is because the affect of market order size gets smaller as $K$ increases.

\par The change of $m^*$ slope as $\beta$ can be found in Figure ~\ref{mystar_againstMu_diffBetas}. When $\beta$ is smaller, the slope is sharper. This is because when $\beta$ gets bigger, the market order size is more affective to the expected cash flow, so the difference in the market order size should be smaller compared to the case of a smaller $\beta$. 

\par The behavior of  $m^*$ for various values of $\sigma$ is shown in Figure~\ref{mystar_againstMu_diffSig_BM}.  As $\sigma$ increases, it is observed that $m^*$ increases. As described earlier, bigger $\sigma$ implies a higher chance of the limit order execution. However it is not necessarily true for a high $\mu$, and this explains why $y^*$ decreases and $m^*$ increases as $\sigma$ increases for a big enough $\mu$.

%
%
%

}

{
While in this paper we didn't include theorems for the behavior of the optimal placement strategy using a geometric Brownian motion, we have checked that they're showing a similar behavior as in a Brownian motion model.  
Bottom two panels of Figure~\ref{mystar_againstMu_diffRhos} show $m^*$, $y^*$ which maximize the ECF given in Lemma~\ref{ECF_GBM}. Since for the same price process, $\mu$ and $\sigma$ for the geometric Brownian motion model should be smaller than the those for BM, we are using $0.01$ for $\sigma$ and range of $\mu$ from $-0.1$ to $0.2$ for  the geometric Brownian motion model here. 
Comparing Brownian motion model (middle panels  of Figure~\ref{mystar_againstMu_diffRhos}) and geometric Brownian motion model (bottom panels of Figure~\ref{mystar_againstMu_diffRhos}) using same parameters (except $\mu$ and $\sigma$), $y^*$ and $m^*$ show similar behavior but it is smoother for the geometric Brownian motion model. 
}

{\Red
To summarize, we have investigated the  behavior of optimal ($y^*$, $m^*$) against important LOB features ($\rho$, $\beta$, $K$) and other parameters for price movement such as $\mu$ in a single period model.  In the following section, we will extend this investigation into a multiple period model. 
}

%

%
%
%

\newpage
\section{A Multi-Period Model}

\par In this section, we extend the single-period case to a multi-period case. In the single period case, the investor made a decision only at time $t=0$ and waited until the investor's time horizon $T$. As a natural extension, now we introduce $n$ time steps between time $0$ and time $T$. We find the optimal placement strategy at the first step. The difference is that all un-executed orders are not transferred to a market order at the end of the first period. Instead, now we solve a new optimal replacement problem using the remaining order. In other word, the remaining shares after the first period becomes a new $M$ for the second period, and so on.

\par Let us denote the time steps as $\{ t_0=0, t_1=T/n, t_2=2T/n, \dots, t_n=T \}$. At each $t_i$, the investor is allowed to 

\begin{itemize}
\item cancel any remaining limit order, (this becomes new remaining inventory, $M_i$)
\item place a market order of size $m_i$($\leq M_i$), or
\item place the rest ($M_i-m_i$) using  limit order at the price of $S_i+y_i$. ($S_i$: mid-price at time $t_i$)
\end{itemize}

Then we define the Multi-period Expected cash flow (MECF) as follows:

\par For $n=1$, 
\begin{align}\label{multi:mecf1:n1}
MECF(n=1,S_0,y,m,M,T=t)=ECF(y,m),
\end{align}

where the ECF (expected case flow) from the single-period case is in ~(\ref{ECF_lem3}).

\par For $n\geq 2$, we define the expected cash flow as a function of $n$, the number of time steps, $S_0$, the stock mid-price at time $t_0$, $y_0$, the initial limit order placement, $m_0$, the initial market order quantity, $M$, the inventory, and $T$, the time horizon.

{ Note that the cash flow at a time step $t_i$ depends only on $M_i, S_i$, and the action at $t_i$. Given this Markov structure, the optimal placement problem is a Markov decision problem where the expected cost for taking each action at each step can be solved recursively.

}

\begin{align}
\label{multi:mecf1:line1}
MECF (n, S_0,y_0, m_0, & M,T)=m_0\left(S(0,-m_0)-f  \right) 
+ E\left[L(S(0,0)+y+r)\bigg|\tau<\frac{T}{n} \right] P\left(\tau<\frac{T}{n}\right)
\\
\label{multi:mecf1:line2}
  &+MECF\bigg(n-1,E[{S_{\frac{T}{n}}\big|_{\tau<\frac{T}{n}}}],y_1,m_1,M-m_0-L, \frac{T(n-1)}{n}\bigg)P\left(\tau<\frac{T}{n}\right)\\
  \label{multi:mecf1:line3}
   &+
 MECF\bigg(n-1,E[{S_{\frac{T}{n}}\big|_{\tau>\frac{T}{n}}}],y_1,m_1,M-m_0, \frac{T(n-1)}{n}\bigg)P\left(\tau>\frac{T}{n}\right)
\end{align}

The first term of (\ref{multi:mecf1:line1}) is the cash flow from the market order, and the second term of (\ref{multi:mecf1:line1})
is the cash flow from executed limit order (recall that $L$ is the amount of executed shares at the price level $S_0+y_0$ when $\tau$, the first time the trader's limit order becomes the best ask, is less than $T/n$). Then, if $\tau<T/n$, the $M-m_0-L$ is the remaining inventory for the next time step, as described in (\ref{multi:mecf1:line2}). Otherwise, if $\tau>T/n$, then none of the limit order has been executed until $T/n$, and the remaining inventory for the next time step is $M-m_0$, as in (\ref{multi:mecf1:line3}).

\par For the dynamic placement problem, at each time step, the investor may update parameters and solve for the optimal placement. In this paper, we focus on solving for $y_0$ and $m_0$ assuming that the parameter remains the same during $[0,T]$. At each time step $t_i$, after updating the parameter, the investor can use the same technique to find the optimal  $y_i$ and $m_i$. 



\begin{prop}\label{multi:thm:rho}
Let MECF be defined as in (\ref{multi:mecf1:n1}),  (\ref{multi:mecf1:line1}) -(\ref{multi:mecf1:line3}). Let us assume that the price process, $S_t$, follows Brownian motion as in Lemma~\ref{lem:bm:ECF} and $L=(M-m)\rho$. Then, MECF(n, S, y, m, M, T) can be summarized as follows:

For $n=1$, 
\begin{align*}
MECF&(n=1,S_0,y,m,M,T=t)\\
=
&M \left(S_0 -2\rho(y-d)\tilde{N}(-\alpha_t)-(d+f-\mu t)(1-\rho+\rho\epsilon_0)+(1-\epsilon_0)\rho(y+r)\right)\\
&-m\left( \rho(1-\epsilon_0)(d+f+y+r)-2\rho (y-d)\tilde{N}(-\alpha_t) +\mu t(1-\rho+\rho\epsilon_0) \right)\\
&-\beta\left[m (K-m)^{-}+(M-m)\left(\epsilon_0  (K-M+m)^{-}+(1-\epsilon_0)(1-\rho)(K-(M-m)(1-\rho))^{-}\right) \right],\\
\end{align*}

\[
\alpha_t= \frac{y-d+\mu t}{\sigma \sqrt{t}},
\quad
\beta_t = \frac{y-d-\mu t}{\sigma\sqrt{t}}.
\quad \tilde{N}(-\alpha_t)= e^{2(y-d)\mu/\sigma^2}N(-\alpha_t),\quad 
\epsilon_0= \left(N\left(\beta_t \right)-\tilde{N}\left(-\alpha_t\right) \right)
\]

For $n\geq 2$,

\begin{gather}
	\begin{aligned}
MECF &(n, S_0,y_0^{n}, m_0^{n},M,T)\\
=&m_0^{n}(S_0-d+\beta \min (K-m_0^{n},0)-f)\\
+&(M-m_0^{n})\rho \left\{ (S_0+y_0^{n}+r) \left(N\left(-\beta_{\frac{T}{n}} \right)+ \tilde{N}\left(-\alpha_{\frac{T}{n}}\right) \right)\right\}\\
+&MECF(n-1,E[{S_{\frac{T}{n}}|_{\tau<T/n}}],y_1^{n},m_1^{n},(M-m_0^{n})(1-\rho)
, \frac{T(n-1)}{n} ) \left(N\left(-\beta_{\frac{T}{n}} \right)+\tilde{N}\left(-\alpha_{\frac{T}{n}}\right) \right)\\
+&MECF(n-1,E[{S_{\frac{T}{n}}|_{\tau>T/n}}],y_1^{n},m_1^{n},M-m_0^{n}, \frac{T(n-1)}{n}) \left(N\left(\beta_{\frac{T}{n}} \right)-\tilde{N}\left(-\alpha_{\frac{T}{n}}\right) \right)
\end{aligned}\label{multi:rho1:n}
\end{gather}

\[
\alpha_t= \frac{y-d+\mu t}{\sigma \sqrt{t}},
\quad
\beta_t = \frac{y-d-\mu t}{\sigma\sqrt{t}}.
\quad \tilde{N}(-\alpha_t)= e^{2(y-d)\mu/\sigma^2}N(-\alpha_t).
\]
\end{prop}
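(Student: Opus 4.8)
The plan is to treat the two regimes $n=1$ and $n\geq 2$ separately. The $n=1$ case is immediate: equation~(\ref{multi:mecf1:n1}) defines $MECF(1,\dots)=ECF(y,m)$, and Lemma~\ref{lem:bm:ECF} already records the closed form of $ECF(y,m)$ under the Brownian model with $L=(M-m)\rho$, which is verbatim the displayed $n=1$ identity. So the real content of the proposition lies in the recursive case.

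For $n\geq 2$ I would start from the defining identity (\ref{multi:mecf1:line1})--(\ref{multi:mecf1:line3}) and evaluate each summand. The supply curve gives $S(0,-m_0)=S_0-d-\beta(K-m_0)^{-}$; rewriting $-\beta(K-m_0)^{-}=\beta\min(K-m_0,0)$ turns the market-order term into $m_0(S_0-d+\beta\min(K-m_0,0)-f)$, the first line of (\ref{multi:rho1:n}). Since $L=(M-m_0)\rho$ is deterministic it factors out of the conditional expectation, so the executed-limit-order term becomes $(M-m_0)\rho\,(S_0+y_0+r)\,P(\tau<T/n)$; Lemma~\ref{BM_Lem} gives $P(\tau>t)=N(\beta_t)-\tilde N(-\alpha_t)$, hence $P(\tau<t)=N(-\beta_t)+\tilde N(-\alpha_t)$, and substituting $t=T/n$ produces the second line. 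Finally the two continuation terms are indexed by the surviving inventory: on $\{\tau<T/n\}$ the filled amount is $L=(M-m_0)\rho$, leaving $(M-m_0)(1-\rho)$, while on $\{\tau>T/n\}$ nothing is filled and the inventory is still $M-m_0$; attaching the probabilities $N(-\beta_{T/n})+\tilde N(-\alpha_{T/n})$ and $N(\beta_{T/n})-\tilde N(-\alpha_{T/n})$ reproduces the third and fourth lines of (\ref{multi:rho1:n}).

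The step that needs genuine justification---and the main obstacle---is that the continuation values in (\ref{multi:mecf1:line2})--(\ref{multi:mecf1:line3}) carry the \emph{random} terminal price $S_{T/n}$, whereas the claimed formula feeds in the deterministic conditional means $E[S_{T/n}\mid\tau<T/n]$ and $E[S_{T/n}\mid\tau>T/n]$. I would settle this by an induction on $n$ showing that $MECF(n,S_0,\dots)$ is affine in its price argument $S_0$. The base case holds because $S_0$ enters the $n=1$ formula only through the single linear term $M S_0$. For the inductive step, the hitting time $\tau_y=\inf\{u:\mu u+\sigma B_u=y-d\}$ depends on the price increment alone and not on the level $S_0$, so the events $\{\tau<T/n\}$ and $\{\tau>T/n\}$ are independent of $S_0$ and each conditional mean $E[S_{T/n}\mid\tau<T/n]$, $E[S_{T/n}\mid\tau>T/n]$ is itself affine in $S_0$; combined with the inductive affinity of $MECF(n-1,\cdot)$, linearity of conditional expectation lets me pass the conditioning through $MECF(n-1,\cdot)$ and land precisely on the continuation terms $MECF(n-1,E[S_{T/n}\mid\tau<T/n],\dots)$ and $MECF(n-1,E[S_{T/n}\mid\tau>T/n],\dots)$. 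This is exactly the Markov reduction flagged in the remark preceding the statement; once it is in place, the remaining manipulations are the routine substitutions above.
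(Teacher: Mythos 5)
Your proposal is correct and follows essentially the same route as the paper, whose proof is simply the one-line observation that the result follows directly from the definition (\ref{multi:mecf1:n1}), (\ref{multi:mecf1:line1})--(\ref{multi:mecf1:line3}) and Lemma~\ref{BM_Lem}; your substitutions for the supply curve, the hitting probabilities, and the surviving inventory are exactly that computation spelled out. The one point you flag as the ``main obstacle''---passing from the random terminal price to the conditional means $E[S_{T/n}\mid\tau\lessgtr T/n]$---is not actually needed for the proposition as stated, since the defining recursion (\ref{multi:mecf1:line2})--(\ref{multi:mecf1:line3}) already plugs those conditional means into the continuation value by fiat; your affinity-induction argument instead addresses the separate (and unproven-in-the-paper) question of whether that definition faithfully represents the true expected cash flow.
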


\begin{proof}
The result of this theorem is directly using the definition in  (\ref{multi:mecf1:n1}),  (\ref{multi:mecf1:line1}) -(\ref{multi:mecf1:line3}) and Lemma~\ref{BM_Lem}. 
\end{proof}

In this theorem we introduced the expression of the Expected Cash flow for the multiple period. Now, let us define the optimal placement for the first time step, $y_0^{*n}, m_0^{*n}$,

\begin{equation}\label{opt:ym:multi}
y_0^{*n}, m_0^{*n} := \underset{y\geq d,0\leq m\leq M}{\arg\max} MECF(n,S_0,y,m,T),
\end{equation}

 the optimal price level for the limit order placement (the limit order will be placed at $S_0+y_0^{*n}$) and the market order quantity $m_0^{*n}(\leq M)$, when there are $n$ time steps and $T$ is the total investment horizon.

 {
 Note that the action at each time step $t_i$, $y_i^{n}$ and $m_i^{n}$ depends on the state of $t_i$ (remaining inventory and the mid-price at $t_i$), so the optimal policy at time $t_i$, $(y_i^{*n}, m_i^{*n})$, could be degenerated into $\left((y_i^{*n}, m_i^{*n}), (y_{i+1}^{*n}, m_{i+1}^{*n}), \dots (y_{n-1}^{*n}, m_{n-1}^{*n})\right)$. Then, the Dynamic Programming Principle leads to the following backward recursion for the MECF:
 }
%

\begin{align*}
(y_0^{*n},& m_0^{*n})\\
=&\underset{y\geq d,0\leq m\leq M}{\arg\max} \bigg[ m(S_0-d+\beta \min (K-m,0)-f)+ (M-m)\rho  (S_0+y+r) \left(N\left(-\beta_{\frac{T}{n}} \right)+\tilde{N}\left(-\alpha_{\frac{T}{n}}\right) \right)\\
\quad &+\max_{y_1,m_1} MECF(n-1,E[{S_{\frac{T}{n}}|_{\tau<T/n}}],y_1,m_1,(M-m)(1-\rho)
, \frac{T(n-1)}{n} ) \left(N\left(-\beta_{\frac{T}{n}} \right)+\tilde{N}\left(-\alpha_{\frac{T}{n}}\right) \right)\\
&+ \max_{y_1, m_1}MECF(n-1,E[{S_{\frac{T}{n}}|_{\tau>T/n}}],y_1,m_1,M-m, \frac{T(n-1)}{n} ) \left(N\left(\beta_{\frac{T}{n}} \right)-\tilde{N}\left(-\alpha_{\frac{T}{n}}\right) \right)\bigg].
\end{align*}

%
\par Using this Proposition~\ref{multi:thm:rho}, we are now going to solve the optimal placement decision at time $0$. Because of the recursive nature of the multiple period, we are interested in analyzing the solution of the first period. Optimal behaviors for the remaining period are updated at next time steps with updated information (parameters), using the same method, sequentially. We now provide analytical results for the optimal behavior. The following theorem gives analytical result for the case when $\rho=1$.

\begin{thm}\label{Multi:rho1:optimalym}
For $y_0^{*n}, m_0^{*n}$ defined as in Proposition~\ref{multi:thm:rho} and (\ref{opt:ym:multi}) and when $\rho=1$,  $y_0^{*n}=d, m_0^{*n}=0$ for all $n$ when $\mu<0$. 
\end{thm}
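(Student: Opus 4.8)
The plan is to argue by induction on the number of steps, holding the step length $\delta := T/n$ fixed along the recursion tower, and to carry along the stronger inductive hypothesis that the optimized value function is linear: writing $V_k(S,M)$ for the MECF of a $k$-step problem (step length $\delta$) maximized over all admissible continuation policies, I would claim that $V_k(S,M)=M(S+d+r)$ with optimal first action $(y,m)=(d,0)$, for every $k\ge 1$, every price $S$ and every inventory $M\ge 0$. The intuition this encodes is that with $\rho=1$ and $\mu<0$ the whole problem collapses to dumping the entire inventory immediately at the best ask and collecting the rebate, so the later periods are never used. The base case $k=1$ is Theorem~\ref{thm:single:muNeg}(1), and a direct evaluation of the single-period ECF at $(y,m)=(d,0)$ gives $V_1(S,M)=M(S+d+r)$; the key simplification is that $\epsilon_0=P(\tau>T)=0$ at $y=d$, since then $\beta_T=-\alpha_T$ in the notation of Remark~\ref{Remk:notation_alphas} (the barrier is hit at time $0$).

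For the inductive step I would substitute $\rho=1$ into the recursion (\ref{multi:rho1:n}). Two things happen. First, on the execution event $\{\tau<\delta\}$ the whole order clears, so the continuation inventory $(M-m)(1-\rho)=0$ and that MECF term vanishes. Second, on $\{\tau>\delta\}$ the Dynamic Programming Principle together with the inductive hypothesis replaces the continuation by $V_{n-1}(\bar S,M-m)=(M-m)(\bar S+d+r)$, where $\bar S=E[S_{\delta}\mid\tau>\delta]$. Using Lemma~\ref{BM_Lem} to write $\bar S\,P(\tau>\delta)=S_0P(\tau>\delta)+E[(S_\delta-S_0)\mathcal I(\tau>\delta)]$, the optimized objective reduces to the separated form $m\,A(m)+(M-m)\,B(y)$, where $A(m)=S_0-d-\beta(K-m)^{-}-f$ is the per-share market-order payoff and $B(y)=(S_0+y+r)P(\tau<\delta)+\big(S_0P(\tau>\delta)+E[(S_\delta-S_0)\mathcal I(\tau>\delta)]\big)+(d+r)P(\tau>\delta)$ is the per-share limit-order-path payoff.

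The optimization then splits. For the $y$-direction only $B(y)$ matters (the coefficient $M-m\ge 0$). Plugging in the $P(\tau<\delta)$, $P(\tau>\delta)$ and $E[(S_\delta-S_0)\mathcal I(\tau>\delta)]$ expressions from Lemma~\ref{BM_Lem} collapses the difference to
\[
B(y)-B(d)=(y-d)\big(N(-\beta_{\delta})-\tilde N(-\alpha_{\delta})\big)+\mu\delta\big(N(\beta_{\delta})-\tilde N(-\alpha_{\delta})\big),
\]
and since $B(d)=S_0+d+r$ (at $y=d$ the barrier is hit immediately, so $P(\tau<\delta)=1$), proving $y^{*}=d$ amounts to showing this is $\le 0$ for $\mu<0$, $y\ge d$. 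The second summand is $\le 0$ because $\mu<0$ and $N(\beta_\delta)-\tilde N(-\alpha_\delta)=P(\tau>\delta)\ge 0$. The first summand is the crux, and here I would prove $N(-\beta_{\delta})\le\tilde N(-\alpha_{\delta})$, i.e. $P(S_\delta-S_0>y-d)\le P(M_\delta\ge y-d,\ S_\delta-S_0<y-d)$ with $M_\delta$ the running maximum. This follows from the reflected joint density used in the proof of Lemma~\ref{BM_Lem}: after the substitution $z\mapsto 2(y-d)-z$ both probabilities become integrals of Gaussian densities over $[y-d,\infty)$, and the ratio of the two integrands is $e^{2\mu(z-(y-d))/\sigma^2}\le 1$ for $z\ge y-d$ when $\mu<0$, yielding pointwise domination and hence the inequality.

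Finally, with $y=d$ fixed the objective is $m\,A(m)+(M-m)(S_0+d+r)$, and since $A(m)-(S_0+d+r)=-2d-f-r-\beta(K-m)^{-}<0$ (using $2d+f+r>0$), this is strictly decreasing in $m$, so $m^{*}=0$. Substituting back gives $V_n(S_0,M)=M(S_0+d+r)$, which closes the induction and establishes $y_0^{*n}=d$, $m_0^{*n}=0$. The only genuinely nontrivial point is the reflection inequality $N(-\beta_\delta)\le\tilde N(-\alpha_\delta)$ in the third paragraph; everything else is bookkeeping once the $\rho=1$ collapse of the execution branch and the linearity of the value function are in place.
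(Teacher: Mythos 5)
Your proof is correct and follows essentially the same route as the paper's: induction on $n$ carrying the linear value function $V_k(S,M)=M(S+d+r)$, using $\rho=1$ to kill the continuation on the execution branch, reducing to a separated objective in $y$ and $m$, and concluding $y^*=d$, $m^*=0$ exactly as in the paper's $f(y,m)$ argument. The one place you go beyond the paper is the $y$-step: instead of differentiating, you write $B(y)-B(d)$ in closed form as a sum of two nonpositive terms and actually prove the reflection inequality $N(-\beta_\delta)\le\tilde N(-\alpha_\delta)$ by pointwise domination of the reflected Gaussian densities — an inequality the paper's proofs (here and in Theorem~\ref{thm:single:muNeg}) use but never justify — so your version is, if anything, more complete.
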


\begin{proof}

For $n=1$, proof is done in Proposition~\ref{thm:single:muNeg}.
Note that $y_0^{*1}=d$, $m_0^{*1}=0$,  $\max MECF= MECF(n=1, S_0, y=d,m=0,M,T)=M(S_0+d+r)$. 

Now, let us prove that for $n\geq 2$, if
$y_0^{*(n-1)}=d, m_0^{*(n-1)}=0$ and  $\max MECF(n-1,S_0,y,m,M,T)=M (S_0 + d + r)$, then
$y_0^{*n}=d, m_0^{*n}=0$ and  $\max MECF(n,S_0,y,m,M,T)=M (S_0 + d + r)$. 

%
%
%
To prove this, let us denote 
\begin{align*}
	f(y,m):=&m(S_0-d+\beta \min (K-m,0)-f)+(M-m)(S_0+y+r) P(\tau<T/n)\\
&+(M-m) E[{S_{\frac{T}{n}}I({\tau>T/n})}] + (M-m)(d+r) P ({\tau>T/n}).
\end{align*}

 Then, from Eq.~(\ref{multi:rho1:n}), $\max_{y,m} MECF (n, S_0,y, m,M,T)=\max_{y,m} f(y,m)$. We now need to find $y,m$ which maximize $f(y,m)$. 
%

Let us recall that $t_1:=T/n$. Using Lemma~\ref{BM_Lem}, $f(y,m)$ can be written as 
\begin{align*}
	 f(y,m)=&MS_0+m(-d+\beta \min (K-m,0)-f) +(M-m)(d+r)\\
&+(M-m)(y-d) (N(-\beta_{t_1})+\tilde{N}(-\alpha_{t_1} ))
+(M-m) \left(\mu {t_1} N(\beta_{t_1}) + (-2(y-d)-\mu {t_1} ) \tilde{N} (-\alpha_{t_1}) \right), 
\end{align*}

And when we take a derivative with respect to $y$, 
$
	{\partial f(y,m)}/{\partial y} =(M-m) \big( N(-\beta_{t_1})-\tilde{N}(-\alpha_{t_1}) -\frac{2\mu^2 t}{\sigma^2} N(-\alpha_{t_1}) \big) <0,
$
so $y^*=d$. 

Now, $f(y=d,m)= M(S_0+d+r) +m(-2d+\beta \min (K-m,0)-f-r) $, which is linearly decreasing when $m\leq K$, and when $m>K$, $f(y=d, m)$ becomes a second order concave polynomial of $m$ which takes its maximum at $K/2-(2d+f+r)/\beta<K$, so for $m>K$, $f(y=d, m)$ is also a decreasing function of $m$. Therefore, $f(y=d, m)$ is a decreasing function of $m$ for $0\leq m \leq M$, so $m^*=0$. Also, $\max MECF(n,S_0,y,m,M,T)=MECF(n,S_0,y=d,m=0,M,T)= M(S_0+d+r).$

We have proved that that if $y_0^{*(n-1)}=d, m_0^{*(n-1)}=0$ and  $\max MECF(n-1,S_0,y,m,M,T)=M (S_0 + d + r)$, then
$y_0^{*n}=d, m_0^{*n}=0$ and  $\max MECF(n,S_0,y,m,M,T)=M (S_0 + d + r)$. 
By induction, this proves the theorem.
\end{proof}

{
In Theorem~\ref{Multi:rho1:optimalym}, we've shown the analysis for the case when $\rho=1$ and $\mu<0$, which means that the limit order execution is guaranteed when the price process reaches the price level of the limit order, and the drift is negative. This could represent the liquid market condition. In this case, the optimal strategy is to place all using the limit order at the best ask, and this does not change for the number of time steps. 

\par Now, let us show that $y^{*2}=d$ for any $\rho \in [0,1]$ for the negative drift. Due to the complexity, we've proved for the case when $n=2$.
}

\begin{thm}\label{Multi:rho:optimaly}
For $y_0^{*n}$ defined as in Proposition~\ref{multi:thm:rho} and (\ref{opt:ym:multi}),  $y_0^{*2}=d$  when $\mu<0$. 
\end{thm}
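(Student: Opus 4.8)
The plan is to reduce the two‑period problem to the single‑period results and then to show that, for every fixed first‑period market order $m_0$, the map $y_0\mapsto MECF(2,S_0,y_0,m_0,M,T)$ is non‑increasing on $[d,\infty)$ when $\mu<0$; since this would hold for all $m_0\in[0,M]$, the joint maximiser must satisfy $y_0^{*2}=d$. First I would dispose of the second period. By Lemma~\ref{ECF_BM} the initial price enters $ECF$ only through the term $MS_0$, so the \emph{optimised} second‑period value is automatically affine in the entry price, $\max_{y,m}MECF(1,S,y,m,M',T/2)=M'S+G_1(M',T/2)$, with slope equal to the remaining inventory $M'$ and with $G_1$ independent of $S$. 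Because $\mu<0$, Theorem~\ref{thm:single:muNeg} gives $y^*=d$ in the second period, so $G_1(M',T/2)$ is the maximum over $m$ of the explicit expression (\ref{ecf:thm1p2c2}).

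Next I would substitute this value function into the recursion of Proposition~\ref{multi:thm:rho} with $n=2$ and $t_1:=T/2$. Writing $P_<:=P(\tau<t_1)=N(-\beta_{t_1})+\tilde N(-\alpha_{t_1})$, $P_>:=1-P_<$, and using $E[S_{t_1}\mid_{\tau<t_1}]P_<=E[S_{t_1}\mathcal{I}(\tau<t_1)]$ together with the closed forms of Lemma~\ref{BM_Lem}, this produces $MECF(2,S_0,y_0,m_0,M,T)$ as an explicit function of $(y_0,m_0)$. Differentiating in $y_0$ with $Q:=M-m_0$ held fixed, the two structural identities $P_<+P_>=1$ and $E_<+E_>:=E[S_{t_1}\mathcal{I}(\tau<t_1)]+E[S_{t_1}\mathcal{I}(\tau>t_1)]=S_0+\mu t_1$ make every $S_0$‑term cancel and force $\partial_{y_0}E_<=-\partial_{y_0}E_>$. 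After these cancellations the derivative collapses to
\[
\frac{\partial MECF}{\partial y_0}=Q\rho\Big[P_<+(y_0+r)\,\partial_{y_0}P_<-W\Big]+\big[G_1(Q(1-\rho),t_1)-G_1(Q,t_1)\big]\,\partial_{y_0}P_<,
\]
where $W$ denotes the $S_0$‑free part of $\partial_{y_0}E_<$. The point of this collapse is that the first bracket contains no liquidity cost (limit‑order revenue carries no $\beta$ term and the carried‑over inventory's impact is deferred into $G_1$), so it coincides with the single‑period derivative (\ref{ECF_rho_der_y.1})--(\ref{ECF_rho_der_y.3}) with the term $\beta\gamma$ set to zero.

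The sign analysis then proceeds in two regimes. A direct computation gives $\partial_{y_0}P_<=-\tfrac{2}{\sigma\sqrt{t_1}}\phi(\beta_{t_1})+\tfrac{2\mu}{\sigma^2}\tilde N(-\alpha_{t_1})<0$ for $\mu<0$ (raising the limit price lowers the execution probability, an essential ingredient). In the liquidity‑free regime $K\ge M$ the function $G_1(\cdot,t_1)$ is linear in its inventory argument, so $G_1(Q(1-\rho),t_1)-G_1(Q,t_1)=-Q\rho\,c$ for a constant $c\le d+r$; the reallocation term then merges into the $Q\rho$‑factor, and the whole derivative reduces to $Q\rho$ times a single‑period expression of the form (\ref{ECF_rho_der_y.1})--(\ref{ECF_rho_der_y.3}) with $\gamma=0$, which is $\le 0$ by exactly the computation used to prove Part~1 of Theorem~\ref{thm:single:muNeg} negative.

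The main obstacle is the liquidity‑cost regime $K<M$. There $G_1(\cdot,t_1)$ is only piecewise‑quadratic, through the $(\,\cdot\,)^-$ terms and the $m^*$ case split of Theorem~\ref{thm:single:muNeg}, so the reallocation term $\big[G_1(Q(1-\rho),t_1)-G_1(Q,t_1)\big]\partial_{y_0}P_<$ no longer factors through $Q\rho$ and, since $\partial_{y_0}P_<<0$, it can carry the adverse (positive) sign whenever $G_1$ is increasing in inventory. My plan here is to insert the explicit piecewise form of $G_1$ from (\ref{ecf:thm1p2c2}), bound the inventory increment $G_1(Q(1-\rho),t_1)-G_1(Q,t_1)$ by its marginal value, and show that this reallocation gain is dominated by the strictly negative first bracket — i.e.\ that the drift‑driven incentive to execute early (the source of negativity when $\mu<0$) outweighs the benefit of shifting probability mass toward the higher‑inventory continuation branch, uniformly in $y_0\ge d$, in $m_0\in[0,M]$, and across the finitely many $K$‑cases. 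Verifying this domination uniformly is the technical heart of the argument; once it holds, $\partial_{y_0}MECF\le 0$ on $[d,\infty)$ for every $m_0$, and therefore $y_0^{*2}=d$.
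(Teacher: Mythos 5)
Your overall reduction is the same as the paper's: write the optimized one-period continuation value as (inventory)$\times$(entry price) plus an $S$-free remainder --- your $G_1(M',T/2)$ is exactly the paper's $\max_{0\le m\le M'}F(M',m)$ with $F(M,m):=ECF(y=d,m,M)-MS_0$ --- substitute it into the two-branch recursion, differentiate in $y_0$, and observe that everything collapses to a single-period-type derivative plus the reallocation term $\bigl[G_1(Q(1-\rho))-G_1(Q)\bigr]\,\partial_{y_0}P_<$. Up to regrouping, your displayed derivative is the paper's three-line expression, and your treatment of the regime $K\ge M$ (where $G_1$ is linear in inventory) is consistent with it.

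The gap is that you stop exactly where the paper's proof actually begins. The entire content of the argument when $K<M$ is the inequality $\max_{0\le m\le M_B}F(M_B,m)-\max_{0\le m\le M_A}F(M_A,m)<\rho(d+r)(M-m_0)$, with $M_B=M-m_0$ and $M_A=M_B(1-\rho)$ --- precisely the ``domination'' you defer as ``the technical heart.'' Your stated plan, to ``bound the inventory increment by its marginal value,'' is not workable as written: $G_1(\cdot,t_1)$ is a maximum of a piecewise-quadratic family whose maximizer jumps between the seven candidates of Theorem~\ref{thm:single:muNeg}, so its marginal behaviour in the inventory argument is not readily controlled, and a uniform estimate over $y_0$, $m_0$ and the $K$-cases is exactly what you would still owe. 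The paper sidesteps differentiating $G_1$ entirely by a feasible-point comparison: letting $m^b$ maximize $F(M_B,\cdot)$, it lower-bounds $\max_m F(M_A,m)$ by $F(M_A,m^b)$ when $m^b\le M_A$ and by $F(M_A,M_A)$ when $m^b>M_A$; in either case the difference telescopes, and after discarding the nonpositive $\beta(\cdot)^-$ increments and the $\mu$-terms (negative since $\mu<0$) it is bounded by $\rho^2M_B(d+r)\le\rho(M-m_0)(d+r)$. With that bound, the brackets multiplying $\partial_{y_0}P_<$-type factors in the derivative become negative pointwise, so no uniform domination argument is needed and the conclusion $y_0^{*2}=d$ follows at once. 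Without this step (or an equivalent substitute), your proof is incomplete at its decisive point.
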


\begin{proof}
For $n=1$, recall that

\begin{gather}\nonumber
	\begin{aligned}
	ECF(y=d,m,M)=&M (S_0+\rho(d+r)-(1-\rho)(d+f-\mu t))-m\epsilon_3 +m\beta \min (K-m,0)\\
	&+\beta (M-m)(1-\rho) \min(K-(M-m)(1-\rho),0),
\end{aligned}
\end{gather}

Let us denote 

$$ 
F(M,m):=ECF(y=d,m,M)-M S_0 .
$$

where $\epsilon_3:=\rho(2d+r+f)+\mu t(1-\rho)$.

For $n=2$, let $M_A:=(M-m_0)(1-\rho)$ and $M_B:= (M-m_0)$. 
MECF can be written as

\begin{gather}
	\begin{aligned}
MECF &(n=2, S_0,y, m_0,M,T)\\
=&M S_0 +  m_0(-d+\beta \min (K-m_0,0)-f)\\
&+(M-m_0)\rho \left\{ (y+r) \left(N\left(-\beta_{\frac{T}{2}} \right)+ \tilde{N}\left(-\alpha_{\frac{T}{2}}\right) \right)\right\}\\
&+
E[{S_{\frac{T}{2}}\mathcal{I}({\tau<T/2})}](M-m_0)(1-\rho) + \max_{0\leq m\leq M_A}F( M_A,m)\left(N\left(-\beta_{\frac{T}{2}} \right)+\tilde{N}\left(-\alpha_{\frac{T}{2}}\right) \right)\\
&+E[{S_{\frac{T}{2}}\mathcal{I}({\tau>T/2})}](M-m_0) + \max_{0\leq m\leq M_B}F( M_B,m)\left(N\left(\beta_{\frac{T}{2}} \right)-\tilde{N}\left(-\alpha_{\frac{T}{2}}\right) \right).\\
\end{aligned}\label{multi:rho1:n}
\end{gather}

By taking derivative with respect to $y$,

\begin{align*}
	\frac{\partial }{\partial y}& MECF (n=2, S_0,y, m_0,M,T) \\ 
	=&\rho (M-m_0) \left(N\left(-\beta_{\frac{T}{2}} \right)-\tilde{N}\left(-\alpha_{\frac{T}{2}}\right) \right)\\
	&+\frac{2e^{\frac{2(y-d)\mu}{\sigma^2}}}{\sigma\sqrt{{\frac{T}{2}}}}\left( \phi (\alpha_{\frac{T}{2}} ) -\alpha_{\frac{T}{2}} N(-\alpha_{\frac{T}{2}})\right)\left(-\rho(r+d-\mu{\frac{T}{2}})(M-m_0)+ \max_{0\leq m\leq M_B}F( M_B,m)- \max_{0\leq m\leq M_A}F( M_A,m)\right)\\
	&+\frac{2(y-d)\tilde{N}\left(-\alpha_{\frac{T}{n}}\right)}{\sigma^2 {\frac{T}{2}}}\left(-\rho(r+d)(M-m_0)+ \max_{0\leq m\leq M_B}F( M_B,m)- \max_{0\leq m\leq M_A}F( M_A,m)\right)
\end{align*}

Since first line is negative, if we show that $\left(\max_{0\leq m\leq M_B}F( M_B,m)- \max_{0\leq m\leq M_A}F( M_A,m)\right)<\rho(r+d)(M-m_0)$, we can show that the entire expression is negative. The proof follows:

\begin{gather}
	\begin{aligned}
			&\max_{0\leq m\leq M_B}F( M_B,m)- \max_{0\leq m\leq M_A}F( M_A,m)\\
	=& \rho M_B (\rho (d+r)-(1-\rho) (d+f-\mu {\frac{T}{2}}))\\
	&+ \max_{0\leq m\leq M_B}\left[
	-m\epsilon_3 +m\beta \min (K-m,0)+\beta (M_B-m)(1-\rho) \min(K-(M_B-m)(1-\rho),0)
	\right]
	\\
	&- \max_{0\leq m\leq M_A}\left[
	-m\epsilon_3 +m\beta \min (K-m,0)+\beta (M_A-m)(1-\rho) \min(K-(M_A-m)(1-\rho),0)
	\right]
	\end{aligned}\label{thm5:proof:eq1}
\end{gather}

Let's denote $m^{b}:=\arg\max F( M_B,m)$. Let's divide into two cases:  $m^{b}\leq M_A=M_B(1-\rho)$ and $ M_A=M_B(1-\rho)<m^{b}<M_B$.

\noindent\textbf{Case 1,  $m^{b}\leq M_A=M_B(1-\rho)$:}
In this case, $\max_{0\leq m\leq M_A}F( M_A,m)\geq F(M_A,m=m^b)$, so
$\max_{0\leq m\leq M_B}F( M_B,m)- \max_{0\leq m\leq M_A}F( M_A,m) \leq 
F(M_B,m=m^b)-F(M_A,m=m^b)$, therefore


\begin{align*}
	&(\ref{thm5:proof:eq1})\leq \rho M_B (\rho (d+r)-(1-\rho) (d+f-\mu {\frac{T}{2}}))\\
	&+ \left[
	-m^{b}\epsilon_3 +m^{b}\beta \min (K-m^{b},0)+\beta (M_B-m^{b})(1-\rho) \min(K-(M_B-m^{b})(1-\rho),0)
	\right]
	\\
	&+ \left[
	-m^{b}\epsilon_3 +m^{b}\beta \min (K-m^{b},0)+\beta (M_A-m^{b})(1-\rho) \min(K-(M_A-m^{b})(1-\rho),0)
	\right]\\
&= \rho^2 M_B(d+r) - \rho (1-\rho) (d+f-\mu {\frac{T}{2}}) M_B\\
&+\beta (M_B-m^{b})(1-\rho) \min(K-(M_B-m^{b})(1-\rho),0)-\beta (M_A-m^{b})(1-\rho) \min(K-(M_A-m^{b})(1-\rho),0)
\\&< \rho^2 M_B(d+r)\leq (M-m_0)\rho(d+r).
\end{align*}
 
%
%

\noindent\textbf{Case 2, $ M_A=M_B(1-\rho)<m^{b}<M_B$:}

In this case, $\max_{0\leq m\leq M_A}F( M_A,m)\geq F(M_A,m=M_A)$, so
$\max_{0\leq m\leq M_B}F( M_B,m)- \max_{0\leq m\leq M_A}F( M_A,m) \leq 
F(M_B,m=m^b)-F(M_A,m=m_A)$, therefore
\begin{align*}
	&(\ref{thm5:proof:eq1})\leq \rho M_B (\rho (d+r)-(1-\rho) (d+f-\mu{\frac{T}{2}}))\\
	&+ \left[
	-m^{b}\epsilon_3 +m^{b}\beta \min (K-m^{b},0)+\beta (M_B-m^{b})(1-\rho) \min(K-(M_B-m^{b})(1-\rho),0)
	\right]
	\\
	&- \left[
	-M_A\epsilon_3 +M_A\beta \min (K-M_A,0)
	\right]\\
&= \rho^2 M_B(d+r) - \rho (1-\rho) (d+f-\mu {\frac{T}{2}}) M_B\\
&+ \bigg[
	-m^{b}\left( \rho(2d+r+f) + \mu {\frac{T}{2}} (1-\rho) -\beta \min (K-m^{b},0) \right)\\
	 &\quad +\beta (M_B-m^{b})(1-\rho) \min(K-(M_B-m^{b})(1-\rho),0)
	\bigg]
	\\
	&- \left[-M_A\rho(2d+r+f) -M_A \mu {\frac{T}{2}} (1-\rho) +M_A\beta \min (K-M_A,0)
	\right]\\
	&<\rho^2 M_B(d+r)+\rho(-1\rho)\mu {\frac{T}{2}} M_B -m^{b} \mu t (1-\rho) +M_A\mu {\frac{T}{2}}(1-\rho)<\rho^2 M_B(d+r)\leq \rho (M-m_0)(d+r). 
\end{align*}

So we proved that  $\left(\max_{0\leq m\leq M_B}F( M_B,m)- \max_{0\leq m\leq M_A}F( M_A,m)\right)<\rho(r+d)(M-m_0)$, which implies  $\frac{\partial }{\partial y} MECF (n, S_0,y, m_0,M,T)<0$. Therefore,  $y^*=d$.

\end{proof}

\begin{remk}
While we've only proved for the case for two-period case, $y_0^{*n}=d$ appears to be true for all $n$. 
\end{remk}

Now, let us study what happens to $m_{0}^{*n}$. Let us show the behavior of $m_0^{*2}$. 

\begin{thm}
	For $m_0^{*n}$ defined as in Proposition~\ref{multi:thm:rho},  $m_0^{*2}\leq m_0^{*1}$  when $\mu<0$.
	For $m_0^{*2}$,

\begin{enumerate}
	\item If $\rho=1$, $m_0^{*2}=0$.
	\item If $\rho<1$ and $K\geq M^{`}$,
		\begin{itemize}
			\item $m_0^{*2}=0$ when $\rho(3d+2r+f)+\mu t(1-\rho)>0$
			\item $m_0^{*2}=M^{`}$ when $\rho(3d+2r+f)+\mu t(1-\rho)<0$
		\end{itemize}
	\item If $\rho<1$ and $K< M^{`}$,	
	\item[] $m_0^{*2}$ is one of $\{0,K, M^{`}-K/(1-\rho), M^{`}, m^{*2}_1, m^{*2}_2, m^{*3}_3\}$ where  
\end{enumerate}

$$ 
m^{*2}_1=K/2-\epsilon_3^{`}/2\beta, \quad
m^{*2}_2=M^{`}-\frac{K}{2(1-\rho)}-\frac{\epsilon_3^{`}}{2(1-\rho)^2\beta},
\quad 
m^{*2}_3=\frac{2M^{`}\beta (1-\rho)^2 +\beta\rho K-\epsilon_3^{`}}{2\beta(1+(1-\rho)^2)},
$$ 

where
 $\epsilon_3^{`}:=\rho(3d+2r+f)+(1-\rho)\mu T$ and $M^{`}:=M-m_1$, 

$m_1:=\arg\max_{0\leq m\leq M} ECF(n=1,S_0,y=d,m, M, T/2)$, which can be computed using Theorem~\ref{thm:single:muNeg}.

\end{thm}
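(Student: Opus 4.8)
The plan is to build on Theorem~\ref{Multi:rho:optimaly}, which for $\mu<0$ already fixes $y_0^{*2}=d$, so that the claim reduces to the one–dimensional maximization of $m_0\mapsto MECF(2,S_0,d,m_0,M,T)$ on $[0,M]$. First I would substitute $y=d$ into the $n=2$ expression of Proposition~\ref{multi:thm:rho}. At $y=d$ one has $a=y-d=0$, hence $\alpha_{T/2}=-\beta_{T/2}=\mu\sqrt{T/2}/\sigma$ and $\tilde N=N$, so the first–period probabilities collapse to $P(\tau<T/2)=N(-\beta_{T/2})+\tilde N(-\alpha_{T/2})=1$ and $P(\tau>T/2)=0$: a limit order placed at the best ask sits at the top of the book immediately. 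Consequently the ``$\tau>T/2$'' branch and its continuation term drop out, the second period deterministically inherits inventory $M_A:=(M-m_0)(1-\rho)$ at the expected price $E[S_{T/2}]=S_0+\mu T/2$, and the objective becomes
\begin{equation*}
MECF(2,S_0,d,m_0,M,T)=m_0\big(S_0-d+\beta\min(K-m_0,0)-f\big)+(M-m_0)\rho(S_0+d+r)+V(M_A),
\end{equation*}
where $V(M_A):=\max_{0\le m\le M_A}ECF(1,S_0+\mu T/2,d,m,M_A,T/2)$ is exactly the period–two single–step value (the price–free residual $F$ of the proof of Theorem~\ref{Multi:rho:optimaly} carries the $m_0$–independent price bookkeeping, so that the $S_0$ contributions recombine into $MS_0$).

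Next I would make $V(M_A)$ explicit via Theorem~\ref{thm:single:muNeg}. Since period two is a single–period problem with horizon $T/2$, inventory $M_A$, and $\mu<0$, that theorem returns $y=d$ again and an optimal market order $m^{(2)}(M_A)$ equal to $0$, to $M_A$, or to an explicit interior point according to the sign of the associated constant and to whether $K\gtrless M_A$. Substituting the resulting $V(M_A)$ and using $M_A=(M-m_0)(1-\rho)$ turns the objective into a function of $m_0$ of the same algebraic type as \eqref{ecf:thm1p2c2}: a constant, a term affine in $m_0$ with slope $-\epsilon_3'$ (the extra $d+r$ over the one–period $\epsilon_3$ being the rebate and half–spread earned when the period–two leftover is again worked as a limit order), and the two liquidity blocks $\beta\min(K-m_0,0)$ and $\beta(1-\rho)\min(K-(M-m_0)(1-\rho),0)$. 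Maximizing is then the identical four–region case split of Theorem~\ref{thm:single:muNeg}, the regions being cut out by the signs of $K-m_0$ and $K-(M-m_0)(1-\rho)$; on each region the function is linear or concave–quadratic and its maximizer is a region boundary or an explicit vertex, producing the list $\{0,K,M'-K/(1-\rho),M',m^{*2}_1,m^{*2}_2,m^{*2}_3\}$ with the effective inventory $M'$ and constant $\epsilon_3'$ as stated. The $\rho=1$ case is the clean check: then $M_A\equiv0$, so $V\equiv0$ and the objective reduces to $MS_0+M(d+r)-m_0(2d+r+f)+m_0\beta\min(K-m_0,0)$, which is decreasing on $[0,M]$, giving $m_0^{*2}=0$.

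For the comparison $m_0^{*2}\le m_0^{*1}$ I would use monotone comparative statics rather than the closed forms. Write $g_1(m_0)=ECF(1,S_0,d,m_0,M,T)$ and $g_2(m_0)=MECF(2,S_0,d,m_0,M,T)$; both share identical first–period cash flows (market order $m_0$ at $S_0-d$ and the immediately executed limit tranche $(M-m_0)\rho$ at $S_0+d+r$) and differ only in the treatment of the leftover $M_A=(M-m_0)(1-\rho)$, swept by one market order at $T$ in $g_1$ versus given the extra period in $g_2$. Hence $g_2(m_0)-g_1(m_0)=V(M_A)-W(M_A)$, where $W(M_A)$ is the value of liquidating $M_A$ by a single market order at $T$; because the period–two optimum $V$ dominates the feasible policy ``never execute the limit order, then market–order at $T$'', one has $V-W\ge0$, and this option value is non–decreasing in the inventory $M_A$ to which it applies. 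As $M_A=(M-m_0)(1-\rho)$ is decreasing in $m_0$, the difference $g_2-g_1$ is non–increasing in $m_0$, so by the standard single–crossing argument $\arg\max g_2\le\arg\max g_1$, i.e.\ $m_0^{*2}\le m_0^{*1}$.

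I expect the main obstacle to be the nested piecewise structure: $V(M_A)$ is itself an optimum whose form switches across $K\gtrless M_A$ and across the sign of the period–two constant, while $M_A=(M-m_0)(1-\rho)$ is precisely the quantity being chosen in the outer problem. Keeping the inner continuation regime consistent with the outer region — certifying each candidate $m^{*2}_i$ only where it is feasible and where the assumed form of $V$ genuinely holds — and verifying that the aggregated parameters reduce exactly to $M'=M-m_1$ and $\epsilon_3'$ is where the bookkeeping is heaviest; everything else is the linear/quadratic optimization already carried out in Theorem~\ref{thm:single:muNeg}.
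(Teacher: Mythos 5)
Your proposal follows the paper's own route for the characterization of $m_0^{*2}$: both fix $y=d$ via Theorem~\ref{Multi:rho:optimaly}, use that at $y=d$ the first-period execution probability equals one so the $\tau>T/2$ branch and its continuation drop out, and reduce the outer problem to a single-period-type objective (affine term with slope $-\epsilon_3'$ plus the two liquidity blocks), to which the four-region split of Theorem~\ref{thm:single:muNeg} is applied with the aggregated parameters $M'=M-m_1$ and $\epsilon_3'=\epsilon_3+\rho(d+r)$. Where you genuinely diverge is the inequality $m_0^{*2}\le m_0^{*1}$: the paper reads it off the closed forms (since $\epsilon_3'>\epsilon_3$ and $M'<M$, each candidate $m^{*2}_i$ is dominated by the corresponding $m^{*}_i$), whereas you argue by monotone comparative statics, writing $g_2-g_1=V(M_A)-W(M_A)$ with $M_A=(M-m_0)(1-\rho)$ and invoking single crossing. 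Your route is structurally cleaner, since it compares the maximizers directly without tracking which candidate wins in which regime, but it hinges on the one step you assert without proof: that $V(M_A)-W(M_A)$ is non-decreasing in $M_A$. This is true but not free. By the envelope theorem the inner maximum contributes a negative derivative $\beta(1-\rho)\bigl(K-2(M_A-m^{*})(1-\rho)\bigr)$ whenever $(M_A-m^{*})(1-\rho)>K$, and one must check that this is outweighed by the positive terms $\rho(2d+r+f-\mu T/2)$ and $\beta(2M_A-K)$ in that regime; the bound $(M_A-m^{*})(1-\rho)^2\le M_A(1-\rho)$ closes the gap and yields a total derivative of at least $\rho(2d+r+f-\mu T/2)+\beta\rho(2M_A-K)>0$, but this computation should be written out. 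In fairness, the paper's own comparison is equally informal (it compares candidate lists rather than realized maximizers), and both your argument and the paper's leave the regime-consistency bookkeeping for the nested maximization — which inner case of $V(M_A)$ applies on which outer region of $m_0$ — at the level you correctly flagged as the heaviest part.
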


\begin{proof}
Note that the MECF for $n=2$ can be written as following, where the notations are borrowed from the proof of Theorem~\ref{Multi:rho1:optimalym}.
\begin{gather}
	\begin{aligned}
MECF &(n=2, S_0,y=d, m_0,M,T)\\
=&M S_0 +  m_0(-d+\beta \min (K-m_0,0)-f)
+(M-m_0)\rho \left\{ (d+r)\right\}\\
+&\mu \frac{T}{2}(M-m_0)(1-\rho) + (M-m_0)(\rho(d+r)-(1-\rho)(d+f-\mu T/2))\\
 + &\max_{0\leq {m_1}\leq M_A}[-{m_1}\epsilon_3 +{m_1}\beta \min (K-m,0)
	+\beta (M_A-{m_1})(1-\rho) \min(K-(M_A-{m_1})(1-\rho),0)],
\end{aligned}
\end{gather}

$M_A=(M-m_0)(1-\rho)$.

$m^{}$ which minimizes $MECF (n=2, S_0,y=d, m,M,T)$ actually minimizes the following with respect to $m_0$:

\begin{align}\nonumber
&-m_0 \beta min(K-m_0,0) -m_0 \epsilon_3^{`} + \beta (M^{`}-m_0 )(1-\rho) min (K- (M^{'}-m_0  )(1-\rho),0), \nonumber
\end{align}

while the solution for $n=1$ minimizes 
$$
-m \beta min(K-m,0) -m \epsilon_3 + \beta (M-m)(1-\rho) min (K- (M-m)(1-\rho),0),
$$
with respect to m. Note that $\epsilon_3:=\rho(2d+r+f)+\mu t(1-\rho)$, $\epsilon_3^{`}:=\rho(3d+2r+f)+(1-\rho)\mu T$ and $M^{`}:=M-m_1$.

Using the proof of Theorem~\ref{thm:single:muNeg}. we have

\begin{enumerate}
	\item If $\rho=1$, $m_0^{*2}=0$.
	\item If $\rho<1$ and $K\geq M^{`}$,
		\begin{itemize}
			\item $m_0^{*2}=0$ when $\rho(3d+2r+f)+\mu t(1-\rho)>0$
			\item $m_0^{*2}=M^{`}$ when $\rho(3d+2r+f)+\mu t(1-\rho)<0$
		\end{itemize}
	\item If $\rho<1$ and $K< M^{`}$,	
	\item[] $m_0^{*2}$ is one of $\{0,K, M^{`}-K/(1-\rho), M, m^{*2}_1, m^{*2}_2, m^{*3}_3\}$ where  
\end{enumerate}

$$ 
m^{*2}_1=K/2-\epsilon_3^{`}/2\beta, \quad
m^{*2}_2=M^{`}-\frac{K}{2(1-\rho)}-\frac{\epsilon_3^{`}}{2(1-\rho)^2\beta},
\quad 
m^{*2}_3=\frac{2M^{`}\beta (1-\rho)^2 +\beta\rho K-\epsilon_3^{`}}{2\beta(1+(1-\rho)^2)}.
$$ 

\begin{remk}
Note that $\epsilon_3<\epsilon_3^{`}$, and $M^{`}<M$, which makes $m^{*2}_1\leq m_1^*$,  $m^{*2}_2 \leq m_2^*$,  $m^{*2}_3\leq m_3^*$. From this inequality, it is observed that $m^{*2}_0$, the optimal Market Order quantity at time $0$ for the two-step case is less than (or equal to ) $m^*_0$, the optimal Market Order quantity at time $0$ for the one-step case. 

\end{remk}

\end{proof}

\begin{figure}
	\includegraphics[width=0.495\textwidth]{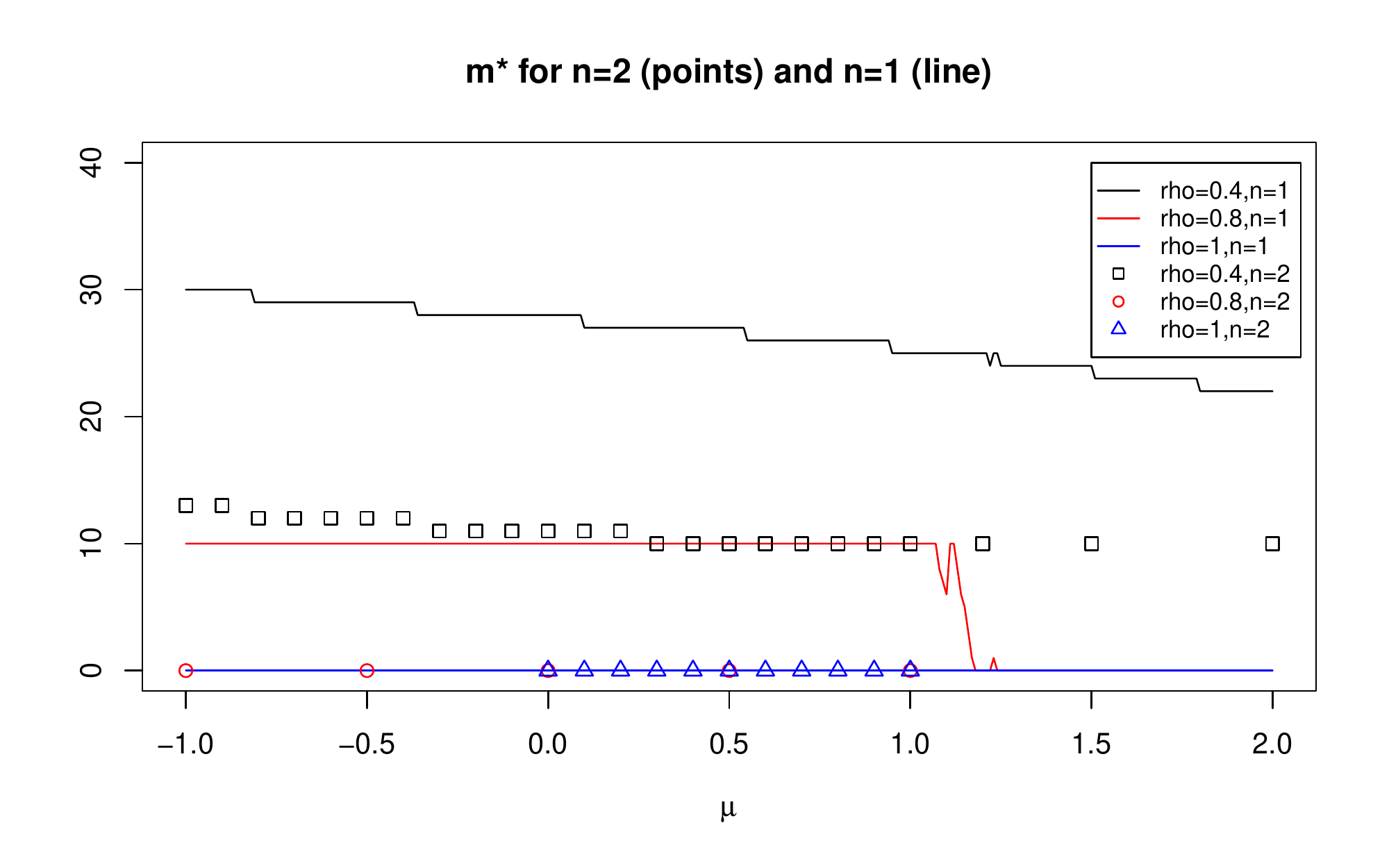}
	\includegraphics[width=0.495\textwidth]{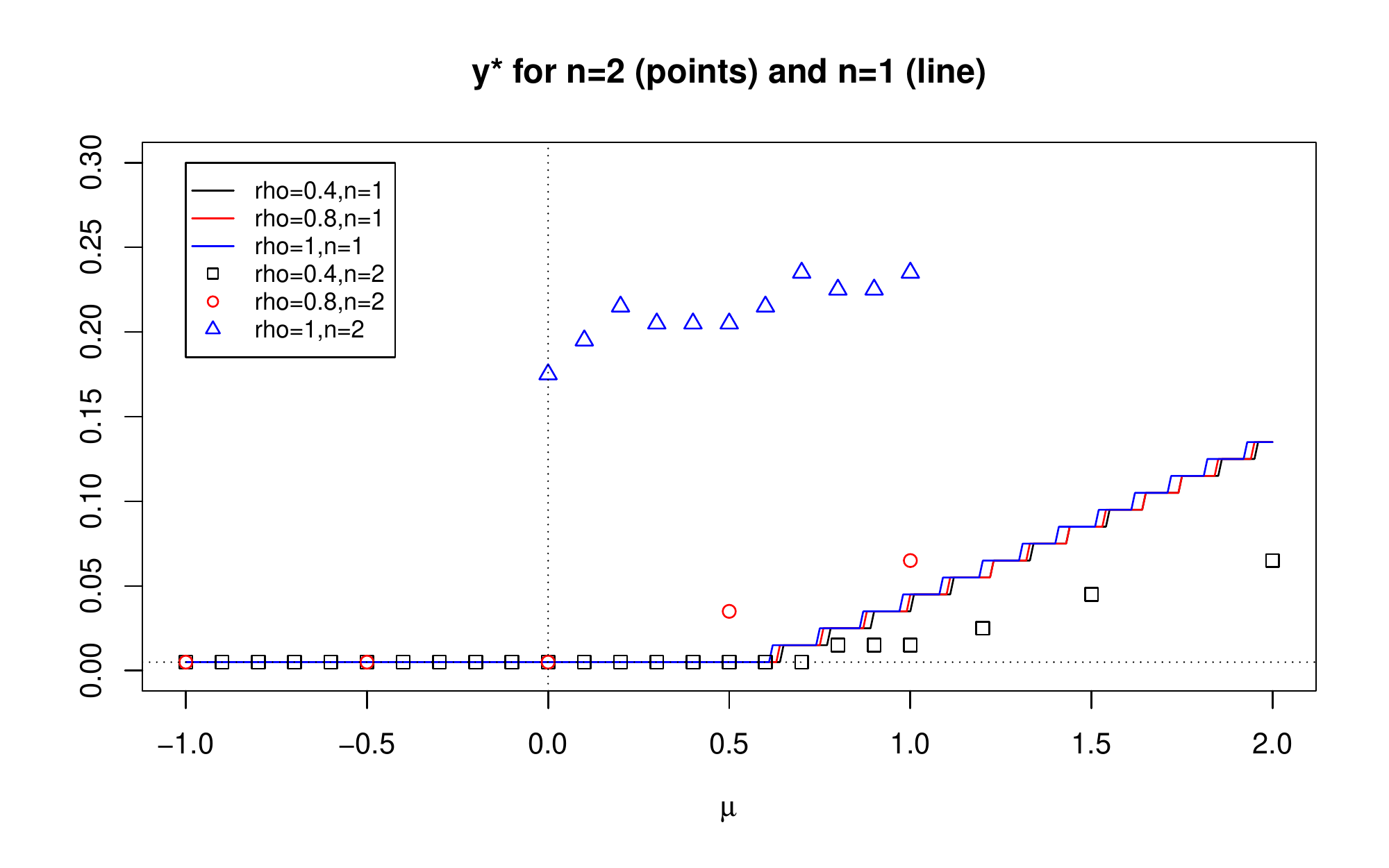}
	\caption{$m^*$ (left) and $y^*$ (right) for multi-period (dots) and single period (lines).  $M=100, r=0.003, f=0.003, \epsilon=0.01, d=\epsilon/2=0.005, T=0.1, S_0=100, \sigma=0.1$. 
	}
	\label{multigraph}
\end{figure}

{

Behaviors of $y_0^{*1}$ (line) and $y_0^{*2}$ (points) for different $\rho$ values (black: $\rho=0.4$, red: $\rho=0.8$, blue: $\rho=1$ are described in  the right panel of Figure~\ref{multigraph}. We observe that when $\mu<0$, $y_0^{*2}=d$ as described in Theorem~\ref{Multi:rho:optimaly}. 
 For $\mu>0$, it appears that it still increases in a linear (step-wise linear) way, but its slope against $\rho$ is more sensitive to the value of $\rho$ for $n=2$. More specifically, when $\rho=0.4$, $y^{*2}<y^{*1}$, but when $\rho=0.8, 1$, $y^{*2}>y^{*1}$. This is explained by the chance of the $\max ECF$ for the single period. 

For the single period, when limit orders are not executed, the investor use a market order at time $T$. Then, the expected stock price does not change with respect to $\rho$, and the change of $y^*$ w.r.t. $\rho$ in the single period is minimal. 

However,  the $\max ECF$ (for the single period) changes when $\rho$ changes. 
When $n=2$, if the limit order placed at $t=0$ is not executed, then the remaining orders are placed following the algorithm to maximize the ECF of the single period, and that maximum changes with respect to $\rho$. Therefore,  $y^{*2}$ changes more sensitively to the change of $\rho$ than $y^*$ . We observe that when the number of time step increases, the slope change will more various.

The left panel of Figure~\ref{multigraph} shows behaviors of $m_0^{*1}$ and $m_0^{*2}$ against $\mu$ for different $\rho$s. 
It appears that $m_0^{*2}\leq m_0^{*1}$, which implies that the optimal quantity for the market order at time $0$ decreases when $n$, number of time step, increases. This is because as n increases, it is better to divide the market order into multiple chunks to reduce the negative impact from the size of the market order. Existence of $K$ suggests that placing a  smaller amount of a market order twice is better than placing a big amount of at once even though the total MO amount is same.

}

\section{Conclusion}
\par Throughout this paper, we have obtained explicit solutions and approximations for the optimal execution problem under the liquidity cost. First, we have derived the explicit solution for the optimal size of the market order, $m^*$, for the single order. In addition, the approximation of the optimal price level for the limit order, $y^*$, has been calculated. The behavior of $y^*$ and $m^*$ have been investigated under various market conditions.  Finally, we have extended this problem into multiple-period problem where investors can change their decision at multiple time steps. We have shown the change of the optimal behavior for different number of time steps.

\par {\Red
To summarize, we have investigated the  behavior of optimal ($y^*$, $m^*$) against important LOB features and other parameters for price movement in single and multiple period models. Instead of modeling the whole LOB movement, we have taken essential features of LOB to simplify the optimization without losing the important traits of LOB. 
}

\par There are other important features in the Limit Order Book closely related to this paper which have not been studied yet. For instance, the consideration of movement of bid-ask spread, the estimation and the update of parameters, the effect from correlated assets, the realistic diffusive models with jumps, are all worth further development. The models in this study could enlighten such developments.

\bibliography{thebib_LiqRisk_19}

\begin{thebibliography}{24}
\providecommand{\natexlab}[1]{#1}
\providecommand{\url}[1]{\texttt{#1}}
\expandafter\ifx\csname urlstyle\endcsname\relax
  \providecommand{\doi}[1]{doi: #1}\else
  \providecommand{\doi}{doi: \begingroup \urlstyle{rm}\Url}\fi

\bibitem[Alfonsi et~al.(2010)Alfonsi, Fruth, and Schied]{alfonsi2010}
A.~Alfonsi, A.~Fruth, and A.~Schied.
\newblock Optimal execution strategies in limit order books with general shape
  functions.
\newblock \emph{Quantitative Finance}, 10\penalty0 (2):\penalty0 143--157,
  2010.

\bibitem[Alfonsi et~al.(2012)Alfonsi, Schied, and Slynko]{alfonsi2012order}
A.~Alfonsi, A.~Schied, and A.~Slynko.
\newblock Order book resilience, price manipulation, and the positive portfolio
  problem.
\newblock \emph{SIAM Journal on Financial Mathematics}, 3\penalty0
  (1):\penalty0 511--533, 2012.

\bibitem[Almgren and Chriss(2001)]{almgren2001optimal}
R.~Almgren and N.~Chriss.
\newblock Optimal execution of portfolio transactions.
\newblock \emph{Journal of Risk}, 3:\penalty0 5--40, 2001.

\bibitem[Bank and Baum(2004)]{bank2004hedging}
P.~Bank and D.~Baum.
\newblock Hedging and portfolio optimization in financial markets with a large
  trader.
\newblock \emph{Mathematical Finance: An International Journal of Mathematics,
  Statistics and Financial Economics}, 14\penalty0 (1):\penalty0 1--18, 2004.

\bibitem[Bertsimas and Lo(1998)]{bertsimas1998optimal}
D.~Bertsimas and A.~W. Lo.
\newblock Optimal control of execution costs.
\newblock \emph{Journal of Financial Markets}, 1\penalty0 (1):\penalty0 1--50,
  1998.

\bibitem[Blais and Protter(2010)]{BP:2010}
M.~Blais and P.~Protter.
\newblock An analysis of the supply curve for liquidity risk through book data.
\newblock \emph{International Journal of Theoretical and Applied Finance},
  13\penalty0 (06):\penalty0 821--838, 2010.

\bibitem[Cartea et~al.(2014)Cartea, Jaimungal, and Ricci]{cartea2014modelling}
A.~Cartea, S.~Jaimungal, and J.~Ricci.
\newblock Buy low, sell high: a high frequency trading perspective.
\newblock \emph{SIAM J. Financ. Math.}, 5\penalty0 (1):\penalty0 415--444,
  2014.

\bibitem[\c{C}etin et~al.(2010)\c{C}etin, Jarrow, and Protter]{CJP:2003}
U.~\c{C}etin, R.~A. Jarrow, and P.~Protter.
\newblock Liquidity risk and arbitrage pricing theory.
\newblock In \emph{Handbook of Quantitative Finance and Risk Management}, pages
  1007--1024. Springer, 2010.

\bibitem[Cont and Kukanov(2013)]{ContKukanov2013}
R.~Cont and A.~Kukanov.
\newblock Optimal order placement in limit order markets.
\newblock \emph{Available at ssrn 2155218}, 2013.

\bibitem[Donier(2012)]{donier2012market}
J.~Donier.
\newblock Market impact with autocorrelated order flow under perfect
  competition.
\newblock \emph{Available at SSRN 2191660}, 2012.

\bibitem[Eisler et~al.(2012)Eisler, Bouchaud, and Kockelkoren]{eisler2012price}
Z.~Eisler, J.-P. Bouchaud, and J.~Kockelkoren.
\newblock The price impact of order book events: market orders, limit orders
  and cancellations.
\newblock \emph{Quantitative Finance}, 12\penalty0 (9):\penalty0 1395--1419,
  2012.

\bibitem[Figueroa-L{\'o}pez et~al.(2018)Figueroa-L{\'o}pez, Lee, and
  Pasupathy]{FLP:2018}
J.~E. Figueroa-L{\'o}pez, H.~Lee, and R.~Pasupathy.
\newblock Optimal placement of a small order in a diffusive limit order book.
\newblock \emph{High Frequency}, 1\penalty0 (2):\penalty0 87--116, 2018.

\bibitem[Frey and Patie(2002)]{frey2002risk}
R.~Frey and P.~Patie.
\newblock Risk management for derivatives in illiquid markets: A simulation
  study.
\newblock In \emph{Advances in finance and stochastics}, pages 137--159.
  Springer, 2002.

\bibitem[Gatheral(2010)]{gatheral2010no}
J.~Gatheral.
\newblock No-dynamic-arbitrage and market impact.
\newblock \emph{Quantitative finance}, 10\penalty0 (7):\penalty0 749--759,
  2010.

\bibitem[Gatheral et~al.(2012)Gatheral, Schied, and
  Slynko]{gatheral2012transient}
J.~Gatheral, A.~Schied, and A.~Slynko.
\newblock Transient linear price impact and fredholm integral equations.
\newblock \emph{Mathematical Finance: An International Journal of Mathematics,
  Statistics and Financial Economics}, 22\penalty0 (3):\penalty0 445--474,
  2012.

\bibitem[Gu{\'e}ant(2015)]{gueant2015optimal}
O.~Gu{\'e}ant.
\newblock Optimal execution and block trade pricing: a general framework.
\newblock \emph{Applied Mathematical Finance}, 22\penalty0 (4):\penalty0
  336--365, 2015.

\bibitem[Guilbaud and Pham(2013)]{GuilbaudPham2013}
F.~Guilbaud and H.~Pham.
\newblock Optimal high-frequency trading with limit and market orders.
\newblock \emph{Quantitative Finance}, 13\penalty0 (1):\penalty0 79--94, 2013.

\bibitem[Guo et~al.(2016)Guo, De~Larrard, and Ruan]{GDR:2017}
X.~Guo, A.~De~Larrard, and Z.~Ruan.
\newblock Optimal placement in a limit order book.
\newblock \emph{Mathematics and Financial Economics}, 2016.

\bibitem[Jacquier and Liu(2017)]{jacquierLiu2017}
A.~Jacquier and H.~Liu.
\newblock Optimal liquidation in a level-i limit order book for large-tick
  stocks.
\newblock \emph{Preprint. Available at arXiv:1701.01327 [q-fin.TR]}, 2017.

\bibitem[Jeanblanc et~al.(2009)Jeanblanc, Yor, and
  Chesney]{jeanblanc2009mathematical}
M.~Jeanblanc, M.~Yor, and M.~Chesney.
\newblock \emph{Mathematical methods for financial markets}.
\newblock Springer, 2009.

\bibitem[Maglaras et~al.(2015)Maglaras, Moallemi, and Zheng]{Manglaras2015}
C.~Maglaras, C.~Moallemi, and H.~Zheng.
\newblock Optimal execution in a limit order book and an associated
  microstructure market impact model.
\newblock \emph{Preprint available at ssrn 2610808}, 2015.

\bibitem[Obizhaeva and Wang(2013)]{obizhaeva2013optimal}
A.~A. Obizhaeva and J.~Wang.
\newblock Optimal trading strategy and supply/demand dynamics.
\newblock \emph{Journal of Financial Markets}, 16\penalty0 (1):\penalty0 1--32,
  2013.

\bibitem[Potters and Bouchaud(2003)]{potters2003more}
M.~Potters and J.-P. Bouchaud.
\newblock More statistical properties of order books and price impact.
\newblock \emph{Physica A: Statistical Mechanics and its Applications},
  324\penalty0 (1-2):\penalty0 133--140, 2003.

\bibitem[Predoiu et~al.(2011)Predoiu, Shaikhet, and Shreve]{predoiu2011optimal}
S.~Predoiu, G.~Shaikhet, and S.~Shreve.
\newblock Optimal execution in a general one-sided limit-order book.
\newblock \emph{SIAM Journal on Financial Mathematics}, 2\penalty0
  (1):\penalty0 183--212, 2011.

\end{thebibliography}

\end{document}